\tikzset{discont/.style={decoration={zigzag,segment length=12pt, amplitude=4pt},decorate}}
\def\discontarrow(#1)(#2)(#3)(#4);{
  \draw[discont,ultra thick] (#2) -- (#3);
  \draw[ultra thick,-latex] (#1) -- (#2)  (#3) -- (#4);
}
\newcommand{\calA}{\mathcal{A}}
\newcommand{\calG}{\mathcal{G}}
\newcommand{\calP}{\mathcal{P}}
\newcommand{\RR}{\mathbb{R}}
\newcommand{\NN}{\mathbb{N}}
\newcommand{\EE}{\mathbb{E}}
\newcommand{\PP}{\mathbb{P}}
\newcommand{\ie}{i.e.,\xspace}
\newcommand{\eg}{e.g.,\xspace}
\newtheorem{theorem}{Theorem}[section]
\newtheorem{lemma}[theorem]{Lemma}
\newtheorem{corollary}[theorem]{Corollary}
\begin{document}

	\title{Improved Bounds for Single-Nomination\\ Impartial Selection}
	 
	\author{Javier Cembrano
	\thanks{Institut für Mathematik, Technische Universität Berlin, Germany}
	\and Felix Fischer			
	\thanks{School of Mathematical Sciences, Queen Mary University of London, UK}
	\and Max Klimm
	\thanks{Institut für Mathematik, Technische Universität Berlin, Germany}
	}
\date{\vspace{-1em}}
\maketitle

\begin{abstract}
We give new bounds for the single-nomination model of impartial selection, a problem proposed by Holzman and Moulin (Econometrica, 2013). A selection mechanism, which may be randomized, selects one individual from a group of $n$ based on nominations among members of the group; a mechanism is impartial if the selection of an individual is independent of nominations cast by that individual, and $\alpha$-optimal if under any circumstance the expected number of nominations received by the selected individual is at least $\alpha$ times that received by any individual. In a many-nominations model, where individuals may cast an arbitrary number of nominations, the so-called permutation mechanism is $1/2$-optimal, and this is best possible. In the single-nomination model, where each individual casts exactly one nomination, the permutation mechanism does better and prior to this work was known to be $67/108$-optimal but no better than $2/3$-optimal. We show that it is in fact $2/3$-optimal for all $n$. This result is obtained via tight bounds on the performance of the mechanism for graphs with maximum degree $\Delta$, for any $\Delta$, which we prove using an adversarial argument. We then show that the permutation mechanism is not best possible; indeed, by combining the permutation mechanism, another mechanism called plurality with runner-up, and some new ideas, $2105/3147$-optimality can be achieved for all $n$. We finally give new upper bounds on $\alpha$ for any $\alpha$-optimal impartial mechanism. They improve on the existing upper bounds for all $n\geq 7$ and imply that no impartial mechanism can be better than $76/105$-optimal for all $n$; they do not preclude the existence of a $(3/4-\varepsilon)$-optimal impartial mechanism for arbitrary $\varepsilon>0$ if $n$ is large.
\end{abstract}

\section{Introduction}

Group decision-making often involves the selection of one of the members of a group based on nominations among members. An early and somewhat unpleasant example of a procedure of this kind is reported from the period of the Athenian democracy in the 5th century BC, when citizens could vote in so-called \emph{ostracisms} for other citizens to be temporarily expelled from the city. After having collected votes engraved into pottery shards from participating citizens, the city would expel the citizen receiving the highest number of votes, subject to a quota. To this date important decisions such as the selection of representatives, speakers, or chairpersons, or the award of prizes are made in the same manner, with pottery shards replaced by slips of paper or clicks on a website.

The economic efficiency of such selection procedures of course relies crucially on impartiality of the votes. However, in decisions with such far-reaching consequences as the possible expulsion from one's home, a voter's interest would certainly lie primarily in the fact whether they themselves are selected, and would dominate any concern for the exact identity of another voter selected instead. Such an interweaving of honest opinions regarding other voters eligible for selection with personal interest is problematic since it may prevent voters from revealing their opinions impartially and truthfully. Specifically, in Classical Athens, a citizen fearing expulsion may have been tempted to vote for another citizen likely to also receive a high number of votes, in an attempt to decrease their own risk of expulsion.

In order to decouple opinions regarding eligibility for selection from selfish interest, it makes sense to study \emph{impartial} selection mechanisms in which the probability of selecting any voter is independent of that voter's vote~\citep{holzman2013impartial,alon2011sum}. Voters and votes can conveniently be represented by vertices and edges of a directed graph. A selection mechanism then chooses a vertex for any graph, and is impartial if the selection of any particular vertex is independent of the outgoing edges of that vertex. It is easy to see that the most natural mechanism, already used in the ostracisms, of selecting a voter who receives a maximum number of votes is not impartial. For example, in a situation where two voters receive a maximum number of votes and vote for each other, one of them can change the selection by instead voting for someone else. 

The requirement of impartiality is obviously appealing, but it turns out to be very demanding and thus incompatible with other important properties. Specifically, any deterministic impartial selection mechanism must in some situation select an individual who receives no votes at all or \emph{not} select an individual receiving votes from everyone else~\citep{holzman2013impartial}. This result was shown for a single-nomination model, \ie for selection mechanisms on graphs where all outdegrees are equal to one. If we measure the quality of a mechanism in terms of the worst ratio over all scenarios between the number of votes for the selected individual and the maximum number of votes for any individual, it follows that for any deterministic impartial mechanism this ratio is at most $1/(n-1)$, where $n$ is the number of individuals. Fortunately this ratio can be improved significantly using randomization~\citep{alon2011sum}, with the best possible ratio lying somewhere in the interval $[67/108,35/48]\approx[0.6204,0.7292]$~\citep{fischer2015optimal}.

\subsection{Our Contribution}

We improve both the lower and upper bound, to show that the best possible performance guarantee for impartial selection in the single-nomination model lies in the interval $[2105/3147,76/105] \approx [0.6689,0.7238]$.

The existing lower bound of $67/108$ due to \citet{fischer2015optimal} is achieved by the so-called permutation mechanism, which considers vertices from left to right along a random permutation and maintains a candidate vertex with maximum indegree from the left.
We provide a tight analysis of the permutation mechanism showing that it achieves a performance guarantee of $\alpha(\Delta)$ in a situation where some individual receives $\Delta$ votes, where $\alpha(\Delta)=(3\Delta+2)/(4\Delta+4)$ if $\Delta$ is even and $\alpha(\Delta)=\alpha(\Delta-1)$ otherwise. This implies a performance guarantee of $2/3$ in any situation. 
The analysis fixes a particular maximum-indegree vertex $v^*$ and allows an adversary to choose the probability with which any vertex other than $v^*$ has indegree at least $t$ from the left, for every $t\in[\Delta]$. The adversary faces a trade-off: on the one hand, large indegrees from the left can be detrimental to the performance of the mechanism because they decrease the probability with which $v^*$ is selected; on the other hand, large indegrees improve the performance guarantee in cases where $v^*$ is not selected. Our tight bound is obtained through a careful analysis that uses the probabilities set by the adversary and shows that possible correlations between indegrees can only improve performance. The proof relies on a coupling argument for the involved probabilities.
The analysis is tight on graphs with a unique maximum-indegree vertex and many vertices with indegree $\lfloor\Delta/2\rfloor$. As the number of the latter increases, the maximum-indegree vertex is selected with probability arbitrarily close to $(\Delta-\lfloor \Delta/2 \rfloor)/(\Delta+1)$, and a vertex with indegree at most $\lfloor\Delta/2\rfloor$ with the remaining probability.

In a many-nominations model, \ie on graphs with arbitrary outdegrees, the permutation mechanism achieves the best performance guarantee of any impartial mechanism, equal to~$1/2$ \citep{fischer2015optimal}. It is natural to ask whether the permutation mechanism is best possible in the single-nomination model as well. We give a negative answer to this question, by constructing a mechanism with performance guarantee greater than $2/3$. When the number $n$ of vertices is small, such a guarantee can be obtained by selecting the out-neighbor of a random vertex. For $n\geq 6$, an improvement on the permutation mechanism is achieved by a convex combination of the permutation mechanism and a modified version of another mechanism from the literature, plurality with runner-up. 

We finally give new upper bounds on the performance guarantee of any impartial mechanism, equal to $(3n^3-19n^2+30n-4)/(4n(n-2)(n-4))$ for graphs with $n\geq 6$ vertices. These bounds are obtained by analyzing sets of linear constraints imposed by impartiality and the requirement to select a vertex on a class of graphs consisting of a $2$-cycle and paths of varying lengths directed at the cycle vertices. They improve on the best upper bounds known previously for all $n\geq 7$, and from $35/48\approx 0.7292$ to $76/105\approx 0.7238$ in the worst case over all $n$.\footnote{The previous bounds were given by \citet{fischer2015optimal} without proof, and claimed to be tight for $n\leq 9$. Comparison with our bounds shows that the claim regarding tightness is incorrect.} Like the previous bounds they tend to $3/4$ as $n$ goes to infinity, and one may conjecture that $3/4$-optimality is achievable in the limit.

\subsection{Related Work}

Impartiality of the kind we discuss here was first considered for the allocation of a divisible good~\citep{de2008impartial}, and then for selection~\citep{alon2011sum,holzman2013impartial}. \citeauthor{holzman2013impartial} showed that in the single-nomination model and for deterministic mechanisms, impartiality is incompatible with the requirement to select any vertex with indegree $n-1$ and never select a vertex with indegree~$0$.
\citeauthor{alon2011sum} studied impartial mechanisms selecting a fixed number~$k$ of vertices in the many-nomination model, showing that deterministic such mechanisms cannot provide a non-trivial performance guarantee for any~$k$ whereas the best randomized one for $k=1$ is between $1/4$- and $1/2$-optimal. The permutation mechanism was introduced and shown to achieve $1/2$-optimality by \citet{fischer2015optimal}. An axiomatic characterization of symmetric randomized mechanisms was provided by \citet{mackenzie2015symmetry}, and improved performance guarantees for different values of~$k$ were given by \citet{bjelde2017impartial}.

\citet{bousquet2014near} proposed a slicing mechanism that achieves a performance guarantee of~$1$ in the limit as $\Delta \to\infty$. \citet{caragiannis2022impartial} pointed out that the slicing mechanism is also $O(n^{8/9})$-additive in the sense that it guarantees a gap of $O(n^{8/9})$ between the indegree of the selected vertex and the maximum indegree. \citeauthor{caragiannis2022impartial} then provided an $O(\sqrt{n})$-additive mechanism for the single-nomination model and an $O(n^{2/3}\ln^{1/3}n)$-additive mechanism for the many-nominations model, both of them randomized.
\citet{cembrano2022impartial-ec} showed $O(\sqrt{n})$-additivity in the single-nomination model can also be achieved deterministically, whereas in the many-nominations model deterministic impartial mechanisms cannot improve on the trivial additive guarantee of $n-1$.

\citet{tamura2014impartial} studied mechanisms selecting a variable number of vertices in the single-nomination model, and specifically a deterministic version of plurality with runner-up that selects a maximum-indegree vertex and possibly a second vertex whose indegree is smaller by at most~$1$. An axiomatic characterization of this mechanism was provided by \citet{tamura2016characterizing}, and \citet{cembrano2022optimal} studied the trade-off between the number of selected vertices and the gap between the maximum degree and the degree of any selected vertex in settings with more than a single nomination.

Following the observation that a highly publicized mechanism for scientific peer review due to \citet{merrifield2009telescope} expressly does not separate honest opinions from personal interest, impartial selection has finally been studied as an element of mechanisms for impartial peer review~\citep[\eg][]{kurokawa2015impartial,aziz2016strategyproof,kahng2018ranking,xu2019strategyproof}.

\section{Preliminaries}

For $n\in \NN$, let $\calG_n$ be the set of directed graphs with $n$ vertices, no loops, and one outgoing edge per vertex, \ie the set
\[
    \left\{(V, E): V = \{1,2,\dots,n\}, E \subseteq (V \times V ) \setminus \bigcup_{v\in V}\{(v,v)\}, |E\cap (\{v\}\times V)|=1 \text{ for every }v\in V\right\}
\]
Let $\calG = \bigcup_{n\in \NN} \calG_n$. 
For $G=(V,E)\in\calG$ and $v\in V$, let $N^+(v, G)=\{u\in V:(v,u)\in E\}$ denote the out-neighborhood of~$v$ in~$G$, $N^-(v, G)=\{u\in V:(u,v)\in E\}$ the in-neighborhood of~$v$ in~$G$, $\delta^-(v,G)=|N^-(v,G)|$ the indegree of~$v$ in~$G$, and $\delta^-_S(v,G)=|\{u\in S: (u,v)\in E\}|$ the indegree of~$v$ from a particular subset $S\subseteq V$ of the vertices.
We refer to the maximum indegree of any vertex in $G$ as $\Delta(G) = \max_{v\in V} \delta^-(v, G)$, and to an arbitrary (fixed) vertex with maximum indegree in $G$ as $v^*(G) \in \arg\max_{v\in V} \delta^-(v, G)$.
When the graph is clear from the context, we will sometimes drop $G$ from the notation and write $N^-(v)$, $\delta^-(v)$, $\delta^-_S(v)$, $\Delta$ and~$v^*$.
We also let $T(G)=\{v\in V: \delta^-(v)=\Delta\}$ denote the set of maximum-indegree vertices of $G$.
For a graph $G=(V,E)\in \calG$ and $v\in V$, we finally denote by $G_{-v}=(V,E\setminus (\{v\}\times V)$ the graph obtained by omitting $v$'s outgoing edge.

A \textit{selection mechanism} consists of a family of functions $f: \calG_n \to [0, 1]^n$ mapping each graph to a probability distribution over its vertices, such that $\sum_{v\in V}f_v(G) = 1$ for every $G=(V,E)\in \calG$. 
In slight abuse of notation, we use $f$ to refer to both the mechanism and individual functions of the family. 
Mechanism $f$ is \textit{impartial} on $\calG'\subseteq \calG$ if on this set of graphs the outgoing edges of a vertex have no influence on its selection, \ie if for every pair of graphs $G = (V, E)$ and $G' = (V, E')$ in $\calG'$ and every $v\in V$, $f_v(G) = f_v(G')$ whenever $G_{-v}=G'_{-v}$. 
Mechanism $f$ is \textit{$\alpha$-optimal} on $\calG' \subseteq \calG$, for $\alpha \leq 1$, if for every graph in $\calG'$ the expected indegree of the vertex selected by $f$ differs from the maximum indegree by a factor of at most $\alpha$, \ie if
\[
    \text{inf}_{\substack{G\in \calG:\\\Delta(G)>0}}\frac{\EE_{v\sim f(G)} [\delta^-(v, G)]}{\Delta(G)} \geq \alpha.
\]
We will sometimes define a selection mechanism in terms of an algorithm that uses randomness and returns a single vertex. The output of the mechanism in the space $[0,1]^n$ is then simply the vector of probabilities with which each vertex is returned: denoting the output of the algorithm on graph $G=(V,E)\in \calG_n$ by $\calA(G)$, the mechanism is the function $f:G\to [0,1]^n$ such that $f_v(G) = \PP[\calA(G)=v]$ for every $v\in V$.

We finally introduce some notation regarding permutations of the set of vertices. We denote by $\calP(n)$ the set of all permutations of $\{1,\ldots,n\}$. For $\pi\in\calP(n)$, we denote by $\pi_{<v}$ the set of vertices to the left of~$v$ in $\pi$, \ie $\pi_{<v}=\{u\in V: u=\pi_i, \text{$v=\pi_j$ for some $i<j$}\}$. For $\pi\in\calP(n)$ and $S\subseteq V$, we write $\pi(S)$ for the restriction of $\pi$ to $S$, such that, for each $k\in\{1,\ldots,|S|\}$, $\pi_k(S)=v$ if and only if $v\in S$ and $|\{u\in S\cap \pi_{<v}\}|=k-1$.

For $\pi\in \calP(n)$ we write $\pi^{R}$ for the reverse permutation of $\pi$, such that $\pi^R_i=\pi_{n+1-i}$ for all $i\in\{1,\ldots,n\}$, and $\pi^{i,j}$ for the permutation obtained from $\pi$ by swapping the elements in position $i$ and $j$, such that for all $i,j,k\in \{1,\ldots,n\}$,
\[
    \pi_k^{i,j} = \left\{ \begin{array}{ll}
             \pi_j & \text{if $k=i$}, \\
             \pi_{i} & \text{if $k=j$}, \\
             \pi_k & \text{otherwise.} \\
             \end{array}
   \right.
\]

\section{A Tight Analysis of the Permutation Mechanism}
\label{sec:lower-bound}

On graphs with arbitrary outdegrees, impartial selection is solved optimally by the so-called permutation mechanism~\citep{fischer2015optimal}. The mechanism, described formally as \autoref{alg:perm},
chooses a permutation $\pi$ of the vertices uniformly at random and then considers vertices ``from left to right'' in the order of that permutation. At any point the mechanism maintains a candidate vertex $v^P$, along with the indegree~$d$ of that vertex from its left; initially the candidate vertex is the first vertex in the permutation, and its indegree from the left is equal to~$0$. When a new vertex is considered, that vertex becomes the new candidate if and only if it has a higher indegree from the left than the current candidate, whereby any edge from the current candidate vertex is ignored. The vertex selected by the mechanism is the vertex that is the candidate when all vertices have been considered.
\begin{algorithm}[t]
\SetAlgoNoLine
\KwIn{graph $G=(V,E)\in \calG_n$}
\KwOut{vertex $v^P\in V$}
Sample a permutation $\pi=(\pi_1,\ldots,\pi_n)$ of the vertices $V$ uniformly at random\;
initialize $v^P \xleftarrow{} \pi_1$ and $d\xleftarrow{} 0$\;
\For{$j\in \{2,\ldots,n\}$}{
    denote $v\xleftarrow{} \pi_j$ and $\pi_{<v} = \{\pi_i: i\in \{1,\ldots,j-1\}\}$\;
    \If{$\delta^-_{\pi_{<v}\setminus \{v^P\}}(v)\geq d$}{
        update $v^P \xleftarrow{} v$ and $d\xleftarrow{} \delta^-_{\pi_{<v}}(v)$\;
    }
}
{\bf return} $v^P$
\caption{Permutation mechanism ($\mathsf{Perm}$)}
\label{alg:perm}
\end{algorithm}

It is easy to see that the permutation mechanism is impartial on~$\calG$, and indeed also on graphs with arbitrary outdegrees. On graphs with arbitrary outdegrees it is $1/2$-optimal, which is best possible for any impartial mechanism. On~$\calG$ the performance of the mechanism is significantly more difficult to analyze, and it was previously shown to be at least $67/108$-optimal but no better than $2/3$-optimal~\citep{fischer2015optimal}.
We proceed to give a tight analysis of the mechanism on~$\calG$. We specifically analyze the mechanism on graphs with maximum degree~$\Delta$, for any $\Delta$. The following is our main result.
\begin{theorem}
\label{thm:lb-permutation}
  Let $\Delta\in\NN$. Then the permutation mechanism is $\alpha(\Delta)$-optimal on $\{G\in\calG:\Delta(G)=\Delta\}$, where 
    \[
        \alpha(\Delta) = \left\{ \begin{array}{ll} 
        1 & \text{if $\Delta=1$,} \\[1ex]
        \frac{3\Delta+2}{4\Delta+4} & \text{if $\Delta$ is even,} \\[1ex]
        \alpha(\Delta-1) & \text{otherwise.}
        \end{array}
        \right.
    \]
\end{theorem}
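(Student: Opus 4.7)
The plan is to fix an arbitrary $G=(V,E)\in \calG$ with $\Delta(G)=\Delta$ and a maximum-indegree vertex $v^*\in T(G)$, and bound $\mathbb{E}_{v\sim \mathsf{Perm}(G)}[\delta^-(v)]/\Delta$ from below by $\alpha(\Delta)$. The starting point is the decomposition
\[
    \mathbb{E}[\delta^-(v^P)] \;=\; \Delta\cdot\Pr[v^P=v^*] \;+\; \sum_{v\neq v^*}\delta^-(v)\cdot\Pr[v^P=v],
\]
which exhibits a trade-off that the proof must resolve: among vertices $v\neq v^*$, large indegrees-from-the-left decrease the first term (a challenger is more likely to displace $v^*$ during the scan) but increase the second (the challenger that wins has higher guaranteed indegree). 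To turn this into a tractable optimization, I would introduce, for each $t\in\{1,\ldots,\Delta\}$, a scalar adversary parameter $p_t$ capturing the probability, over the random permutation, that some challenger $v\neq v^*$ has indegree at least $t$ from the vertices to its left at the moment the mechanism processes it. These parameters form a monotone sequence $p_1\geq \cdots \geq p_\Delta$, and a direct accounting of the mechanism's scan then yields a lower bound on $\Pr[v^P=v^*]$ in terms of them together with an expression for the non-$v^*$ contribution as a telescoping sum $\sum_t t\,(p_t-p_{t+1})$ weighted by conditional selection probabilities.

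The coupling argument is what legitimizes reducing the graph analysis to a scalar optimization over $(p_t)$. On a real graph, the events ``vertex $u$ has indegree $\geq t$ from the left'' are positively correlated across $u$ and $t$ through the shared permutation and overlapping in-neighborhoods, so the parameterization by marginals does not, a priori, capture the joint law. I would couple the mechanism's run on $G$ with its run on a canonical instance realizing the same marginals $p_t$ but with challenger events as independent as possible, and argue via a monotone coupling — constructed by matching up indicator events in the order in which the corresponding vertices appear in $\pi$ — that the canonical instance is never better for the mechanism than $G$ itself. Intuitively, positive correlation among challengers concentrates mass on events where multiple strong challengers appear, and in those events the non-$v^*$ vertex selected ends up with higher indegree.

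Finally, I would solve the one-dimensional adversary problem: minimize the resulting lower bound over monotone sequences $p_1\geq \cdots\geq p_\Delta$. A local-exchange argument shows the minimum is attained by a step function transitioning at $\lfloor \Delta/2 \rfloor$, matching the tight instance described in the introduction (many challengers with indegree $\lfloor\Delta/2\rfloor$) and yielding $(3\Delta+2)/(4\Delta+4)$ for even $\Delta$. For odd $\Delta$, integrality of indegrees prevents the adversary from placing challenger mass at a half-integer value, so the worst case degenerates to $\alpha(\Delta-1)$. The main obstacle will be justifying the coupling step rigorously, since the ``excluding $v^P$'' subtraction in the mechanism's update rule interacts subtly with the joint law of challenger indegrees, and preserving this interaction under coupling requires a careful case analysis of how the current candidate evolves along the scan.
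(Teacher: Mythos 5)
Your high-level framing matches the paper's: fix $v^*$, parameterize the adversary by the probabilities $p_t=\PP[B_t]$ where $B_t$ is the event that some vertex other than $v^*$ has indegree at least $t$ from the left, exploit the trade-off between blocking $v^*$ and raising the fallback indegree, and observe that the worst case concentrates challenger indegrees at $\lfloor\Delta/2\rfloor$. But the central technical step in your plan is aimed at the wrong correlation, and this is a genuine gap. The correlation that must be controlled is not the positive correlation \emph{among challengers} (the paper never decomposes $B_t$ into per-vertex events; it works with the union event directly, so whatever joint law the challengers have is already absorbed into $\PP[B_t]$). What breaks the naive marginal accounting is the possible correlation between the indegree from the left of $v^*$ (the events $A_j=[\delta^-_{\pi_{<v^*}}(v^*)=j]$, each of probability $1/(\Delta+1)$) and the union event $B_i$: if challengers tended to be strong from the left precisely when $v^*$ is strong from the left, the adversary could block $v^*$ when it would otherwise win and leave a weak fallback when $v^*$ is weak, and no bound in terms of the marginals $p_t$ alone would follow. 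The paper's key lemma shows $\PP[B_i\mid A_j]\geq\PP[B_i\mid A_i]$ for all $j<i$, proved by an explicit injection between sets of permutations (swap $v^*$ with the $(j+1)$st element of $\{v^*\}\cup N^-(v^*)$ in the permutation order; this lowers $v^*$'s indegree from the left from $i$ to $j$ while never decreasing any other vertex's indegree from the left, using that in-neighbors of $v^*$ are not in-neighbors of $v$ in the single-nomination model). Your proposal never conditions on the position of $v^*$'s in-neighbors at all, yet your "direct accounting of $\Pr[v^P=v^*]$" implicitly assumes independence of $A_j$ and $B_i$, which is exactly what needs proof.

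Separately, your proposed remedy --- coupling $G$ to a "canonical instance realizing the same marginals with challenger events as independent as possible" --- is not constructed, and it is unclear that such an instance exists inside $\calG$ or that the comparison goes in the direction you claim; the assertion that events with multiple strong challengers make "the non-$v^*$ vertex selected end up with higher indegree" does not hold, since the mechanism's guarantee on the selected vertex is only $\max_{v\neq v^*}\delta^-_{\pi_{<v}}(v)$, which is governed by the union event and not by how many challengers are simultaneously strong. Once the correct conditional monotonicity lemma is in place, the remaining optimization is also simpler than your local-exchange argument suggests: the bound collapses to $\frac{1}{\Delta+1}\bigl(\Delta^2-\sum_{i=1}^{\Delta}(\Delta-2i)\PP[B_i\mid A_i]\bigr)$, which is minimized by setting $\PP[B_i\mid A_i]=1$ for $i\leq\lfloor\Delta/2\rfloor$ and $0$ otherwise, and the odd case is just the arithmetic identity $\frac{3\Delta^2+2\Delta-1}{4\Delta(\Delta+1)}=\frac{3\Delta-1}{4\Delta}=\alpha(\Delta-1)$ rather than a separate integrality phenomenon.
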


Since $\alpha(1)=1$, $\alpha(2)=2/3$, and $\alpha(\Delta)$ is non-decreasing for $\Delta\geq 2$, a performance guarantee of~$2/3$ on $\calG$ follows immediately.
\begin{corollary}\label{cor:lb-2/3}
  The permutation mechanism is $2/3$-optimal on $\calG$.
\end{corollary}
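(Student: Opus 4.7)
The plan is to reduce the corollary directly to a minimization of the sequence $\alpha(\Delta)$ defined in \autoref{thm:lb-permutation}. Since $\alpha$-optimality on $\calG$ is defined as an infimum over all graphs $G$ with $\Delta(G)>0$, and $\calG=\bigcup_{\Delta\geq 1}\{G\in\calG:\Delta(G)=\Delta\}$, the theorem already yields an $\alpha(\Delta)$-optimality guarantee on each class. Hence the corollary reduces to showing $\inf_{\Delta\geq 1}\alpha(\Delta)=2/3$.

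I would then go through the three cases in the definition of $\alpha$. For $\Delta=1$, the formula gives $\alpha(1)=1\geq 2/3$. For even $\Delta\geq 2$, I would verify that the closed form $\alpha(\Delta)=(3\Delta+2)/(4\Delta+4)$ is non-decreasing in $\Delta$ by a short arithmetic check, e.g.\ writing
\[
  \alpha(\Delta+2)-\alpha(\Delta)=\frac{(3\Delta+8)(4\Delta+4)-(3\Delta+2)(4\Delta+12)}{(4\Delta+12)(4\Delta+4)}=\frac{8}{(4\Delta+12)(4\Delta+4)}>0,
\]
so the minimum over even $\Delta\geq 2$ is attained at $\Delta=2$, where $\alpha(2)=8/12=2/3$. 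For odd $\Delta\geq 3$, the recursive clause sets $\alpha(\Delta)=\alpha(\Delta-1)$, and since $\Delta-1$ is even and at least $2$, the previous case yields $\alpha(\Delta)\geq 2/3$ as well. Combining the three cases proves $\alpha(\Delta)\geq 2/3$ for all $\Delta\geq 1$.

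There is essentially no obstacle here: all the analytical work sits in \autoref{thm:lb-permutation}, and the corollary is a short verification that the infimum of a simple piecewise-defined sequence equals $2/3$, attained at $\Delta=2$.
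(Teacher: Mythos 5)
Your proposal is correct and follows essentially the same route as the paper, which likewise deduces the corollary from $\alpha(1)=1$, $\alpha(2)=2/3$, and the monotonicity of $\alpha(\Delta)$ for $\Delta\geq 2$; your explicit computation of $\alpha(\Delta+2)-\alpha(\Delta)=8/((4\Delta+12)(4\Delta+4))$ just spells out the monotonicity check the paper leaves implicit.
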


We need some notation. For a graph $G=(V,E)$, let $\pi(G)$ be a uniform random permutation of $V$, 
$v^P(G)$ the vertex selected by the permutation mechanism, and $X(G)=\delta^-(v^P(G))$ the indegree of that vertex. In slight abuse of notation, we will use the same notation for specific realizations of the random variables as well as the random variables themselves, and omit the dependence on $G$ when it is clear from the context.

Our proof of \autoref{thm:lb-permutation} relies on two lemmas. The first is a simple observation about the permutation mechanism that was first made by \citet{bousquet2014near}: the selected vertex has maximum indegree from the left for the permutation used by the mechanism.
\begin{lemma}[\citealp{bousquet2014near}]
\label{lem:max-indegree-left}
    For any graph $G\in\calG$, $v^P\in\arg\max_{v\in V}\{ \delta_{\pi_{<v}}(v)\}$.
\end{lemma}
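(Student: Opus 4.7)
The plan is to prove this by induction on the loop index $j\in\{1,\ldots,n\}$, maintaining the invariant that after the $j$-th iteration the current candidate $v^P$ lies in $\arg\max_{u\in\{\pi_1,\ldots,\pi_j\}} \delta^-_{\pi_{<u}}(u)$ and the variable $d$ equals this maximum value. Setting $j=n$ then yields the lemma since $\{\pi_1,\ldots,\pi_n\}=V$. The base case $j=1$ holds trivially: $\pi_{<\pi_1}=\emptyset$, so $\pi_1$ has left-indegree~$0$, matching the initialization $d=0$.

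For the inductive step, let $v=\pi_j$ and let $v^P,d$ denote the state of the variables entering iteration $j$. The structural fact I would exploit is that every vertex in $\calG$ has outdegree exactly one, so $v^P$ contributes either $0$ or $1$ to $\delta^-_{\pi_{<v}}(v)$, and hence
\[
  \delta^-_{\pi_{<v}}(v) - \delta^-_{\pi_{<v}\setminus\{v^P\}}(v) \in \{0,1\},
\]
with equality to $1$ if and only if $(v^P,v)\in E$. I would then split on these two sub-cases. If $v^P$ does not nominate $v$, the algorithm's test $\delta^-_{\pi_{<v}\setminus\{v^P\}}(v)\geq d$ coincides with $\delta^-_{\pi_{<v}}(v)\geq d$, so $v$ becomes the new candidate exactly when its true left-indegree weakly exceeds $d$, and the update $d\leftarrow\delta^-_{\pi_{<v}}(v)$ preserves the invariant (with ties broken in favor of $v$, which itself lies in the updated argmax). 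If $v^P$ does nominate $v$, the test reads $\delta^-_{\pi_{<v}}(v)\geq d+1$, so $v$ replaces $v^P$ only when its left-indegree strictly exceeds $d$; if it does not, then $\delta^-_{\pi_{<v}}(v)\leq d$ and $v^P$ remains a maximizer over $\{\pi_1,\ldots,\pi_j\}$. Either way, the invariant persists.

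The only subtlety in the argument is the asymmetric comparison against $\delta^-_{\pi_{<v}\setminus\{v^P\}}(v)$ built into the mechanism to guarantee impartiality; the outdegree-one restriction of $\calG$ is precisely what ensures this adjustment never causes the algorithm to overlook a vertex with strictly larger true left-indegree, and beyond correctly handling the equality case in the two sub-cases above I do not anticipate any further obstacle.
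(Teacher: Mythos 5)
Your proof is correct. The paper itself does not prove this lemma---it is stated with a citation to \citet{bousquet2014near} and used as a black box---so there is no in-paper argument to compare against; your loop-invariant induction (candidate $v^P$ always maximizes $\delta^-_{\pi_{<u}}(u)$ over the prefix seen so far, with $d$ equal to that maximum) is the natural and standard argument, and both sub-cases, including the off-by-one created by ignoring the edge from the current candidate, are handled correctly. One small remark: the fact that $v^P$ contributes at most one to $\delta^-_{\pi_{<v}}(v)$ follows already from the graphs being simple (at most one edge from $v^P$ to $v$), not specifically from the outdegree-one restriction of $\calG$; this is why the same lemma also holds in the many-nominations model, where the permutation mechanism is likewise applied.
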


The second, technical, lemma establishes a weak form of negative correlation between the indegree from the left of an arbitrary maximum-indegree vertex and that of any other vertex. It is key to our improvement over the best previous analysis of the permutation mechanism.

\citeauthor{fischer2015optimal} analyze the permutation mechanism as the limit case of a $k$-partition mechanism, which rather than permuting the set of vertices partitions it into $k$ of sets. Analogously to the permutation mechanism the $k$-partition mechanism then considers sets in a fixed order from left to right and maintains a candidate vertex with maximum indegree from sets to the left. The analysis fixes a particular maximum-indegree vertex $v^*$ and allows an adversary to choose, for any fixed assignment of all vertices except $v^*$ to the sets, the maximum indegree from the left for each set. Vertex $v^*$ itself is assigned to one of the sets uniformly at random. The adversary then faces a tradeoff between choosing large indegrees to stop $v^*$ from being selected when it happens to be in a later set, and choosing small indegrees to reduce the performance of the mechanism when $v^*$ happens to be in an earlier set and may not be selected. For graphs with arbitrary outdegrees the $k$-partition mechanism is $((k-1)/2k)$-optimal, and $1/2$-optimality of the permutation mechanism follows by taking $k$ to infinity. In graphs where all outdegrees are positive, which includes the single-nomination case, the existence of a cycle in the graph and thus of an edge from left to right in any permutation of the vertices, the permutation mechanism is $2/3$-optimal on graphs $G$ with $\Delta(G)\in\{1,2\}$. For graphs $G$ with $\Delta(G)\geq 3$ the adversarial analysis shows $67/108$-optimality.

A weakness of the analysis of \citeauthor{fischer2015optimal}, which prevents it from going beyond $67/108$-optimality, is that it allows the adversary to choose indegrees from the left separately for every assignment of the vertices in $V\setminus\{v^*\}$ to the $k$ sets, or separately for every permutation of these vertices in the case of the permutation mechanism. Our analysis overcomes this limitation by allowing the adversary to choose the probability with which any vertex different from $v^*$ receives at least $i$ edges from its left, for all $i\in\{0,1,\ldots,\Delta\}$. We can then compute the expected indegree of the vertex selected by the permutation mechanism by conditioning on the indegree from the left of $v^*$. The advantage of this approach is that the probabilities are clearly not tied to a specific permutation of $V\setminus\{v^*\}$, which allows the adversarial logic to be applied globally. 

Analogously to the existing analysis, a higher probability that some vertex other than $v^*$ has a large indegree from the left allows such a vertex to be chosen when $v^*$ itself has a large indegree from the left, which is detrimental to the performance of the mechanism, but causes a vertex with relatively large indegree to be chosen even when $v^*$ has a small indegree from the left, which improves the performance of the mechanism. What makes this argument nontrivial is that the events where $v^*$ has a large indegree from the left and where some other vertex has a large indegree from the left may be correlated. 
The following lemma establishes that any such correlation is weakly negative, which allows the argument to go through and will be key to our proof of \autoref{thm:lb-permutation}. 
Specifically, the lemma states that conditioning on $v^*$ having indegree $j<i$ rather than $i$ from the left can only increase the probability that some other vertex has indegree at least $i$ from the left.

\begin{lemma}
\label{lem:correlation-neighbors}
    Let $G=(V,E)\in \calG$. Then, for every $i\in \{1,\ldots,\Delta\}$ and $j\in \{0,\ldots,i-1\}$,
    \[
        \PP\left[\bigcup_{v\in V\setminus \{v^*\}}\left[\delta^-_{\pi_{<v}}(v)\geq i \right] \;\Bigg|\; \delta^-_{\pi_{<v^*}}(v^*)=j\right] \geq \PP\left[\bigcup_{v\in V\setminus \{v^*\}}\left[\delta^-_{\pi_{<v}}(v)\geq i \right] \;\Bigg|\; \delta^-_{\pi_{<v^*}}(v^*)=i\right].
    \]
\end{lemma}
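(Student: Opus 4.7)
The plan is to reduce the stochastic inequality to a deterministic monotonicity statement by conditioning on a rich enough sigma-algebra that leaves only one remaining degree of freedom in the random permutation. Write $d=\delta^-(v^*)$ and $S=\{v^*\}\cup N^-(v^*)$, and label the in-neighbors of $v^*$ as $u_1,\dots,u_d$ in the order in which they appear in $\pi$. The argument conditions on (i) the set of positions $P=\{p_0<\dots<p_d\}$ occupied by $S$ in $\pi$, (ii) the arrangement of $V\setminus S$ across the positions $\{1,\dots,n\}\setminus P$, and (iii) the relative order of the in-neighbors of $v^*$ (equivalently, the labelling just given). Since $\pi$ is uniform, the conditional distribution given (i)--(iii) is uniform over the $d+1$ choices of $v^*$'s position within $P$. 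Denoting by ``configuration $k$'' the event that $v^*$ sits at $p_k$ with the $u_m$'s filling the remaining positions of $P$ in order, $\delta^-_{\pi_{<v^*}}(v^*)=k$ holds exactly in configuration $k$, so conditioning further on $\delta^-_{\pi_{<v^*}}(v^*)=k$ singles out that configuration.

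It then suffices to show that, for every $v\neq v^*$, the value of $\delta^-_{\pi_{<v}}(v)$ in configuration $k$ is non-increasing in $k$. The structural observation driving this is that in $\calG$ every out-degree equals $1$, so each $u_m$ has its unique outgoing edge directed at $v^*$, and hence $u_m\notin N^-(v)$ for every $v\neq v^*$. In the case $v\notin S$, the position of $v$ is fixed by (ii) and the only element of $S$ that can contribute to $\delta^-_{\pi_{<v}}(v)$ is $v^*$; as $k$ grows, $p_k$ grows, and once $p_k$ exceeds the position of $v$ the contribution $\mathbf{1}[(v^*,v)\in E]$ is lost, yielding a non-increasing step function of $k$. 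In the case $v=u_m$, the vertex $u_m$ sits at $p_{m-1}$ when $k\geq m$ and at $p_m$ when $k<m$; decreasing $k$ past $m$ moves $u_m$ rightward from $p_{m-1}$ to $p_m$, adjoining $v^*$ together with the non-$S$ vertices at positions in $[p_{m-1},p_m)$ to $\pi_{<u_m}$, all of which can only increase $\delta^-_{\pi_{<u_m}}(u_m)$.

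Combining the two cases, the maximum of $\delta^-_{\pi_{<v}}(v)$ over $v\neq v^*$ is non-increasing in $k$, and so is the indicator $\mathbf{1}[\bigcup_{v\neq v^*}\{\delta^-_{\pi_{<v}}(v)\geq i\}]$. For $j<i$, this indicator in configuration $j$ dominates its value in configuration $i$ deterministically given (i)--(iii), and taking expectations over the joint distribution of (i)--(iii), which is independent of $\delta^-_{\pi_{<v^*}}(v^*)$ under the uniform measure, yields the lemma. The main difficulty will be to choose the conditioning strong enough to reduce the randomness to one parameter while keeping it weak enough that monotonicity survives; the out-degree-one structure of $\calG$ is precisely what makes this balance possible, since it ensures that in-neighbors of $v^*$ never contribute to the in-degree of any vertex other than $v^*$ itself.
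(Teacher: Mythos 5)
Your proof is correct. It rests on the same structural fact as the paper's proof of \autoref{lem:correlation-neighbors}: since every vertex has out-degree one, each in-neighbor of $v^*$ sends its only edge to $v^*$ and therefore contributes to the in-degree of no other vertex, so relocating $v^*$ among the positions occupied by $S=\{v^*\}\cup N^-(v^*)$ can only increase $\delta^-_{\pi_{<v}}(v)$ for every $v\neq v^*$. The formalization, however, is genuinely different. The paper notes that $\PP[A_i]=\PP[A_j]=1/(\Delta+1)$, reduces the inequality to comparing the cardinalities of two sets of permutations, and exhibits an explicit injection that transposes $v^*$ (the $(i+1)$st element of $S$ in $\pi$) with the $(j+1)$st element of $S$. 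You instead condition on everything except the rank of $v^*$ within the positions of $S$ --- which is uniform on $\{0,\dots,\Delta\}$ and independent of the conditioning data --- and prove that, configuration by configuration, $\max_{v\neq v^*}\delta^-_{\pi_{<v}}(v)$ is non-increasing in that rank; your underlying coupling is a cyclic shift of the block $u_{j+1},\dots,u_i,v^*$ rather than a transposition, and your case analysis (the position of a vertex $v\notin S$ is fixed while $v^*$ only moves left; each displaced $u_m$ only moves right, so $\pi_{<u_m}$ only grows as a set) is sound. Your route buys a slightly stronger conclusion for free: $\PP[B_i\mid A_k]$ is non-increasing in $k$ over the whole range $k\in\{0,\dots,\Delta\}$, and in fact the conditional law of $\max_{v\neq v^*}\delta^-_{\pi_{<v}}(v)$ given $A_k$ is stochastically non-increasing in $k$, whereas the paper's injection only compares the two conditioning events at hand. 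The paper's counting argument is marginally more self-contained in that it needs no appeal to the factorization of the uniform measure on permutations into independent components, but both arguments are complete and rigorous.
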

\begin{proof}
Let $G$, $i$, and $j$ be as in the statement of the lemma.
Then
\begin{align*}
    \PP\left[\bigcup_{v\in V\setminus \{v^*\}}\left[\delta^-_{\pi_{<v}}(v)\geq i \right] \;\Bigg|\; \delta^-_{\pi_{<v^*}}(v^*)=j\right] & = \frac{\PP\left[ \delta^-_{\pi_{<v^*}}(v^*)=j \wedge \exists ~ v\in V\setminus \{v^*\}: \delta^-_{\pi_{<v}}(v)\geq i\right]}{\PP\left[\delta^-_{\pi_{<v^*}}(v^*)=j \right]}
    \intertext{and}
    \PP\left[\bigcup_{v\in V\setminus \{v^*\}}\left[\delta^-_{\pi_{<v}}(v)\geq i \right] \;\Bigg|\; \delta^-_{\pi_{<v^*}}(v^*)=i\right] & = \frac{\PP\left[\delta^-_{\pi_{<v^*}}(v^*)=i \wedge \exists ~ v\in V\setminus \{v^*\}: \delta^-_{\pi_{<v}}(v)\geq i \right]}{\PP\left[\delta^-_{\pi_{<v^*}}(v^*)=i \right]}.
\end{align*}
Since $\PP\left[ \delta^-_{\pi_{<v^*}}(v^*)=i \right] = \PP\left[ \delta^-_{\pi_{<v^*}}(v^*)=j \right] = 1/(\Delta+1)$, it suffices to show that the inequality holds for the numerators.
Let
\begin{align*}
    \calP_{ij}(n) & = \left\{ \pi\in \calP(n): \delta^-_{\pi_{<v^*}}(v^*)=j \wedge \exists ~ v\in V\setminus \{v^*\}: \delta^-_{\pi_{<v}}(v)\geq i\right\} \text{ and} \\
    \calP_{i}(n) & = \left\{ \pi\in \calP(n): \delta^-_{\pi_{<v^*}}(v^*)=i \wedge \exists ~ v\in V\setminus \{v^*\}: \delta^-_{\pi_{<v}}(v)\geq i\right\},
\end{align*}
and observe that since $\pi$ is chosen uniformly at random it is enough to show that $|\calP_{ij}(n)|\geq |\calP_{i}(n)|$. We do so by constructing an injective function $g:\calP_{i}(n) \to \calP_{ij}(n)$.

For each $\pi\in\calP(n)$, let $k(\pi),\ell(\pi)\in \{1,\ldots,n\}$ such that $\pi_{i+1}(\{v^*\}\cup N^-(v^*)) = \pi_{k(\pi)}$ and $\pi_{j+1}(\{v^*\}\cup N^-(v^*)) = \pi_{\ell(\pi)}$, \ie $k(\pi)$ is the position of the $(i+1)$st vertex among $v^*$ and the in-neighbors of $v^*$ in $\pi$ and $\ell(\pi)$ is the position of the $(j+1)$st vertex among $v^*$ and the in-neighbors of $v^*$ in~$\pi$. Now define $g:\calP_{i}(n)\to\calP_{ij}(n)$ such that for all $\pi\in\calP_{i}(n)$, $g(\pi)=\pi^{k(\pi),\ell(\pi)}$.
An illustration of~$g$ is shown in 
\autoref{fig:lem-correlation-neighbors}.

Function~$g$ is clearly injective: by definition of $\pi^{k(\pi),\ell(\pi)}$, $g(\pi)=g(\pi')$ implies $\pi=\pi'$. We proceed to show that the codomain of~$g$ is $\calP_{ij}(n)$. Let $\pi\in\calP_i(n)$, and observe that $v^*=\pi_{k(\pi)}$. It follows that $v^*=(g(\pi))_{\ell(\pi)}=(g(\pi)(\{v^*\}\cup N^-(v^*)))_{j+1}$, since
\[
    |N^-(v^*) \cap \left\{g_1(\pi),\ldots,(g(\pi))_{\ell(\pi)-1}\right\}|  = |N^-(v^*) \cap \left\{\pi_1,\ldots,\pi_{\ell(\pi)-1}\right\}|=j,
\]
and thus $\delta^-_{(g(\pi))_{<v^*}}(v^*) = j$.
On the other hand, let $v\in V\setminus \{v^*\}$ with $\delta^-_{\pi_{<v}}(v)\geq t$, which we know exists because $\pi\in \calP_i$.
If $v=\pi_{\ell(\pi)}$, it follows from $\ell(\pi)<k(\pi)$ that
\[
    (g(\pi))_{<v} = (g(\pi))_{<\pi_{k(\pi)}} = \left(\pi_{<\pi_{k(\pi)}} \setminus \{v\}\right) \cup \{v^*\} \supset \pi_{<\pi_{\ell(\pi)}} = \pi_{<v},
\]
hence $\delta^-_{(g(\pi))_{<v}}(v) \geq \delta^-_{\pi_{<v}}(v) \geq t$.
Otherwise, if $v=\pi_m$ for $m\not\in \{k(\pi),\ell(\pi)\}$, then
\[
    (g(\pi))_{<v} = \left\{ \begin{array}{ll}
             \pi_{<v} &  \text{ if } m < \ell(\pi) \text{ or } m > k(\pi), \\
             \left(\pi_{<v}\cup \{v^*\}\right)\setminus \{\pi_{\ell(\pi)}\} &  \text{ otherwise.}
             \end{array}
   \right.
\]
Since $\pi_{\ell(\pi)}\in N^-(v^*)$, we know that $\pi_{\ell(\pi)}\not\in N^-(v)$ and thus $\delta^-_{(g(\pi))_{<v}}(v) \geq \delta^-_{\pi_{<v}}(v) \geq t$ as well.
This shows that $g(\pi)\in \calP_{ij}(n)$, which completes the proof.
\end{proof}

\begin{figure}[t]
\centering
\begin{tikzpicture}[scale=0.95]

\Vertex[y=2.5, Math, shape=circle, color=black, size=.05]{A}
\Vertex[x=1, y=2.5, Math, shape=circle, color=black, size=.05]{B}
\Vertex[x=2, y=2.5, Math, shape=circle, color=black, size=.05]{C}
\Vertex[x=3, y=2.5, Math, shape=circle, color=white, size=.05]{D}
\Vertex[x=4, y=2.5, Math, shape=circle, color=black, size=.05]{E}
\Vertex[x=5, y=2.5, Math, shape=circle, color=black, size=.05]{F}
\Vertex[x=6, y=2.5, Math, shape=circle, color=white, size=.05]{G}

\Edge[Direct, color=black, lw=1pt, bend=40](A)(G)
\Edge[Direct, color=black, lw=1pt](B)(C)
\Edge[Direct, color=black, lw=1pt, bend=-20](C)(A)
\Edge[Direct, color=black, lw=1pt, bend=-30](D)(G)
\Edge[Direct, color=black, lw=1pt](E)(D)
\Edge[Direct, color=black, lw=1pt](F)(G)
\Edge[Direct, color=black, lw=1pt, bend=-20](G)(C)

\Text[x=2,y=2.2]{$v$}
\Text[x=6,y=2.2]{$v^*$}

\draw [very thick, -Latex](3,1.7) -- (3,0.8);
\Text[x=2.8,y=1.25]{$g$}

\Vertex[Math, shape=circle, color=black, size=.05]{A1}
\Vertex[x=1, Math, shape=circle, color=black, size=.05]{B1}
\Vertex[x=2, Math, shape=circle, color=black, size=.05]{C1}
\Vertex[x=3, Math, shape=circle, color=white, size=.05]{D1}
\Vertex[x=4, Math, shape=circle, color=black, size=.05]{E1}
\Vertex[x=5, Math, shape=circle, color=black, size=.05]{F1}
\Vertex[x=6, Math, shape=circle, color=white, size=.05]{G1}

\Edge[Direct, color=black, lw=1pt, bend=30](A1)(D1)
\Edge[Direct, color=black, lw=1pt](B1)(C1)
\Edge[Direct, color=black, lw=1pt, bend=30](C1)(A1)
\Edge[Direct, color=black, lw=1pt](D1)(C1)
\Edge[Direct, color=black, lw=1pt, bend=-30](E1)(G1)
\Edge[Direct, color=black, lw=1pt, bend=-20](F1)(D1)
\Edge[Direct, color=black, lw=1pt, bend=-50](G1)(D1)

\Text[x=2,y=-.3]{$v$}
\Text[x=3,y=-.3]{$v^*$}

\Vertex[x=7.5, y=2.5, Math, shape=circle, color=black, size=.05]{H}
\Vertex[x=8.5, y=2.5, Math, shape=circle, color=black, size=.05]{I}
\Vertex[x=9.5, y=2.5, Math, shape=circle, color=black, size=.05]{J}
\Vertex[x=10.5, y=2.5, Math, shape=circle, color=white, size=.05]{K}
\Vertex[x=11.5, y=2.5, Math, shape=circle, color=black, size=.05]{L}
\Vertex[x=12.5, y=2.5, Math, shape=circle, color=black, size=.05]{M}
\Vertex[x=13.5, y=2.5, Math, shape=circle, color=white, size=.05]{N}

\Edge[Direct, color=black, lw=1pt, bend=40](H)(N)
\Edge[Direct, color=black, lw=1pt, bend=-20](I)(L)
\Edge[Direct, color=black, lw=1pt](J)(K)
\Edge[Direct, color=black, lw=1pt, bend=-35](K)(N)
\Edge[Direct, color=black, lw=1pt, bend=-20](L)(H)
\Edge[Direct, color=black, lw=1pt](M)(N)
\Edge[Direct, color=black, lw=1pt, bend=-20](N)(L)

\Text[x=11.5,y=2.2]{$v$}
\Text[x=13.5,y=2.2]{$v^*$}

\Vertex[x=7.5, Math, shape=circle, color=black, size=.05]{H1}
\Vertex[x=8.5, Math, shape=circle, color=black, size=.05]{I1}
\Vertex[x=9.5, Math, shape=circle, color=black, size=.05]{J1}
\Vertex[x=10.5, Math, shape=circle, color=white, size=.05]{K1}
\Vertex[x=11.5, Math, shape=circle, color=black, size=.05]{L1}
\Vertex[x=12.5, Math, shape=circle, color=black, size=.05]{M1}
\Vertex[x=13.5, Math, shape=circle, color=white, size=.05]{N1}

\Edge[Direct, color=black, lw=1pt, bend=-20](H1)(K1)
\Edge[Direct, color=black, lw=1pt, bend=30](I1)(L1)
\Edge[Direct, color=black, lw=1pt, bend=40](J1)(N1)
\Edge[Direct, color=black, lw=1pt](K1)(L1)
\Edge[Direct, color=black, lw=1pt, bend=40](L1)(H1)
\Edge[Direct, color=black, lw=1pt, bend=-30](M1)(K1)
\Edge[Direct, color=black, lw=1pt, bend=40](N1)(K1)

\Text[x=11.5,y=-.3]{$v$}
\Text[x=10.5,y=-.3]{$v^*$}

\draw [very thick, -Latex](10.5,1.7) -- (10.5,0.8);
\Text[x=10.3,y=1.25]{$g$}

\end{tikzpicture}
\caption{Example of function $g$ constructed in the proof of \autoref{lem:correlation-neighbors} for $i=3$ and $j=1$. Vertices swapped by $g$ are drawn in white. The indegree from the left of $v^*$ decreases from $i$ to $j$, whereas the indegree from the left of $v$ remains unchanged (left) or increases (right).}
\label{fig:lem-correlation-neighbors}
\end{figure}
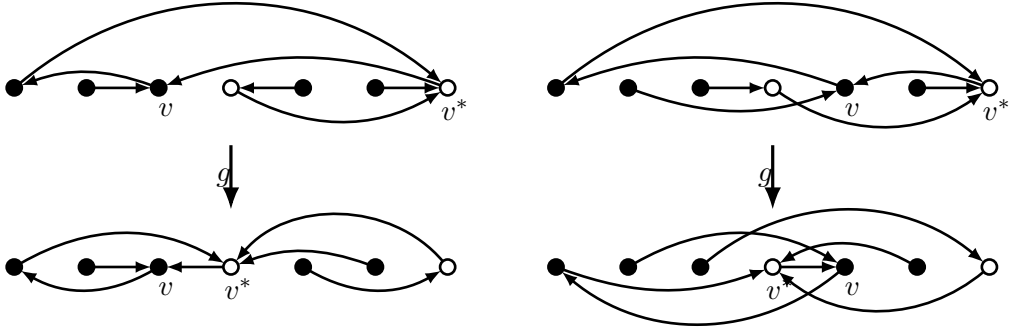
We are now ready to prove \autoref{thm:lb-permutation}.

\begin{proof}[Proof of \autoref{thm:lb-permutation}]
    Let $G=(V,E)\in\calG$.
    If $\Delta=1$, then $\delta^-(v)=1$ for every $v\in V$ and every mechanism is $1$-optimal. We therefore assume $\Delta\geq 2$ in the following.
    
    For $i\in \{0,\ldots,\Delta\}$, let $A_i = \left[\delta^-_{\pi_{<v^*}}(v^*)=i \right]$ be the event that $v^*$ has indegree $i$ from the left, and
    \[
        B_i = \bigcup_{v\in V\setminus \{v^*\}}\left[\delta^-_{\pi_{<v}}(v)\geq i \right]
    \]
    the event that at least one vertex other than $v^*$ has indegree at least~$i$ from the left.

    Using $X$ to denote the indegree of the vertex selected by the permutation mechanism, our goal will be to bound the expectation of $X$ in terms of conditional expectations over a set of disjoint events. We begin by convincing ourselves that the events are indeed disjoint and then give bounds on their probabilities and the corresponding conditional expectations.

    For $i\neq i'$, the events $\left[ A_i \cap \neg B_i\right]$ and $\left[ A_{i'} \cap \neg B_{i'}\right]$ are disjoint since $A_i\cap A_{i'}=\emptyset$. For $j\neq j'$ and all $i$ and $i'$, the events $\left[ A_j \cap B_i \cap \neg B_{i+1}\right]$ and $\left[A_{j'} \cap B_{i'} \cap \neg B_{i'+1}\right]$ are disjoint for the same reason. For $i\neq i'$ and every $j$, the events $\left[A_j \cap B_i \cap \neg B_{i+1}\right]$ and $\left[A_j \cap B_{i'} \cap \neg B_{i'+1}\right]$ are disjoint, since $\left[B_i\cap \neg B_{i+1}\right]$ is the same as $\left[\max_{v\in V\setminus \{v^*\}}\{\delta^-_{\pi_{<v}}(v)\} = i \right]$.     Finally, $[ A_i \cap \neg B_i]$ implies that $\delta^-_{\pi_{<v}}(v)<\delta^-_{\pi_{<v^*}}(v^*)$ for every $v\not=v^*$, and $[ A_{j} \cap B_{i'} \cap \neg B_{i'+1}]$ with $j\leq i'$ implies that $\delta^-_{\pi_{<v}}(v)\geq \delta^-_{\pi_{<v^*}}(v^*)$ for some $v\not=v^*$, so for all $i\in \{1,\ldots,\Delta\}$, $i'\in \{0,\ldots,\Delta\}$, and $j\in \{0,\ldots,i\}$, $[A_i \cap \neg B_i] \cap [A_{j} \cap B_{i'} \cap \neg B_{i'+1}]=\emptyset$.
    
    Now, for all $i\in\{1,\ldots,\Delta\}$, $\PP[A_i]=1/(\Delta+1)$ and thus
    \begin{equation}\label{eq:ex1}
        \PP\left[  A_i \cap \neg B_i \right] = \PP\left[ \neg B_i ~|~ A_i \right]\PP[A_i] = \frac{1}{\Delta+1}\left( 1 - \PP\left[ B_i \mid A_i \right] \right).
    \end{equation}
    Moreover, since $B_{i+1}\subseteq B_i$,
    \begin{equation}\label{eq:ex2}
        \PP\left[ A_j \cap B_i \cap \neg B_{i+1} \right] = \PP \left[ B_i \cap \neg B_{i+1} ~|~ A_j \right] \PP\left[ A_j \right] = \frac{1}{\Delta+1} \left( \PP\left[ B_i ~|~ A_j \right] - \PP\left[ B_{i+1} ~|~ A_j \right] \right).
    \end{equation}
    By \autoref{lem:max-indegree-left}, $X=\Delta$ whenever there exists $i\in\{1,\ldots,\Delta\}$ such that $\delta^-_{\pi_{<v^*}}(v^*)=i$ and $\delta^-_{\pi_{<v}}(v)<i$ for all $v\neq v^*$. Thus, for all $i\in\{1,\ldots,\Delta\}$,
    \begin{equation}\label{eq:ex3}
        \EE\left[X ~|~ A_i \cap \neg B_i\right] = \Delta .
    \end{equation}
   Again by \autoref{lem:max-indegree-left}, $X\geq i$ for $i\in\{0,\ldots,\Delta\}$ if there exists $j\in\{0,\ldots,i\}$ such that (i) $\delta^-_{\pi_{<v^*}}(v^*)=j$, (ii) there is a vertex $v\in V\setminus\{v^*\}$ with $\delta^-_{\pi_{<v}}(v)=i$, and (iii) there is no vertex $v\in V\setminus\{v^*\}$ with $\delta^-_{\pi_{<v}}(v)>i$. 
   Thus, for every $i\in \{0,\ldots,\Delta\}$ and $j\in\{0,\ldots,i\}$, 
   \begin{equation}\label{eq:ex4}
    \EE\left[ X ~|~ A_j \cap B_i \cap \neg B_{i+1}\right] \geq i.
   \end{equation}
    
    We now claim that
    \begin{align*}
        \EE[X] & \geq \sum_{i=1}^{\Delta} \EE\left[ X ~|~ A_i \cap \neg B_i\right] \PP\left[ A_i \cap \neg B_i\right] + \sum_{i=0}^{\Delta} \sum_{j=0}^{i} \EE\left[ X ~|~ A_{j} \cap B_{i} \cap \neg B_{i+1} \right] \PP\left[ A_{j} \cap B_{i} \cap\neg B_{i+1} \right]\\
	& \geq \frac{1}{\Delta+1} \left( \Delta \sum_{i=1}^{\Delta}        \left( 1 - \PP\left[ B_i ~|~ A_i \right] \right) +                   \sum_{i=0}^{\Delta} \sum_{j=0}^{i} i \left( \PP\left[ B_i ~|~ A_j \right] - \PP\left[ B_{i+1} ~|~ A_j \right] \right)            \right)\\
        & =\frac{1}{\Delta+1}\left( \Delta^2 - \Delta \sum_{i=1}^{\Delta}\PP\left[B_i ~|~ A_i \right] + \sum_{i=1}^{\Delta} i \sum_{j=0}^{i} \PP\left[ B_i ~|~ A_j \right] - \sum_{i=2}^{\Delta+1} (i-1)\sum_{j=0}^{i-1} \PP\left[B_i ~|~ A_j  \right] \right)\\
	& =\frac{1}{\Delta+1}\left( \Delta^2 - \Delta \sum_{i=1}^{\Delta}\PP\left[B_i ~|~ A_i \right] + \sum_{i=1}^{\Delta} i \sum_{j=0}^{i} \PP\left[ B_i ~|~ A_j \right] - \sum_{i=1}^{\Delta} (i-1)\sum_{j=0}^{i-1} \PP\left[B_i ~|~ A_j  \right] \right)\\
        & = \frac{1}{\Delta+1}\left( \Delta^2 - \Delta \sum_{i=1}^{\Delta}\PP\left[B_i ~|~ A_i \right] + \sum_{i=1}^{\Delta} \sum_{j=0}^{i-1} \PP\left[ B_i ~|~ A_j \right] + \sum_{i=1}^{\Delta} i \cdot \PP\left[ B_i ~|~ A_i  \right] \right).
    \end{align*}
    Indeed, the first inequality holds because the sum on the right-hand side is over disjoint events, the second inequality by~\eqref{eq:ex1},~\eqref{eq:ex2},~\eqref{eq:ex3}, and~\eqref{eq:ex4}. The first and third equalities follow by re-arranging, the second equality holds because the terms for $i=1$ and $i=\Delta+1$ in the last sum both vanish.
    
    We can now use \autoref{lem:correlation-neighbors}, which states that for every $i\in \{1,\ldots,\Delta\}$ and $j\in \{0,\ldots,i-1\}$ it holds $\PP\left[ B_i ~|~ A_j \right] \geq \PP\left[ B_i ~|~ A_i \right]$, to obtain that
    \begin{align*}
	\EE[X] & \geq \frac{1}{\Delta+1}\left(\Delta^2 - \Delta \sum_{i=1}^{\Delta}\PP\left[B_i ~|~ A_i \right] + \sum_{i=1}^{\Delta} i \cdot \PP\left[ B_i ~|~ A_i \right] + \sum_{i=1}^{\Delta} i \cdot \PP\left[ B_i ~|~ A_i  \right] \right) \\
        & = \frac{1}{\Delta+1}\left( \Delta^2 - \sum_{i=1}^{\Delta}(\Delta-2i)\PP\left[ B_i ~|~ A_i \right] \right).
    \end{align*}
    The sum in the final expression can be bounded as
    \[
        \sum_{i=1}^{\Delta}(\Delta-2i)\PP\left[ B_i ~|~ A_i \right] \leq \sum_{i=1}^{\lfloor \Delta/2 \rfloor}(\Delta-2i)\PP\left[ B_i ~|~ A_i \right] \leq \sum_{i=1}^{\lfloor \Delta/2 \rfloor}(\Delta-2i) = \left\lfloor\frac{\Delta}{2}\right\rfloor \left(\Delta-\left\lfloor \frac{\Delta}{2}\right\rfloor-1\right),
    \]
    where the first inequality follows by dropping non-positive terms and the second inequality because $\PP\left[ B_i ~|~ A_i \right]\leq 1$.
    Thus, if $\Delta$ is even,
    \[
        \frac{\EE[X]}{\Delta} \geq \frac{1}{\Delta(\Delta+1)} \left[ \Delta^2 - \frac{\Delta}{2}\left(\frac{\Delta}{2}-1\right)\right] = \frac{3\Delta^2+2\Delta}{4\Delta(\Delta+1)} = \frac{3\Delta+2}{4\Delta+4} = \alpha(\Delta).
    \]
    If $\Delta$ is odd,
    \[
	\frac{\EE[X]}{\Delta} \geq \frac{1}{\Delta(\Delta+1)} \left[ \Delta^2 - \frac{\Delta-1}{2}\left(\frac{\Delta+1}{2}-1\right)\right] = \frac{3\Delta^2+2\Delta-1}{4\Delta(\Delta+1)} = \frac{3\Delta-1}{4\Delta} = \alpha(\Delta-1).
    \tag*{\raisebox{-.5\baselineskip}{\qedhere}}
    \]
\end{proof}

To prove \autoref{thm:lb-permutation} it was sufficient to condition on the indegree from the left of an arbitrary maximum-indegree vertex. In graphs with multiple maximum-indegree vertices, one would expect the mechanism to only do better. Indeed, on graphs with $k$ maximum-indegree vertices, the permutation mechanism is $k/(k+1)$-optimal. This can be seen by considering the expected indegree from the left of the right-most maximum-indegree vertex.

The bounds in \autoref{thm:lb-permutation} are tight. For $\Delta=2$ this follows from an observation of \citeauthor{fischer2015optimal}, who identified a family of graphs on which the permutation mechanism selects the unique maximum-indegree vertex with a probability approaching $1/3$ as $n$ increases.
The following result generalizes this to graphs with arbitrary maximum degree.

\begin{theorem}
\label{thm:ub-permutation}
Let $\Delta\in\NN$, $\varepsilon>0$. Then there exists a graph $G\in\calG$ such that $\Delta(G)=\Delta$ and the permutation mechanism is \emph{not} $(\alpha(\Delta)+\varepsilon)$-optimal on $G$, where 
    \[
        \alpha(\Delta) = \left\{ \begin{array}{ll} 
        1 & \text{if $\Delta=1$}, \\[1ex]
        \frac{3\Delta+2}{4\Delta+4} & \text{if $\Delta$ is even}, \\[1ex]
        \alpha(\Delta-1) & \text{otherwise.}
        \end{array}
        \right.
    \]
\end{theorem}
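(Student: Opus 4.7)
The plan is to construct, for each $\Delta\geq 2$ and $m\in\NN$, a graph $G_m\in\calG$ with $\Delta(G_m)=\Delta$ such that $\EE[X(G_m)]/\Delta$ converges to $\alpha(\Delta)$ as $m\to\infty$; the statement then follows for any $\varepsilon>0$ by choosing $m$ large. The case $\Delta=1$ is vacuous since $\alpha(1)=1$. Set $k=\lfloor\Delta/2\rfloor$. Guided by the hint that the worst-case graphs have a unique vertex of maximum indegree together with many vertices of indegree $\lfloor\Delta/2\rfloor$, I would take a distinguished vertex $v^*$ with in-neighbors $a_1,\ldots,a_\Delta$ and $m$ pairwise-disjoint ``blocks'' $B_i=\{c_i,s_{i,1},\ldots,s_{i,k}\}$, with edges $a_j\to v^*$ for all $j$, $s_{i,j}\to c_i$ for all $i,j$, together with the outgoing edges $v^*\to a_1$ and $c_i\to s_{i,1}$ needed to satisfy single-nomination. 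Then $v^*$ is the unique vertex of indegree $\Delta$, each $c_i$ has indegree $k$, and every remaining vertex has indegree at most $1$.

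Set $W^*=\delta^-_{\pi_{<v^*}}(v^*)$ and $W_i=\delta^-_{\pi_{<c_i}}(c_i)$. By symmetry within $\{v^*\}\cup N^-(v^*)$ and within $\{c_i\}\cup N^-(c_i)$, $W^*$ is uniform on $\{0,\ldots,\Delta\}$ and each $W_i$ is uniform on $\{0,\ldots,k\}$. Because $v^*\notin N^-(c_i)$ and $c_i\notin N^-(v^*)$ for every $i$, the tie-breaking of \autoref{alg:perm} combined with \autoref{lem:max-indegree-left} gives $v^P=v^*$ exactly when either $W^*>k$, or $W^*=k$ and no $c_i$ with $W_i=k$ lies after $v^*$ in $\pi$; in the complementary event, $v^P$ is some $c_i$ attaining the maximum of the $W_i$'s, except on an event of vanishing probability where $\max_i W_i<k$ and $v^P$ has indegree at most $1$.

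The crux is to show that $\PP[\exists\,i:W_i=k\text{ and }c_i\text{ appears after }v^*]\to 1$. Let $Z_i$ be the indicator of this event; a short symmetry argument on the $(k+2)$-set $B_i\cup\{v^*\}$ gives $\EE[Z_i]=1/(k+2)$. Since the $Z_i$'s are positively correlated through their common dependence on $v^*$, a plain second-moment bound is insufficient; I would condition on the partition of $\{1,\ldots,n\}$ into the position-sets of $v^*$, the $a_j$'s, and each block $B_i$. Given this partition, the internal orderings of the blocks are mutually independent, so letting $T$ denote the random set of blocks with at least one vertex after $v^*$ one has $\PP[\sum_i Z_i=0\mid\text{partition}]=(k/(k+1))^{|T|}$. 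On the event that $v^*$'s rank fraction is at most $1-\epsilon$, $|T|$ is linear in $m$ and $(k/(k+1))^{|T|}$ decays exponentially; the complementary event has probability about $\epsilon$, so taking $m\to\infty$ and then $\epsilon\to 0$ yields $\EE[(k/(k+1))^{|T|}]\to 0$. I expect the uniform handling of the position of $v^*$, especially this tail regime, to be the main technical obstacle.

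Granted the key step, $\PP[v^P=v^*]\to(\Delta-k)/(\Delta+1)$ while otherwise $v^P$ is a $c_i$ of indegree $k$, so
\[
    \frac{\EE[X(G_m)]}{\Delta}\;\longrightarrow\;\frac{\Delta-k}{\Delta+1}+\frac{k}{\Delta}\cdot\frac{k+1}{\Delta+1}\;=\;\frac{\Delta(\Delta-k)+k(k+1)}{\Delta(\Delta+1)},
\]
and a direct parity-by-parity check shows that the right-hand side equals $\alpha(\Delta)$ for both even and odd $\Delta$. Choosing $m$ large enough to make the ratio fall strictly below $\alpha(\Delta)+\varepsilon$ concludes the proof.
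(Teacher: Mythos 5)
Your construction is, up to relabeling, exactly the graph used in the paper's proof: one vertex of indegree $\Delta$, a large number $m$ of disjoint blocks whose centers have indegree $k=\lfloor\Delta/2\rfloor$, and $2$-cycles supplying the required outgoing edges; your closing limit computation and the parity check also coincide with the paper's. Where you genuinely diverge is in the one nontrivial probabilistic estimate, namely that $v^*$ survives with vanishing probability when its indegree from the left is at most $k$. The paper bounds the per-block event $[\delta^-_{\pi_{<c_i}}(c_i)=k \wedge v^*\in\pi_{<c_i}]$ via Bayes' theorem together with an explicit injection between permutation classes (in the spirit of \autoref{lem:correlation-neighbors}) to show this event has conditional probability at least $1/(2k+2)$, and then multiplies across blocks, obtaining the closed-form rate $\bigl((2k+1)/(2k+2)\bigr)^{n'}$ and hence an explicit choice of $n'$ for a given $\varepsilon$. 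You instead condition on the partition of positions among $v^*$, its in-neighbors, and the blocks, under which the within-block orderings are \emph{exactly} independent, compute $\PP[\sum_i Z_i=0\mid\text{partition}]=(k/(k+1))^{|T|}$, and kill the expectation by a tail bound on the rank of $v^*$. This is a sound and arguably more transparent route: the independence you invoke holds by construction, whereas the correlation of the blocks through the position of $v^*$ is precisely the delicate point, and your argument handles it head-on at the cost of a two-stage limit ($m\to\infty$, then $\epsilon\to 0$) and no explicit rate. Two small points to tighten: your ``exactly when'' characterization of $[v^P=v^*]$ is not quite an equivalence (when $W^*<k$ and no block center reaches left-indegree $k$ after $v^*$, the mechanism may still select $v^*$), but only the inclusion $[v^P=v^*]\cap[W^*\leq k]\subseteq[\neg\exists i: W_i=k \wedge c_i \text{ after } v^*]$ is needed for the upper bound, and that inclusion does hold via \autoref{lem:max-indegree-left} and the tie-breaking rule; and since you only need the \emph{unconditional} probability $\PP[\neg\exists i:\cdots]\to 0$ (as $\PP[W^*\leq k]=(k+1)/(\Delta+1)$ is a constant), no further care about conditioning on $W^*$ is required.
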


The proof of this result can be found in \autoref{app:thm-ub-permutation}.
It relies on graphs with one maximum-indegree vertex and a large number $n'$ of vertices with indegree $\lfloor \Delta/2\rfloor$, so that the permutation mechanism selects the unique maximum-indegree vertex with a probability approaching $(\Delta-\lfloor \Delta/2 \rfloor)/(\Delta+1)$ as $n'$ increases.
An example for $\Delta=4$ is depicted in \autoref{fig:thm-ub-permutation}.

\begin{figure}[t]
        \centering
        \begin{tikzpicture}
        \Vertex[Math, shape=circle, color=black, size=.05, label=n'+4, fontscale=1, position=below, distance=-.08cm]{A}
        \Vertex[x=2, Math, shape=circle, color=black, size=.05, label=n'+5, fontscale=1, position=below, distance=-.08cm]{B}
        \Vertex[y=2, Math, shape=circle, color=black, , size=.05, label=n'+2, fontscale=1, position=above, distance=-.08cm]{C}
        \Vertex[x=2, y=2, Math, shape=circle, color=black, size=.05, label=n'+3, fontscale=1, position=above, distance=-.08cm]{D}
        \Vertex[x=1, y=1, Math, shape=circle, color=black, size=.05, label=1, fontscale=1, position=right, distance=-.08cm]{E}
        
        \Edge[Direct, color=black, lw=1pt, bend=-15](A)(E)
        \Edge[Direct, color=black, lw=1pt](B)(E)
        \Edge[Direct, color=black, lw=1pt](C)(E)
        \Edge[Direct, color=black, lw=1pt](D)(E)
        \Edge[Direct, color=black, lw=1pt, bend=-15](E)(A)

        \Vertex[x=4, Math, shape=circle, color=black, size=.05, label=n'+7, fontscale=1, position=left, distance=-.08cm]{F}
        \Vertex[x=4, y=2, Math, shape=circle, color=black, size=.05, label=n'+6, fontscale=1, position=left, distance=-.08cm]{G}
        \Vertex[x=4, y=1, Math, shape=circle, color=black, size=.05, label=2, fontscale=1, position=left, distance=-.08cm]{H}

        \Edge[Direct, color=black, lw=1pt](F)(H)
        \Edge[Direct, color=black, lw=1pt, bend=-15](G)(H)
        \Edge[Direct, color=black, lw=1pt, bend=-15](H)(G)

        \Vertex[x=6, Math, shape=circle, color=black, size=.05, label=n'+9, fontscale=1, position=left, distance=-.08cm]{F1}
        \Vertex[x=6, y=2, Math, shape=circle, color=black, size=.05, label=n'+8, fontscale=1, position=left, distance=-.08cm]{G1}
        \Vertex[x=6, y=1, Math, shape=circle, color=black, size=.05, label=3, fontscale=1, position=left, distance=-.08cm]{H1}

        \Edge[Direct, color=black, lw=1pt](F1)(H1)
        \Edge[Direct, color=black, lw=1pt, bend=-15](G1)(H1)
        \Edge[Direct, color=black, lw=1pt, bend=-15](H1)(G1)

        \foreach \x in {7,8,9}
        \filldraw [black] (\x,1) circle (0.8pt);

        \Vertex[x=10.5, Math, shape=circle, color=black, size=.05, label=3n'+5, fontscale=1, position=left, distance=-.08cm]{F2}
        \Vertex[x=10.5, y=2, Math, shape=circle, color=black, size=.05, label=3n'+4, fontscale=1, position=left, distance=-.08cm]{G2}
        \Vertex[x=10.5, y=1, Math, shape=circle, color=black, size=.05, label=n'+1, fontscale=1, position=left, distance=-.08cm]{H2}

        \Edge[Direct, color=black, lw=1pt](F2)(H2)
        \Edge[Direct, color=black, lw=1pt, bend=-15](G2)(H2)
        \Edge[Direct, color=black, lw=1pt, bend=-15](H2)(G2)
        \end{tikzpicture}
        \caption{Example of graph $G$ as constructed in the proof of \autoref{thm:ub-permutation}, for $\Delta=4$, with the corresponding vertex labels. For this value of $\Delta$, in order to show, for instance, that the permutation mechanism is not $\alpha$-optimal for $\alpha=2/3+0.1$, it is sufficient to take $n'=14$.} 
        \label{fig:thm-ub-permutation}
    \end{figure}
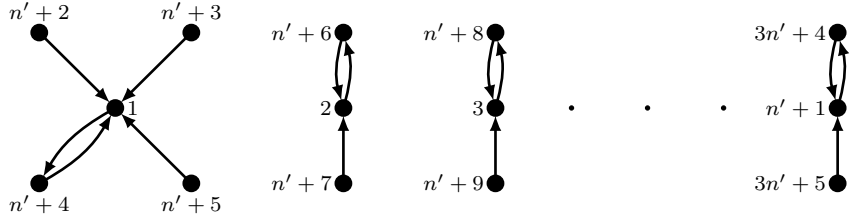

\section{The Permutation Mechanism is Not Best Possible}
\label{sec:improved-lb}

In a model with arbitrary numbers of nominations, the permutation mechanism achieves the best possible performance guarantee of any impartial mechanism, and it is natural to ask whether this could be true in the single-nomination model as well. We have seen in the previous section that the permutation mechanism is $2/3$-optimal on $\calG_n$ for any $n$, and no better than $2/3$-optimal when $n$ can be arbitrarily large. On the other hand, an upper bound of $35/48\approx 0.7292$ was given in prior work on the performance of any impartial mechanism on $\calG$,
and we will later improve this to $76/105\approx 0.7238$. In the limit as $n\to\infty$, the best upper bound is $3/4$. We may therefore ask whether the permutation mechanism is best possible on $\calG$, or on $\calG_n$ as $n\to\infty$. We answer both of these questions in the negative, by showing that a constant improvement over $2/3$-optimality can be achieved for any $n$.
\begin{theorem}
\label{thm:improved-lb-plurality}
    There exists a mechanism that is impartial and $2105/3147$-optimal on $\calG$.
\end{theorem}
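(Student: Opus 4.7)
The plan is to construct the desired impartial mechanism $\mathsf{M}$ piecewise in the number $n$ of vertices, and then to verify that the performance guarantee $2105/3147$ holds uniformly across all cases. For small~$n$ we use a simple auxiliary mechanism; for $n \geq 6$ we take a carefully weighted convex combination of the permutation mechanism $\mathsf{Perm}$ and a suitably impartial variant of plurality with runner-up.

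For the small regime, say $n\leq 5$, I would use the \emph{random out-neighbor} mechanism $\mathsf{RON}$, which selects a vertex $u$ uniformly at random and outputs $N^+(u)$. This is impartial because the probability that a vertex~$w$ is returned equals $\delta^-(w)/n$, which depends only on edges incoming to~$w$ and hence only on $G_{-w}$. The expected indegree of the selected vertex is $(\sum_{w\in V}\delta^-(w)^2)/n$, and using $\sum_{w\in V}\delta^-(w) = n$ together with Cauchy--Schwarz applied to the vertices distinct from some chosen maximum-indegree vertex gives a lower bound of $(\Delta^2 + (n-\Delta)^2/(n-1))/(n\Delta)$. A direct case analysis over $n\in\{2,\ldots,5\}$ and the admissible values of~$\Delta$ establishes that this mechanism already exceeds $2105/3147$-optimality for all such graphs.

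For $n\geq 6$, I set $\mathsf{M}_n = \lambda\cdot\mathsf{Perm} + (1-\lambda)\cdot\mathsf{PwR}$ for a parameter $\lambda\in(0,1)$ to be tuned, where $\mathsf{PwR}$ is an impartial single-vertex variant of plurality with runner-up. The intuition behind $\mathsf{PwR}$ is that the adversarial instances for $\mathsf{Perm}$ identified in \autoref{thm:ub-permutation} have a unique maximum-indegree vertex $v^*$ accompanied by many vertices of indegree $\lfloor\Delta/2\rfloor$, and in such configurations $\mathsf{Perm}$ places only probability $(\Delta-\lfloor\Delta/2\rfloor)/(\Delta+1)$ on $v^*$. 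I would design $\mathsf{PwR}$ so that the probability of selecting $v$ is a function of $G_{-v}$ only (guaranteeing impartiality along the lines of \citet{tamura2014impartial}), and so that on those bad instances $\mathsf{PwR}$ places high mass on $v^*$, compensating for the weakness of $\mathsf{Perm}$; conversely, on instances where $\mathsf{PwR}$ itself is weak, $\mathsf{Perm}$ already achieves the full guarantee $\alpha(\Delta)$ from \autoref{thm:lb-permutation}.

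The crux of the argument is then to derive, for each $\Delta$, a lower bound $\beta_{\mathsf{PwR}}(\Delta)$ on the performance of $\mathsf{PwR}$ that is strong on the bad instances of $\mathsf{Perm}$ and explicitly controlled elsewhere, and to combine it with $\alpha(\Delta)$ to obtain a worst-case lower bound of the form $\min_{\Delta\geq 2}\bigl(\lambda\cdot\alpha(\Delta) + (1-\lambda)\cdot\beta_{\mathsf{PwR}}(\Delta)\bigr)$. Choosing $\lambda$ to equate the tightest competing cases (expected to be $\Delta=2$ and a representative larger~$\Delta$) yields the value $2105/3147$. I expect the main obstacle to be the construction and analysis of $\mathsf{PwR}$: ensuring that it is impartial as a single-vertex rule while being sharply complementary to $\mathsf{Perm}$ on the family of extremal graphs. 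A secondary challenge is handling the transition between the two regimes so that the guarantee is genuinely uniform across all~$n$, in particular that the small-$n$ construction is not undercut by boundary values of~$\Delta$ near $n$.
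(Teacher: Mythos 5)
Your high-level architecture coincides with the paper's: random dictatorship (your ``random out-neighbor'') for $n\leq 5$, and for $n\geq 6$ a convex combination of the permutation mechanism with an impartial single-vertex variant of plurality with runner-up. But as written the proposal has two genuine gaps. First, the small-$n$ bound you state is too weak. Cauchy--Schwarz over the $n-1$ vertices other than $v^*$ gives $\bigl(\Delta^2+(n-\Delta)^2/(n-1)\bigr)/(n\Delta)$, which at $n=5$, $\Delta=2$ evaluates to $5/8=0.625<2105/3147\approx 0.6689$, so your ``direct case analysis'' would not go through with this estimate. You need to exploit integrality: among the vertices other than $v^*$ the sum of squared indegrees is minimized at $0/1$ indegrees, hence is at least $n-\Delta$, which yields the correct bound $\bigl(\Delta+(n-\Delta)/\Delta\bigr)/n=1/2+1/n\geq 7/10$ for $n\leq 5$.

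Second, for $n\geq 6$ the entire substance of the argument is deferred to ``I would design $\mathsf{PwR}$ so that\dots''. Three things are missing that the paper has to supply and that are not routine. (i) A structural lemma pinning down exactly where $\mathsf{Perm}$ falls below a threshold: the paper shows that if $\mathsf{Perm}$ is not $31/45$-optimal on $G$ then $\Delta(G)\in\{2,3\}$ and exactly one vertex has indegree at least $2$. Your heuristic that the bad instances are those of the tightness construction with many indegree-$\lfloor\Delta/2\rfloor$ vertices points at the wrong target: those instances certify tightness of $\alpha(\Delta)$ for each $\Delta$, but the global worst case $2/3$ occurs only at $\Delta\in\{2,3\}$, and the complementary mechanism must be tuned to graphs with a single high-indegree vertex. (ii) Plurality with runner-up is not natively a single-vertex selection rule; the paper must define an \emph{inexact} mechanism (selection probabilities summing to at most $1$, enforced by averaging over a permutation and its reverse) and then restore exactness via a default-vertex construction, each step requiring its own impartiality proof. (iii) The quantitative bounds that actually determine $\lambda$ are not where you expect them: for $\Delta=2$ the complementary mechanism achieves only $65/96$, not ``high mass on $v^*$'', and the two binding cases that the weight $825/1049$ equalizes are $\Delta=2$ and $\Delta=3$ with at least two vertices of indegree $\geq 2$ (where the complementary mechanism is only $25/42$-optimal) --- not ``$\Delta=2$ and a representative larger $\Delta$''. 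Without these pieces the claimed constant $2105/3147$ cannot be derived.
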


We obtain this improved guarantee through a combination of three different impartial mechanisms. One mechanism is used when $n\leq 5$, a fixed convex combination of the other two when $n\geq 6$.

When $n\leq 5$, a simple impartial mechanism introduced by \citet{holzman2013impartial} and called \emph{random dictatorship} improves on $2/3$-optimality.
\begin{algorithm}[t]
\SetAlgoNoLine
\KwIn{graph $G=(V,E)\in \calG_n$}
\KwOut{vertex $v\in V$}
Sample a vertex $u\in V$ uniformly at random\;
{\bf return} $N^+(u)$
\caption{Random dictatorship ($\mathsf{RD}$)}
\label{alg:random-dict}
\end{algorithm}
This mechanism, described in \autoref{alg:random-dict} and denoted by $\mathsf{RD}$, selects a vertex~$v$ uniformly at random and returns the vertex at the other end of the outgoing edge of~$v$. The following lemma, whose proof can be found in \autoref{app:lem-random-dict}, establishes the main properties of this mechanism.
\begin{lemma}
\label{lem:random-dict}
    For $n\in\{2,3,4,5\}$, the random dictatorship mechanism is impartial and $1/2+1/n$-optimal on~$G_n$.
\end{lemma}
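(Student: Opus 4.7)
The plan is to exploit the simple closed form of $\mathsf{RD}$'s selection probabilities, reduce the optimality claim to a lower bound on the second moment of the indegree sequence, and then use integrality of indegrees together with a short quadratic verification.

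For impartiality, I would observe that $\mathsf{RD}(G)=v$ if and only if the uniformly random dictator $u$ satisfies $N^+(u)=v$; since every outdegree is one, there are exactly $\delta^-(v)$ such vertices $u$, so $\PP[\mathsf{RD}(G)=v]=\delta^-(v)/n$. This quantity depends only on the in-edges to $v$ and not on $v$'s outgoing edge, giving impartiality. Consequently the expected indegree of the selected vertex equals
\[
\EE[\delta^-(\mathsf{RD}(G))] \;=\; \sum_{v\in V}\delta^-(v)\cdot\frac{\delta^-(v)}{n} \;=\; \frac{1}{n}\sum_{v\in V}\delta^-(v)^2,
\]
and the $(1/2+1/n)$-optimality statement becomes equivalent to $\sum_{v\in V}\delta^-(v)^2 \geq (n+2)\Delta/2$.

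The heart of the argument is a lower bound on this second moment. I would single out a fixed maximum-indegree vertex $v^*$, contributing $\Delta^2$; the remaining $n-1$ indegrees are non-negative integers summing to $n-\Delta$, since the total indegree equals $|E|=n$. Over such integer sequences, the sum of squares is minimised by a $0/1$ vector with exactly $n-\Delta$ ones (feasible because $n-\Delta\leq n-1$), contributing $n-\Delta$. Hence $\sum_v \delta^-(v)^2 \geq \Delta^2 + (n-\Delta)$, and the desired inequality reduces to $2\Delta^2-(n+4)\Delta+2n\geq 0$, which factors as $(\Delta-2)(2\Delta-n)\geq 0$. I would finish with a one-line verification that this non-negativity holds in each of the finitely many pairs $(n,\Delta)$ with $n\in\{2,3,4,5\}$ and $1\leq \Delta\leq n-1$.

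The main obstacle is recognising that integrality of indegrees is essential. A continuous relaxation such as Cauchy--Schwarz applied to the non-maximum indegrees yields only $\Delta^2+(n-\Delta)^2/(n-1)$, which is strictly too weak in the tight cases (e.g.\ $n=5,\Delta=2$); one really needs the fact that every positive indegree contributes at least one to the sum of squares. The same calculation also clarifies the restriction to $n\leq 5$: the factor $(\Delta-2)(2\Delta-n)$ becomes negative for intermediate $\Delta$ once $n\geq 7$, and even when the bound remains valid at $n=6$ it collapses to $2/3$, no longer improving on the permutation mechanism.
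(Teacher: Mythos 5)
Your proposal is correct and follows essentially the same route as the paper: both reduce to $\EE[\delta^-(\mathsf{RD}(G))]=\tfrac1n\sum_v\delta^-(v)^2$, isolate the $\Delta^2$ term, use integrality to bound the remaining sum of squares below by $n-\Delta$ (the paper phrases this as the minimum being attained when all other indegrees are $0$ or $1$), and finish with a finite check over $\Delta$; your factorization $(\Delta-2)(2\Delta-n)\geq 0$ is just a cleaner packaging of the paper's minimization of $\tfrac1n(\Delta+(n-\Delta)/\Delta)$.
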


For $n\geq 6$, our goal will be to construct a mechanism that performs well in cases where the permutation mechanism is only $2/3$-optimal, and to then randomize between that mechanism and the permutation mechanism.
We have seen in \autoref{thm:lb-permutation} that the class of graphs in which the permutation mechanism is \emph{not} better than $2/3$-optimal only contains graphs with maximum indegree~$2$ or~$3$. The following lemma restricts the class further for a constant improvement over~$2/3$. Its proof can be found in \autoref{app:lem-tight-instances-permutation}.
\begin{lemma}
\label{lem:tight-instances-permutation}
    Let $G\in\calG$ such that the permutation mechanism is \emph{not} $31/45$-optimal on $G$. Then $\Delta(G)\in\{2,3\}$ and $|\{v\in V: \delta^-(v)\geq 2\}|=1$.
\end{lemma}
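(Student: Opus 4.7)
The plan is to prove the contrapositive: if $\Delta(G)\notin\{2,3\}$, or if $|W|\geq 2$ where $W=\{v\in V:\delta^-(v)\geq 2\}$, then the permutation mechanism is $31/45$-optimal on~$G$. The first disjunct is immediate from \autoref{thm:lb-permutation}: $\alpha(1)=1$ and the sequence $\alpha(\Delta)$ is non-decreasing for $\Delta\geq 2$ with $\alpha(4)=7/10$, so every $\Delta\notin\{2,3\}$ already yields at least $7/10 > 31/45$. It thus suffices to handle $\Delta\in\{2,3\}$ under $|W|\geq 2$.

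For $\Delta=2$ and $|W|\geq 2$, I would exploit that every graph in $\calG$ contains a cycle, so that with $M=\max_v \delta^-_{\pi_{<v}}(v)$ one has $\PP[M\geq 1]=1$; combined with \autoref{lem:max-indegree-left} this gives $\EE[X]\geq 1+\PP[M\geq 2]$. Fixing two vertices $v^*,u\in W$ of indegree~$2$ and writing $E_v$ for the event that both in-neighbors of $v$ precede $v$ in $\pi$, one has $\PP[E_v]=1/3$. A short case analysis on how the triples $\{v^*\}\cup N^-(v^*)$ and $\{u\}\cup N^-(u)$ intersect (disjoint; one or both in-neighbors shared; $v^*\in N^-(u)$ or vice versa; $2$-cycle) shows $\PP[E_{v^*}\cap E_u]\leq 1/6$ in every configuration, using the independence of the relative orderings of disjoint vertex sets together with small direct enumerations in the overlapping cases. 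Hence $\PP[M\geq 2]\geq \PP[E_{v^*}\cup E_u]\geq 2/3-1/6=1/2$, and $\EE[X]/\Delta\geq 3/4 > 31/45$.

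For $\Delta=3$ and $|W|\geq 2$, I would start from the master inequality extracted from the proof of \autoref{thm:lb-permutation}, namely
\[
    \EE[X] \geq \tfrac{1}{4}\bigl(9-\PP[B_1\mid A_1]+\PP[B_2\mid A_2]+3\,\PP[B_3\mid A_3]\bigr),
\]
which reduces the target $\EE[X]/\Delta\geq 31/45$ to $\PP[B_2\mid A_2]+3\,\PP[B_3\mid A_3]\geq 4/15$ (using the trivial bound $\PP[B_1\mid A_1]\leq 1$). If the number $k$ of indegree-$3$ vertices satisfies $k\geq 3$, the $k/(k+1)$-optimality remark following \autoref{thm:lb-permutation} already yields $\geq 3/4$; otherwise a second vertex $u\in W\setminus\{v^*\}$ with $\delta^-(u)\in\{2,3\}$ exists, and the same overlap-based case analysis as for $\Delta=2$ bounds $\PP[B_2\mid A_2]$ (and $\PP[B_3\mid A_3]$ when $\delta^-(u)=3$) away from $0$ uniformly in the overlap pattern of $\{v^*,u\}\cup N^-(v^*)\cup N^-(u)$. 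The hard step is precisely this $\Delta=3$ analysis: $\PP[B_1\mid A_1]$ enters the master bound with a negative coefficient and can be close to~$1$, so the required improvement must come entirely from the forced increase in $\PP[B_2\mid A_2]$ and $\PP[B_3\mid A_3]$, and the bookkeeping across all overlap patterns of up to seven relevant vertices is the main technical burden.
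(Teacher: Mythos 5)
Your skeleton matches the paper's: dispose of $\Delta\notin\{2,3\}$ via \autoref{thm:lb-permutation} (correct, since $\alpha(1)=1$ and $\alpha(\Delta)\geq 7/10$ for $\Delta\geq 4$), and for $\Delta=3$ fall back on the master inequality $\EE[X]\geq\tfrac14(9-\PP[B_1\mid A_1]+\PP[B_2\mid A_2]+3\PP[B_3\mid A_3])$ extracted from the proof of \autoref{thm:lb-permutation}, correctly reduced to showing $\PP[B_2\mid A_2]+3\PP[B_3\mid A_3]\geq 4/15$. Your $\Delta=2$ argument is a genuinely different and rather clean route: the paper instead conditions on the number $|T(G)|$ of maximum-indegree vertices and on their mutual edges $b_v$, computing the expected indegree from the left of the right-most vertex of $T(G)$ and treating the independent case separately. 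Your inclusion--exclusion bound $\PP[E_{v^*}\cup E_u]\geq 2/3-1/6=1/2$ checks out (the worst overlap pattern, $N^-(v^*)=N^-(u)$, gives exactly $\PP[E_{v^*}\cap E_u]=1/6$; a $2$-cycle between $v^*$ and $u$ gives $0$; all other patterns give at most $2/15$), and it yields $3/4$-optimality in one stroke where the paper needs three sub-cases.

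The genuine gap is exactly where you flag "the main technical burden": for $\Delta=3$ you never establish the quantitative bound that the argument hinges on. Saying the overlap analysis bounds $\PP[B_2\mid A_2]$ "away from $0$ uniformly" is not sufficient, because the threshold $4/15$ is not slack --- it is attained. The paper computes $\PP\bigl[\delta^-_{\pi_{<\bar v}}(\bar v)=2\mid A_2\bigr]$ exactly for a vertex $\bar v$ with $\delta^-(\bar v)=2$, obtaining $1/3$, $3/10$, $14/45$, or $4/15$ according to which of the edges $(\bar v,v^*)$, $(v^*,\bar v)$ are present, and the worst case (the $2$-cycle $\{(\bar v,v^*),(v^*,\bar v)\}\subseteq E$) gives \emph{exactly} $4/15$, which meets the requirement with equality after dropping the $\PP[B_3\mid A_3]$ term. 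Any looser estimate in that configuration breaks the proof, so the conditional-probability computation under $A_2$ --- which is delicate because conditioning on $\delta^-_{\pi_{<v^*}}(v^*)=2$ fixes $v^*$ to be third in the restriction of $\pi$ to $\{v^*\}\cup N^-(v^*)$ and thereby biases the position of $\bar v$ when $\bar v$ and $v^*$ are adjacent --- cannot be waved through. (A smaller point: for $k\geq 3$ maximum-indegree vertices you invoke the $k/(k+1)$-optimality remark, which the paper states without proof after \autoref{thm:lb-permutation} and re-derives inside its own proof of this lemma; citing it is acceptable but a self-contained argument would reprove it.)
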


To improve performance on graphs covered by \autoref{lem:tight-instances-permutation}, we will use a variation of a mechanism called plurality with runner-up, first proposed by \citet{tamura2014impartial}. 
Before we can define it formally, we need to slightly relax the definition of a mechanism to include functions that may not always select, \ie functions $f:\calG_n\to[0,1]^n$ that for every $G=(V,E)\in\calG_n$ satisfy $\sum_{v\in V}f_v(G)\leq 1$. We will call a family of such functions an \textit{inexact mechanism}, and note that our definitions of impartiality and $\alpha$-optimality naturally extend to inexact mechanisms. We will also slightly abuse terminology and refer to an inexact mechanism as a mechanism when the distinction is clear from context.
Plurality with runner-up permutes the vertices uniformly at random and then selects, with probability~$1/2$, the vertex with greatest index among those with maximum indegree.
In addition, if removing the outgoing edge of another vertex would make that vertex the vertex with greatest index among those with maximum indegree, that vertex is also selected with probability $1/2$.

The variation we consider, and which we will call \emph{plurality with runner-up and gap}, improves performance on any graph with a gap of at least~$2$ between the maximum indegree and all other indegrees, and is targeted specially at graphs where one vertex has indegree~$3$ and all other vertices have indegree at most~$1$.
For a given permutation of the vertices, the mechanism selects the vertex with greatest index among those with maximum indegree with probability~$3/4$ or~$1/2$, depending on whether or not removing the outgoing edge of that vertex creates a gap of at least two between its indegree and all other indegrees.
In addition, if removing the outgoing edge of another vertex would make that vertex the vertex with greatest index among those with maximum indegree, that vertex is also selected with probability $1/2$.
Of course, for a fixed permutation, the sum of these selection probabilities could be as large as $5/4>1$. To obtain a well-defined inexact mechanism we thus average over a permutation~$\pi$ chosen uniformly at random and over its reverse~$\pi^R$.
\begin{algorithm}[t]
\SetAlgoNoLine
\KwIn{graph $G=(V,E)\in \calG_n$}
\KwOut{vector $p\in [0,1]^n$ with $\sum_{v\in V}p_v \leq 1$}
    Sample a permutation $\bar{\pi}\in \calP(n)$ uniformly at random\;
    \For{$\pi\in \left\{\bar{\pi}, \bar{\pi}^R\right\}$}{
        initialize $p_v(\pi)= 0$ for every $v\in V$\;
        let $v^F(\pi) = \arg \max_{v\in V} (\delta^-(v,G), \pi_v)$\;
        \eIf{$\delta^-\left(v^F(\pi),G\right) \geq \delta^-\left(v,G_{-v^F(\pi)}\right)+2$ for every $v\in V\setminus \left\{v^F(\pi)\right\}$}{
            \vspace{.1cm}
            update $p_{v^F(\pi)}(\pi) \leftarrow 3/4$
        }{
            update $p_{v^F(\pi)}(\pi) \leftarrow 1/2$
        }
        let $v^S(\pi) = \arg \max_{v\in V\setminus \{v^F(\pi)\}} (\delta^-(v,G), \pi_v)$\;
        \If{$\left(v^S(\pi),v^F(\pi)\right)\in E$ and $\left[\delta^-(v^S(\pi)) = \Delta \text{ or } \left(\delta^-\left(v^S(\pi)\right) = \Delta-1 \text{ and } \pi_{v^S(\pi)} > \pi_{v^F(\pi)}\right)\right]$}{
            \vspace{.1cm}
            update $p_{v^S(\pi)}(\pi) \leftarrow 1/2$
        }
    }
    let $q=\frac{1}{2}\bigl(p(\bar{\pi})+p\bigl(\bar{\pi}^R\bigr)\bigr)$\;
    sample a set $S$ by letting $S=\{v\}$ with probability $q_v$ for every $v\in V$ and $S=\emptyset$ with probability $1-\sum_{v\in V} q_v$ \;
    {\bf return} $S$
\caption{Plurality with runner-up and gap ($\mathsf{PRUG}$)}
\label{alg:pwru-gap}
\end{algorithm}
Plurality with runner-up and gap is described formally as \autoref{alg:pwru-gap}.\footnote{The mechanism can be seen as a straightforward variation of plurality with runner-up through an alternative description: it selects the vertex with maximum indegree with probability $3/4$ rather than $1/2$ whenever all other indegrees are smaller than the maximum indegree by at least~$2$ in a graph where the outgoing edge of the maximum-indegree vertex has been removed; any other vertex is selected with the same probability as in plurality with runner-up. We will use the extensive description as \autoref{alg:pwru-gap} for the results that follow.}
In the description of the mechanism and in its analysis we use regular inequality signs as well as maximum and minimum operators to denote lexicographic comparison of pairs of the form $(\delta^-(v),v)$ for different vertices~$v$.

By construction, plurality with runner-up and gap is $3/4$-optimal on graphs $G$ with $\Delta(G)=3$ and $\delta^-(v)\leq 1$ for all $v\in V\setminus\{v^*\}$, but it is still only $1/2$-optimal on graphs $G$ with $\Delta(G)=2$ and $\delta^-(v)\leq 1$ for all $v\in V\setminus\{v^*\}$. In addition it sometimes does not select a vertex and is therefore not a mechanism. 

We remove both of these shortcomings by a procedure we call \emph{addition of a default vertex}, a technique first used by \citet{holzman2013impartial} that can be applied to an arbitrary inexact mechanism~$f$. The procedure selects a vertex~$\bar{v}$ uniformly at random, removes its outgoing edge, and runs~$f$ on the resulting graph; if~$f$ ends up not selecting a vertex, the new mechanism selects vertex~$\bar{v}$. A formal description of the procedure is given as \autoref{alg:default-vertex}, its output for an inexact mechanism~$f$ and a graph~$G$ will be denoted by $\mathsf{DV}(f,G)$.
\begin{algorithm}[t]
\SetAlgoNoLine
\KwIn{inexact mechanism $f:\calG_n\to [0,1]^n$, graph $G=(V,E)\in \calG_n$}
\KwOut{vertex $v\in V$}
Sample a vertex $\bar{v}\in V$ uniformly at random\;
define $p\in [0,1]^n$ as $p_{\bar{v}} = 1-\sum_{v\in V}f_v(G_{-\bar{v}}),~ p_v=f_v(G_{-\bar{v}})$ for every $v\in V\setminus \{\bar{v}\}$\;
sample a vertex $v$ with probabilities given by $p_v$ for every $v\in V$\;
{\bf return $v$}
\caption{Addition of a default vertex, $\mathsf{DV}(f,G)$ for input $(f,G)$}
\label{alg:default-vertex}
\end{algorithm}

We will be interested specifically in the mechanism $\mathsf{PRUG}^D$ obtained by addition of a default vertex to plurality with runner-up and gap. The following lemma establishes impartiality of this mechanism as well as guarantees for its performance on some graphs of interest.
\begin{lemma}
\label{lem:prugd}
    $\mathsf{PRUG}^D$ is an impartial mechanism on $\calG$. Moreover, for every $n\geq 6$, it is
    \begin{enumerate}[label=(\roman*)]
        \item \label{lem:prugd-i}  $\alpha(\Delta)$-optimal on $\{G\in \calG_n: \Delta(G)=\Delta\}$, where
        \[
            \alpha(\Delta) = \frac{1}{2} + \frac{7\Delta-9}{6\Delta(3\Delta-2)};
        \]
        \item $65/96$-optimal on $\{G=(V,E)\in \calG_n: \Delta(G)=2\}$;\label{lem:prugd-ii}
        \item $13/18$-optimal on $\{G=(V,E)\in \calG_n: \Delta(G)=3, |\{v\in V: \delta^-(v)\geq 2\}| = 1\}$.\label{lem:prugd-iii}
    \end{enumerate}
\end{lemma}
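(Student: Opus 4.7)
My plan is to establish impartiality in two layers. First, the inexact mechanism $\mathsf{PRUG}$ satisfies the identity
\[
\mathsf{PRUG}_v(G) = \mathsf{PWRU}_v(G) + \tfrac{1}{4}\mathds{1}\bigl[\delta^-(v,G)\geq \delta^-(u,G_{-v}) + 2 \text{ for all } u\neq v\bigr],
\]
where $\mathsf{PWRU}$ denotes the (un-boosted) plurality-with-runner-up mechanism of \citet{tamura2014impartial}. The key observation is that whenever the gap condition fires at some $v^F(\pi)$, that vertex must be the unique maximum-indegree vertex of $G$, so $v^F(\pi)$ is constant across $\pi$ and no runner-up can possibly be eligible (all other indegrees are at most $\Delta - 2$); hence the extra $1/4$ is assigned exactly to this vertex, call it $v$, and the indicator above is a function of $G_{-v}$ alone. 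Next, $\mathsf{PWRU}$ itself is impartial because a direct calculation gives $\mathsf{PWRU}_v(G) = \tfrac{1}{2}\PP_\pi[v^F(G_{-v},\pi) = v]$, a quantity depending only on $G_{-v}$; the identity is verified by a short case analysis on whether $v\in T(G)$ and on the out-neighbour of $v$, showing that the $v^F$ and runner-up contributions to $v$'s selection in $G$ together match the event that $v$ is the top vertex of $G_{-v}$. Impartiality of the final mechanism $\mathsf{PRUG}^D$ then follows from the standard fact that addition of a default vertex preserves impartiality of an impartial inexact mechanism: when $\bar v = v$ the computation uses $G_{-v}$, and when $\bar v\neq v$ the term $\mathsf{PRUG}_v(G_{-\bar v})$ is, by impartiality of $\mathsf{PRUG}$, insensitive to $v$'s outgoing edge.

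\textbf{Performance guarantees.} Using $\sum_{v\in V}\delta^-(v) = n$, the expected indegree of the selected vertex expands as
\[
\EE[\delta^-(v_{\mathrm{sel}})] = 1 + \frac{1}{n}\sum_{\bar v\in V}\sum_{u\in V}\bigl(\delta^-(u) - \delta^-(\bar v)\bigr)\mathsf{PRUG}_u(G_{-\bar v}),
\]
so the task reduces to lower bounding the second term. The main quantity to track is $\mathsf{PRUG}_{v^*}(G_{-\bar v})$ for a fixed maximum-indegree vertex $v^*$ of $G$: since $\delta^-(v^*, G_{-\bar v}) \in \{\Delta - 1, \Delta\}$, $v^*$ is always in $T(G_{-\bar v})$, and $\mathsf{PRUG}_{v^*}(G_{-\bar v})$ is at least $1/(2|T(G_{-\bar v})|)$, reaching $3/4$ when the gap condition holds at $v^*$ in $G_{-\bar v}$.

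\textbf{Parts (ii) and (iii).} For part (iii), where $\Delta = 3$ and $v^*$ is the unique vertex with $\delta^-(v^*) \geq 2$, removing any $\bar v\neq v^*$ leaves $v^*$ with indegree at least $2$ and all other indegrees at most $1$, so the gap condition at $v^*$ in $G_{-\bar v}$ holds and $\mathsf{PRUG}_{v^*}(G_{-\bar v}) = 3/4$; combining this with the default vertex contributions (in which $\bar v \neq v^*$ has indegree at most $1$) the expected indegree reaches $13/18$ of $\Delta = 3$. For part (ii), with $\Delta = 2$, the gap condition generally fails, so the bound relies on a more refined case analysis: I would split on whether $\bar v$ belongs to $T(G)$ and whether it is an in-neighbour of $T(G)$, using the plurality-with-runner-up probabilities together with the default-vertex rule to arrive at $65/96$.

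\textbf{Part (i) and the main obstacle.} The delicate part is the general bound (i). The denominator $3\Delta - 2$ in $\alpha(\Delta)$ suggests that the extremal graph is one in which roughly $3\Delta - 2$ vertices can tie for maximum indegree in $G_{-\bar v}$ over suitable choices of $\bar v$: large $|T(G_{-\bar v})|$ shrinks the $\mathsf{PRUG}$ contribution via the $1/|T(G_{-\bar v})|$ factor, while small $|T(G_{-\bar v})|$ enables the $3/4$ gap bonus. Balancing these two competing effects against the default-vertex contributions, which favour low-indegree vertices and therefore drag the ratio down, is the heart of the argument; matching the precise factor $(7\Delta - 9)/(6\Delta(3\Delta - 2))$ requires a careful optimisation over graph structures and is where most of the technical work resides.
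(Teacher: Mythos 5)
Your high-level architecture matches the paper's: express $\mathsf{PRUG}$ as plurality with runner-up plus a $1/4$ bonus for the gap event (which is indeed a function of $G_{-v}$ only), inherit impartiality, and then expand the expected indegree of $\mathsf{PRUG}^D$ over the choice of default vertex. But two of your supporting claims are false, and the quantitative core of the lemma is not actually proved. First, your parenthetical ``no runner-up can possibly be eligible (all other indegrees are at most $\Delta-2$)'' is wrong: the gap condition compares $\delta^-(v^F,G)$ with $\delta^-(u,G_{-v^F})$, so a vertex $u$ with $\delta^-(u,G)=\Delta-1$ that receives one of its edges \emph{from} $v^F$ survives the gap test, and if it also points back at $v^F$ it is an eligible runner-up whenever $\pi_u>\pi_{v^F}$. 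For such a permutation the raw probabilities sum to $3/4+1/2=5/4$, and it is precisely the averaging over $\pi$ and $\pi^R$ in \autoref{alg:pwru-gap} — which your argument never invokes — that brings the total back to $1$ and makes $\mathsf{PRUG}$ a legitimate inexact mechanism (this is the content of \autoref{lem:prug-gap} and \autoref{lem:prug}). Your final identity for $\mathsf{PRUG}_v(G)$ happens to be correct at the level of the averaged mechanism, but the stated justification does not establish $\sum_v\mathsf{PRUG}_v(G)\le 1$.

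Second, the performance bounds are where the lemma lives, and they are missing or wrong. In part \ref{lem:prugd-iii} you claim the gap condition holds in $G_{-\bar v}$ for every $\bar v\neq v^*$; it fails whenever $\bar v\in N^-(v^*)$, because then $\delta^-(v^*,G_{-\bar v})=2$ while other vertices can still have indegree $1$ in $(G_{-\bar v})_{-v^*}$, so $2\geq 1+2$ is violated. The paper only collects the $3/4$ bonus for the $n-4$ default vertices outside $N^-(v^*)\cup\{v^*\}$, and this loss of a $3/n$ fraction is exactly what produces $13/18$ at $n=6$; your version would yield a larger, incorrect constant. Part \ref{lem:prugd-ii} is only a plan (the paper needs two nontrivial counting bounds, $\frac{1}{n\cdot n!}\sum_{\bar v}\sum_\pi\delta^-(v^F(\pi,G_{-\bar v}))\ge 2-\frac{n-3}{n-2}\cdot\frac2n$ and $\frac{1}{n\cdot n!}\sum_{\bar v}\sum_\pi\delta^-(v_2(\pi,\bar v))\ge 1-\frac{1}{n(n-2)}$), and part \ref{lem:prugd-i} you explicitly defer. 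Your guess for the origin of the denominator $3\Delta-2$ is also off: it does not come from $3\Delta-2$ vertices tying in $G_{-\bar v}$, but from bounding $n\ge\Delta+k(\Delta-1)$ when $k$ vertices have indegree $\Delta-1$ and showing the resulting expression is minimized at $k=2$. As it stands the proposal establishes impartiality (with a repairable flaw) but none of the three optimality claims.
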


The proof of this lemma, which is given in full in \autoref{app:lem-prugd}, consists of three main steps. First, we show that plurality with runner-up and gap is an impartial inexact mechanism. That it is an inexact mechanism is established by showing that for any $\pi\in\calP(n)$ such that $\sum_{v\in V}p_v(\pi)=5/4$, it follows that $\sum_{v\in V}p_v(\pi^R)=3/4$. Impartiality is proved directly but can also be seen by observing that the mechanism differs from plurality with runner-up only in that it sometimes selects a particular vertex with probability $3/4$ rather than $1/2$, and that the decision to do so is made independently of the outgoing edge of that vertex.
Second, we show that addition of a default vertex preserves impartiality and turns any inexact mechanism into a mechanism. The key observation regarding impartiality is that the outgoing edge of the default vertex is removed and thus has no influence on any additional probability assigned to the default vertex.
The approximation guarantees, finally, are proved by considering the case where a gap of at least~$2$ exists between the maximum indegree and the next-largest indegree in a graph where the outgoing edges of the default vertex and the maximum-indegree vertex have been removed, and the case where there is no such graph. The existence of a gap can only improve the performance of the mechanism, but the general bound in \autoref{lem:prugd-i} can be shown even without taking such an improvement into account. The bounds in \autoref{lem:prugd-ii} and \autoref{lem:prugd-iii} are obtained by exploiting the particular structure of the graphs in question. For \autoref{lem:prugd-ii}, the crucial step is to bound from above the number of realizations, in terms of default vertex and permutation, for which the maximum-indegree vertex is selected with probability zero; for \autoref{lem:prugd-iii}, it is to bound from below the number of realizations for which the maximum-indegree vertex is selected with probability $3/4$.

To achieve the performance guarantee claimed in \autoref{thm:improved-lb-plurality}, we will do the following: for graphs with at most~$5$ vertices, we run the random dictatorship mechanism; otherwise we run the permutation mechanism with probability~$825/1049$ and $\mathsf{PRUG}^D$ with probability $224/1049$.
\begin{algorithm}[t]
\SetAlgoNoLine
\KwIn{graph $G=(V,E)\in \calG_n$}
\KwOut{probability distribution $[0,1]^n$}
\eIf{$n\leq 5$}{
    {\bf return $\mathsf{RD}(G)$}
}{
    let $Y\sim \text{Bernoulli}(825/1049)$, $y$ a realization of $Y$\;
    \eIf{$y=1$}{
        {\bf return $\mathsf{Perm}(G)$}
    }{
    {\bf return $\mathsf{PRUG}^D(G)$}
    }
}
\caption{Mixture mechanism ($\mathsf{Mix}$)}
\label{alg:improved-lb-plurality}
\end{algorithm}
The resulting mechanism, which we call the mixture mechanism, is given as \autoref{alg:improved-lb-plurality}.
We proceed with the proof of the theorem.
\begin{proof}[Proof of \autoref{thm:improved-lb-plurality}]
    We claim the result for the mixture mechanism.
    Let $n\in\NN$, graphs $G=(V,E),~G'=(V,E')\in \calG_n$, and $v\in V$ with $G_{-v} = G'_{-v}$. If $n\leq 5$, then
    \[
        \mathsf{Mix}_v(G) = \mathsf{RD}_v(G) = \mathsf{RD}_v(G') = \mathsf{Mix}_v(G'),
    \]
    where the second equality holds by impartiality of the random dictatorship mechanism, as stated in \autoref{lem:random-dict}.
    Otherwise
    \begin{align*}
        \mathsf{Mix}_v(G) & = \frac{825}{1049}\mathsf{Perm}_v(G) + \frac{224}{1049} \mathsf{PRUG}^D_v(G)\\
        & = \frac{825}{1049}\mathsf{Perm}_v(G') + \frac{224}{1049} \mathsf{PRUG}^D_v(G')\\
        & = \mathsf{Mix}_v(G'),
    \end{align*}
    where the second equality holds by impartiality of the permutation mechanism and of $\mathsf{PRUG}^D$, established respectively in \autoref{thm:lb-permutation} and \autoref{lem:prugd}.
    We conclude that the mixture mechanism is impartial.

    We now claim that the mechanism is $2105/3147$-optimal on $\calG$. Let $n\in\NN$ and $G\in\calG_n$. If $n\leq 5$, the claim follows directly from \autoref{lem:random-dict}: $1/2+1/n$ is decreasing in $n$ and $1/2+1/5=7/10$, so the random dictatorship mechanism and thus the mixture mechanism is $7/10$-optimal on $\calG_n$. 

    Now assume that $n\geq 6$. If $\Delta\geq 4$, $\alpha_1(\Delta)$-optimality of the permutation mechanism follows from \autoref{thm:lb-permutation} and $\alpha_2(\Delta)$-optimality of $\mathsf{PRUG^D}$ from \autoref{lem:prugd} where
    \[
        \alpha_1(\Delta) = \left\{ \begin{array}{ll} 
        \frac{3\Delta+2}{4\Delta+4} & \text{if $\Delta$ is even,} \\[1ex]
        \alpha_1(\Delta-1) & \text{otherwise,}
        \end{array}
        \right. \quad \alpha_2(\Delta) = \frac{1}{2} + \frac{7\Delta-9}{6\Delta(3\Delta-2)}.
    \]
    For $\Delta\geq 4$, $\alpha_2$ is decreasing in $\Delta$ since
    \[
        \alpha_2'(\Delta) = \frac{7\Delta(3\Delta-2)-(7\Delta-9)(6\Delta-2)}{6\Delta^2(3\Delta-2)^2} = -\frac{7\Delta^2 - 18\Delta +6}{2\Delta^2(3\Delta-2)^2} < 0.
    \]
    Moreover, for odd values of $\Delta$, $\alpha_1(\Delta)=\alpha_1(\Delta-1)= 3/4-1/(4\Delta)$. This means that any convex combination of $\alpha_1(\Delta)$ and $\alpha_2(\Delta)$ attains its minimum at an odd value of $\Delta$, and thus
    \begin{align*}
        \frac{\EE_{v\sim \mathsf{Mix}(G)}[\delta^-(v)]}{\Delta} & \geq \min_{\Delta\geq 4} \left\{\frac{825}{1049}\frac{\EE_{v\sim \mathsf{Perm}(G)}[\delta^-(v)]}{\Delta} + \frac{224}{1049} \frac{\EE_{v\sim \mathsf{PRUG}^D}[\delta^-(v)] }{\Delta} \right\}\\
        & \geq \min_{\Delta\geq 4} \left\{\frac{825}{1049} \alpha_1(\Delta) + \frac{224}{1049} \alpha_2(\Delta)\right\}\\
        & \geq \min_{\Delta\geq 4, \Delta \text{ odd}} \left\{ \frac{825}{1049} \left(\frac{3}{4}-\frac{1}{4\Delta}\right) + \frac{224}{1049} \left( \frac{1}{2} + \frac{7\Delta-9}{4\Delta(3\Delta-2)} \right) \right\}\\
        & \geq \frac{2923}{4196} - \max_{\Delta\geq 5} \frac{907\Delta+366}{4196\Delta(3\Delta-2)}.
    \end{align*}
    Defining $g(\Delta)=(907\Delta+366)/(4196\Delta(3\Delta-2))$,
    \[
        g'(\Delta) = \frac{907\Delta(3\Delta-2)-(907\Delta+366)(6\Delta-2)}{4196\Delta^2(3\Delta-2)^2} = -\frac{2721\Delta^2+2196\Delta-732}{4196\Delta^2(3\Delta-2)^2} < 0,
    \]
    so the maximum is attained for $\Delta=5$. Thus, if $\Delta\geq 4$,
    \[
        \frac{\EE_{v\sim \mathsf{Mix}(G)}[\delta^-(v)]}{\Delta} \geq \frac{2923}{4196} - \frac{907\cdot 5+366}{4196\cdot 5\cdot (3\cdot 5 -2)} = \frac{2923}{4196} - \frac{377}{20980} = \frac{7119}{10490} > \frac{2105}{3147}.
    \]

    If $\Delta=2$, we know from \autoref{thm:lb-permutation} that the permutation mechanism is $2/3$-optimal, and from \autoref{lem:prugd} that $\mathsf{PRUG^D}$ is $65/96$-optimal. This implies
    \[
        \frac{\EE_{v\sim \mathsf{Mix}(G)}[\delta^-(v)]}{\Delta} \geq \frac{825}{1049}\cdot \frac{2}{3} + \frac{224}{1049} \cdot \frac{65}{96} = \frac{550}{1049}+\frac{455}{3147} = \frac{2105}{3147}.
    \]

    Finally assume that $\Delta=3$. If $|\{v\in V: \delta^-(v)\geq 2\}|=1$, \autoref{thm:lb-permutation} and \autoref{lem:prugd} guarantee $2/3$-optimality of the permutation mechanism and $13/18$-optimality of $\mathsf{PRUG^D}$, so
    \[
        \frac{\EE_{v\sim \mathsf{Mix}(G)}[\delta^-(v)]}{\Delta} \geq \frac{825}{1049}\cdot \frac{2}{3} + \frac{224}{1049} \cdot \frac{13}{18} = \frac{550}{1049}+\frac{1456}{9441} = \frac{6406}{9441} > \frac{2105}{3147}.
    \]
    If $|\{v\in V: \delta^-(v)\geq 2\}|\geq 2$, \autoref{lem:tight-instances-permutation} guarantees $31/45$-optimality of the permutation mechanism and \autoref{lem:prugd} guarantees $25/42$-optimality of $\mathsf{PRUG^D}$, so
    \[
        \frac{\EE_{v\sim \mathsf{Mix}(G)}[\delta^-(v)]}{\Delta} \geq \frac{825}{1049}\cdot \frac{31}{45} + \frac{224}{1049} \cdot \frac{25}{42} = \frac{1705}{3147}+\frac{400}{3147} = \frac{2105}{3147}.
    \]
    This shows that the mixture mechanism is $2105/3147$-optimal on $\calG$ and completes the proof.
\end{proof}

\definecolor{color1}{HTML}{1b9e77}
\definecolor{color2}{HTML}{d95f02}
\definecolor{color3}{HTML}{7570b3}
\definecolor{color4}{HTML}{e7298a}
\pgfdeclareplotmark{halfcircle}{%
\pgfpathcircle{\pgfpoint{0pt}{0pt}}{\pgfplotmarksize}
\pgfusepathqstroke
\pgfpathmoveto{\pgfpoint{\pgfplotmarksize}{0pt}}
\pgfpatharc{0}{180}{\pgfplotmarksize}
\pgfpathclose
\pgfusepathqfill
}
\pgfdeclareplotmark{thirdcircle}{%
\pgfpathcircle{\pgfpoint{0pt}{0pt}}{\pgfplotmarksize}
\pgfusepathqstroke
\pgfpathmoveto{\pgfpointorigin}
\pgfpathlineto{\pgfpoint{\pgfplotmarksize}{0pt}}
\pgfpatharc{0}{120}{\pgfplotmarksize}
\pgfpathclose
\pgfusepathqfill
}
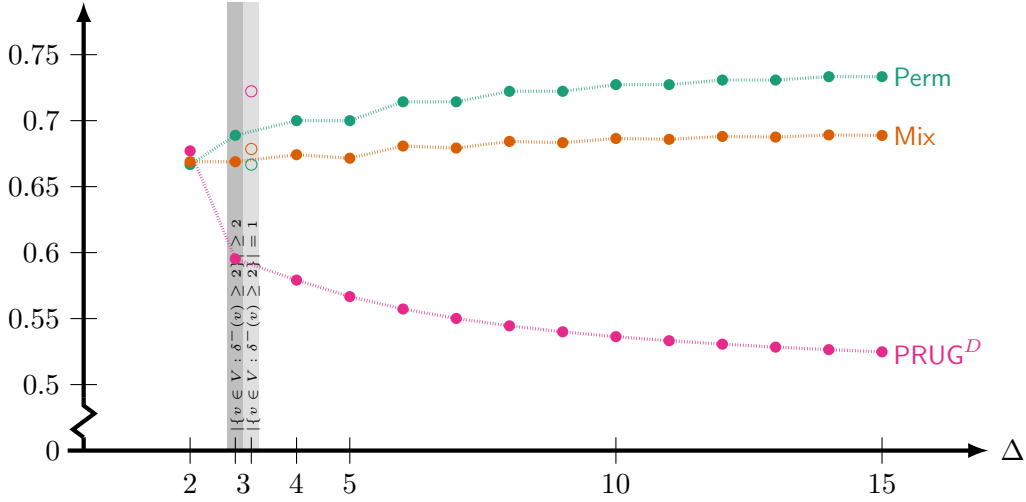
\begin{figure}[t]
    \centering
\begin{tikzpicture}[xscale=0.7,yscale=0.7]

\fill[lightgray] (2.7,0) rectangle (3,8.5);
\fill[lightgray,fill opacity=0.5] (3,0) rectangle (3.3,8.5);
\node[anchor=west,rotate=90,font=\fontsize{4.5}{4.5}\selectfont] (ass1) at (2.85,0.15) {$|\{v \in V : \delta^-(v) \geq 2\}| \geq 2$};
\node[anchor=west,rotate=90,font=\fontsize{4.5}{4.5}\selectfont] (ass2) at (3.15,0.15) {$|\{v \in V : \delta^-(v) \geq 2\}| = 1$};
\draw[ultra thick,-latex] (-.3,0) -- (17,0) node[right] {$\Delta$};
\discontarrow(0,0)(0,0.25)(0,1)(0,8.5);
\foreach \n in {2,4,5,10,15}{
    \draw (\n,0.25) -- (\n,-0.25) node[below] {$\n$};
}
\draw (2.85,0.25) -- (2.85,-0.25);
\draw (3.15,0.25) -- (3.15,-0.25);
\path (3,0.25) -- (3,-0.25) node[below] {$3$};
\draw (0.25,0) -- (-0.25,0) node[left] {$0$};
\draw (0.25,1.25) -- (-0.25,1.25) node[left] {$0.5$};
\draw (0.25,2.5) -- (-0.25,2.5) node[left] {$0.55$};
\draw (0.25,3.75) -- (-0.25,3.75) node[left] {$0.6$};
\draw (0.25,5) -- (-0.25,5) node[left] {$0.65$};
\draw (0.25,6.25) -- (-0.25,6.25) node[left] {$0.7$};
\draw (0.25,7.5) -- (-0.25,7.5) node[left] {$0.75$};

\draw[very thick,color1,dash pattern=on 0.25pt off 0.75pt] plot[mark=*, mark color=none,mark options={fill=color1,draw opacity=0},mark size=3pt] coordinates { (2, 5.417) (2.85, 5.972) (4, 6.25) (5, 6.25) (6, 6.607) (7, 6.607) (8, 6.806) (9, 6.806) (10, 6.932) (11, 6.932) (12, 7.019) (13, 7.019) (14, 7.083) (15, 7.083)   } node[right] {\small $\mathsf{Perm}$};
\draw[very thick,color4,dash pattern=on 0.25pt off 0.75pt, dash phase=0.25pt] plot[mark=*, mark color=none, mark options={color=color4,draw opacity=0},mark size=3pt] coordinates {(2, 5.677) (2.85, 3.631) (4, 3.229) (5, 2.917) (6, 2.682) (7, 2.503) (8, 2.363) (9, 2.25) (10, 2.158) (11, 2.081) (12, 2.016) (13, 1.96) (14, 1.912) (15, 1.87)   } node[right] {\small $\mathsf{PRUG}^D$};
\draw[very thick,color2,dash pattern=on 0.25pt off 0.75pt, dash phase=0.5pt] plot[mark=*, mark color=none,mark options= {color=color2,draw opacity=0},mark size=3pt] coordinates { (2, 5.472) (2.85, 5.472) (4, 5.605) (5, 5.538) (6, 5.769) (7, 5.731) (8, 5.857) (9, 5.833) (10, 5.912) (11, 5.896) (12, 5.951) (13, 5.939) (14, 5.979) (15, 5.97)  } node[right] {\small $\mathsf{Mix}$};

\path plot[mark=o, mark options = {color=color1,draw opacity=1
}, mark size =3pt] coordinates { (3.15, 5.417)};
\path plot[mark=o, mark options = {color=color4,draw opacity=1
}, mark size =3pt] coordinates {(3.15, 6.806) };
\path plot[mark=o, mark options = {color=color2,draw opacity=1
}, mark size =3pt] coordinates {(3.15, 5.713) };
\end{tikzpicture}
\caption{Values $\alpha(\Delta)$, for $\Delta\in \{2,\ldots,15\}$, such that the permutation mechanism, $\mathsf{PRUG}^D$, and the mixture mechanism are $\alpha(\Delta)$-optimal on $\{G=(V,E)\in \calG: |V|\geq 6,~ \Delta(G)=\Delta\}$.
The case where $\Delta=3$ requires a case distinction based on the number of vertices with indegree greater than~$1$.}
\label{fig:plot-alpha-delta}
\end{figure}
\autoref{fig:plot-alpha-delta} illustrates the performance of the permutation mechanism, $\mathsf{PRUG^D}$, and the mixture mechanism on graphs with at least six vertices and maximum indegree $\Delta$, for $\Delta\in\{2,\ldots,15\}$.

\section{Upper Bounds}
\label{sec:upper-bound}

We conclude by giving new upper bounds on the performance guarantee of any impartial mechanism. They improve on the bounds of \citet{fischer2015optimal} for all $n\geq 6$ and have a global minimum of $76/105\approx 0.7238$ at $n=7$. Like the bounds of \citeauthor{fischer2015optimal} they approach $3/4$ as $n\to\infty$ and thus do not rule out the existence of an impartial mechanism that is close to $3/4$-optimal when $n$ is large. Whether such a mechanism exists is an intriguing open question, the resolution of which is likely to require new insights.
\begin{theorem}
\label{thm:ub-plurality}
    Let $n\in \NN, n\geq 6$. Let $f$ be a mechanism that is impartial and $\alpha$-optimal on $\calG_n$.
    Then,
    \[
        \alpha \leq \frac{3n^3-19n^2+30n-4}{4n(n-2)(n-4)}.
    \]
\end{theorem}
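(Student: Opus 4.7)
The plan is to derive the upper bound via a linear-programming argument on a specific family of graphs in $\calG_n$. Following the outline sketched in the introduction, I would consider graphs built from a $2$-cycle on two distinguished vertices $u, w$ together with paths directed into $u$ or $w$. Concretely, for each $k \in \{0, 1, \ldots, n-2\}$ I would define a graph $G_k$ on vertex set $\{u, w, v_1, \ldots, v_{n-2}\}$ with $u \to w$, $w \to u$, a path $v_k \to v_{k-1} \to \cdots \to v_1 \to u$ of length $k$ attached at $u$, and a path $v_{k+1} \to v_{k+2} \to \cdots \to v_{n-2} \to w$ of length $n-2-k$ attached at $w$.

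The first step is to harvest impartiality invariants. In $G_k$ the outgoing edge of $v_k$ points to $v_{k-1}$ and that of $v_{k+1}$ points to $v_{k+2}$, whereas in $G_{k-1}$ these two vertices are redirected so as to move the cut by one position. For each path vertex $v_i$, comparing different graphs in the family yields, via impartiality, linear relations that pin down $f_{v_i}(G_k)$ across $k$ (both for the cases when $v_i$ is an internal path vertex and when it sits at the attachment point). The outgoing edges of $u$ and $w$ themselves are fixed across the family (they always point at each other), so $f_u(G_k)$ and $f_w(G_k)$ are not directly invariant in $k$, but they can still be constrained by enlarging the family with graphs in which $u$'s or $w$'s outgoing edge is redirected toward a leaf, then invoking impartiality of $u$ and $w$.

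The second step is to combine these invariants with the constraints $\sum_{v} f_v(G_k) = 1$ and $\sum_{v} f_v(G_k)\,\delta^-(v, G_k) \geq \alpha\,\Delta(G_k)$. For $1 \leq k \leq n-3$ one has $\delta^-(u, G_k) = \delta^-(w, G_k) = 2$, $\delta^-(v_i, G_k) \in \{0, 1\}$, and $\Delta(G_k) = 2$, while the endpoints $k \in \{0, n-2\}$ need separate care since one of the cycle vertices then drops to indegree~$1$. A suitable non-negative combination of the $\alpha$-optimality inequalities across all $k$, together with the impartiality and sum-to-one equalities, should then cancel all of the $f_v(G_k)$ terms and leave a single scalar inequality $C_n\,\alpha \leq D_n$ whose ratio $D_n/C_n$ equals $(3n^3 - 19n^2 + 30n - 4)/(4n(n-2)(n-4))$.

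The main obstacle will be identifying the correct dual weights for this combination. The cubic-over-cubic form of the bound suggests that the multipliers depend on $k$ in a roughly quadratic fashion, reflecting the balance of path vertices on either side of the $2$-cycle, and a secondary difficulty is the handling of the boundary instances $G_0$ and $G_{n-2}$, where one of the cycle vertices has reduced indegree and the combination must be adjusted to preserve the inequality direction; this is plausibly where the lower-order terms $-19n^2 + 30n - 4$ in the numerator originate.
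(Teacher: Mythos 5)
Your overall strategy matches the paper's: the same family of graphs (a $2$-cycle with two inward-directed paths whose lengths vary), impartiality relations obtained by moving the cut point, sum-to-one constraints, and $\alpha$-optimality applied to variants in which a cycle vertex's outgoing edge is redirected so that it becomes the unique maximum-indegree vertex (this is exactly the paper's graphs $G'_{i}$, which yield $\alpha\leq\tfrac12(x_i+1)$). However, there is a genuine gap: you never reduce to \emph{symmetric} mechanisms, and your family contains no graphs that would let you exploit symmetry. The paper first invokes \autoref{lem:symmetry} to assume $f$ is symmetric, and then anchors the whole linear system by comparing the base graph $G_0$ with the directed cycle $C_n$ and with $C_{2,n}$ (a $2$-cycle plus an $(n-2)$-cycle): impartiality gives $p_1=f_1(C_n)$ and $p_3=f_3(C_{2,n})$, and \emph{symmetry} of those vertex-transitive-on-the-relevant-orbit graphs pins these down to $1/n$ and at most $1/(n-2)$. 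These two numbers are what make the constraint $p_2+\sum_{v\geq 4}p_v\geq (n^2-4n+2)/(n(n-2))$ available, which in turn is the only source of a nontrivial upper bound on the average of the $x_i$. Impartiality by itself only relates probabilities across graphs; it gives no absolute values, and $\alpha$-optimality is vacuous on $C_n$ (all indegrees equal $1$), so without symmetrization an adversarial mechanism could place essentially all probability on the low-indegree tail of $G_0$ consistently with every constraint you list, and your dual combination cannot close to the stated constant.

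Beyond that, the proposal stops short of the actual combination: the paper's route is more concrete than a generic LP duality argument. It shows $\alpha\leq\tfrac12\bigl(1+\min_i x_i\bigr)$, bounds $\min_i x_i$ by a weighted average of the $x_i$ with parity-dependent weights ($2,\ldots,2,1$ for odd $n$ and $4,\ldots,4,3,1$ for even $n$, chosen so that each $p_v$ appears with the same multiplicity when the sum-to-one constraints are added), and only needs the graphs $G_i$ for $i\leq\lfloor n/2\rfloor-1$ rather than all $n-1$ cut positions, since $G_k$ and its mirror are isomorphic. The parity case distinction you attribute to the boundary instances $G_0$ and $G_{n-2}$ actually arises from the middle of the family (the term $p_{n/2+2}$ appearing twice when $n$ is even), not from the endpoints. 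To repair your argument you would need to (i) add the symmetrization step, (ii) add $C_n$ and $C_{2,n}$ to your family to obtain the two anchor values, and (iii) exhibit the weights explicitly.
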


\begin{corollary}
    If $f$ is an impartial and $\alpha$-optimal mechanism on $\calG$, then $\alpha\leq 76/105$.
    If $f$ is an impartial and $\alpha$-optimal mechanism on $\calG_n$, then $\alpha \leq 3/4-\Theta(1/n)$. 
\end{corollary}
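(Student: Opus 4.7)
The plan is to derive both statements directly from \autoref{thm:ub-plurality}. Any mechanism impartial and $\alpha$-optimal on $\calG$ is impartial and $\alpha$-optimal on every $\calG_n$, so the theorem yields $\alpha \leq \beta(n)$ for every $n \geq 6$, where
\[
    \beta(n) = \frac{3n^3-19n^2+30n-4}{4n(n-2)(n-4)}.
\]
The first claim then reduces to showing $\min_{n\geq 6}\beta(n) = 76/105$, and the second to showing $\beta(n) = 3/4 - \Theta(1/n)$.

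The step I would perform first is the algebraic identity
\[
    \frac{3}{4} - \beta(n) = \frac{n^2 - 6n + 4}{4n(n-2)(n-4)},
\]
obtained by computing $3n(n-2)(n-4) - (3n^3 - 19n^2 + 30n - 4) = n^2 - 6n + 4$. The asymptotic statement is then immediate: the right-hand side has numerator of order $\Theta(n^2)$ and denominator of order $\Theta(n^3)$, and it is positive for $n \geq 6$, so $\beta(n) = 3/4 - \Theta(1/n)$. This settles the second part of the corollary.

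For the first part, minimizing $\beta$ over integers $n \geq 6$ is equivalent to maximizing the right-hand side of the identity above. Since that ratio is positive for $n \geq 6$ and decays like $1/n$, its maximum over integers $n \geq 6$ is attained at some small value, which can be located by direct computation: evaluating at $n \in \{6,7,8,9,10\}$ gives $1/48$, $11/420$, $5/192$, $31/1260$, $11/480$. A pairwise cross-multiplication (or a short derivative calculation) confirms monotone decrease from $n=7$ onward, so the maximum is $11/420$, attained uniquely at $n=7$. Hence $\min_{n\geq 6}\beta(n) = 3/4 - 11/420 = 76/105$, yielding $\alpha \leq 76/105$. The only step with any content is the monotonicity check, and it is entirely routine; no technology beyond \autoref{thm:ub-plurality} is needed.
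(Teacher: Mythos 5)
Your proposal is correct and is essentially the argument the paper intends: the corollary follows from \autoref{thm:ub-plurality} by writing $3/4-\beta(n)=(n^2-6n+4)/(4n(n-2)(n-4))$, checking positivity and the $\Theta(1/n)$ decay, and locating the maximum at $n=7$, giving $3/4-11/420=76/105$. One small caveat: the continuous function $(n^2-6n+4)/(4n(n-2)(n-4))$ is still increasing at $n=7$ (its critical point lies strictly between $7$ and $8$), so the ``short derivative calculation'' alone would not show monotone decrease from $n=7$; the integer comparison $11/420>5/192$ together with monotonicity on $[8,\infty)$ is what closes the argument, and your cross-multiplication check does exactly that.
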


A useful observation when attempting to prove upper bounds on the performance of impartial mechanisms is that we can restrict attention to mechanisms that are symmetric.
Here, mechanism~$f$ is called \emph{symmetric} if it is invariant with respect to renaming of the vertices, \ie if for every $G=(V,E)\in\calG$, every $v\in V$, and every permutation $\pi=(\pi_1,\ldots, \pi_{n})$ of $V$,
\[
    f_{\pi_v}(G_{\pi}) = f_v(G),
\]
where $G_{\pi}=(V,E_{\pi})$ with $E_{\pi}=\{(\pi_u, \pi_v):(u,v)\in E\}$. 
For a given mechanism $f$, denote by $f_s$ the ``symmetrized'' mechanism obtained by renaming vertices according permutation $\pi$ chosen uniformly at random, applying~$f$ to the resulting graph, and renaming the result back by the inverse of $\pi$, such that for all $n\in \NN$ , $G\in\calG_n$, and $v\in\{1,\ldots,n\}$,
\[
    (f_s(G))_v = \frac{1}{n!} \sum_{\pi\in\calP_n}f_{\pi_v}(G_{\pi}),
\]
where $\calP_n$ is the set of all permutations of $V$. It is not difficult to see that this operation does not affect impartiality and approximate optimality and produces a symmetric mechanism.
\begin{lemma}[\citealp{holzman2013impartial}]
\label{lem:symmetry}
Let $f$ be a selection mechanism that is impartial and $\alpha$-optimal on $\calG' \subseteq \calG$. Then $f_s$ is impartial, $\alpha$-optimal, and symmetric on $\calG'$.
\end{lemma}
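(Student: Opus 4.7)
The plan is to verify the three conclusions (symmetry, impartiality, $\alpha$-optimality) of $f_s$ one by one, each by a change-of-variables argument on the sum defining $f_s$. Throughout I will implicitly assume $\calG'$ is closed under vertex relabelling; otherwise the operation is not well-defined. The engine behind all three arguments is the group-theoretic fact that averaging over the symmetric group $\calP_n$ is invariant under left/right multiplication by a fixed permutation.

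For \emph{symmetry}, I would fix $G \in \calG'_n$, a permutation $\sigma$, and a vertex $v$, and compute
\[
    (f_s(G_\sigma))_{\sigma_v} = \frac{1}{n!}\sum_{\pi \in \calP_n} f_{\pi_{\sigma_v}}((G_\sigma)_\pi) = \frac{1}{n!}\sum_{\pi \in \calP_n} f_{(\pi\sigma)_v}(G_{\pi\sigma}),
\]
and conclude by reindexing $\pi' = \pi\sigma$ that this equals $(f_s(G))_v$, which is exactly the definition of symmetry.

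For \emph{impartiality}, I would take $G, G' \in \calG'$ with $G_{-v} = G'_{-v}$ and observe that for every $\pi \in \calP_n$, the relabelled graphs $G_\pi$ and $G'_\pi$ agree outside of edges leaving vertex $\pi_v$, that is $(G_\pi)_{-\pi_v} = (G'_\pi)_{-\pi_v}$. Impartiality of $f$ on $\calG'$ then gives $f_{\pi_v}(G_\pi) = f_{\pi_v}(G'_\pi)$ termwise, so the two averages defining $(f_s(G))_v$ and $(f_s(G'))_v$ coincide.

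For \emph{$\alpha$-optimality}, I would start from
\[
    \EE_{v\sim f_s(G)}[\delta^-(v,G)] = \frac{1}{n!}\sum_{\pi \in \calP_n}\sum_{v \in V} f_{\pi_v}(G_\pi)\,\delta^-(v,G),
\]
and substitute $u = \pi_v$ in the inner sum. Since relabelling preserves indegrees in the sense that $\delta^-(v,G) = \delta^-(\pi_v,G_\pi)$, the inner sum becomes $\sum_{u} f_u(G_\pi)\delta^-(u,G_\pi) = \EE_{u\sim f(G_\pi)}[\delta^-(u,G_\pi)]$. Using $\Delta(G_\pi)=\Delta(G)$ and $\alpha$-optimality of $f$ on each $G_\pi \in \calG'$, every term in the average is at least $\alpha\,\Delta(G)$, giving the bound $\EE_{v\sim f_s(G)}[\delta^-(v,G)] \ge \alpha\,\Delta(G)$.

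The main obstacle is essentially bookkeeping: ensuring the indexing convention for $G_\pi$ (replacing vertex $u$ by $\pi_u$ and edge $(u,v)$ by $(\pi_u,\pi_v)$) is used consistently so that the identities $(G_\pi)_\sigma = G_{\pi\sigma}$, $(G_{-v})_\pi = (G_\pi)_{-\pi_v}$, and $\delta^-(v,G) = \delta^-(\pi_v,G_\pi)$ all hold as written; once these three equivariance identities are checked, each of the three conclusions is a one-line change of variable.
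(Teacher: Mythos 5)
Your proposal is correct, and it takes the standard route: the paper does not actually prove this lemma (it cites Holzman and Moulin and remarks only that the claim is ``not difficult to see''), and your three change-of-variable computations, resting on the equivariance identities $(G_\sigma)_\pi = G_{\pi\circ\sigma}$, $(G_{-v})_\pi = (G_\pi)_{-\pi_v}$, and $\delta^-(v,G)=\delta^-(\pi_v,G_\pi)$, are exactly the argument that remark alludes to. Your explicit caveat that $\calG'$ must be closed under vertex relabelling is a genuine hypothesis the paper leaves implicit (it holds for the only case used, $\calG'=\calG_n$), and flagging it is a point in your favour.
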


We are now ready to prove \autoref{thm:ub-plurality}.
\begin{proof}[Proof of \autoref{thm:ub-plurality}]
    Let $n\in \NN$, $n\geq 6$. Let $f$ be an impartial $\alpha$-optimal mechanism on $\calG_n$, which by \autoref{lem:symmetry} we may assume to be symmetric.
    Let $n'=\lfloor n/2\rfloor-1$, and consider the family of graphs $\{G_i=(V,E_i)\}_{i=0}^{n'}\subset \calG_n$, where $V=[n]$ and for each $i\in\{0,\ldots,n'\}$,
    \[
        E_i = \{(v+1,v): v \in \{1,\ldots,n-1\}\setminus \{2,i+2\} \} \cup \{(1,2), (3,1+\chi(i=0)), (i+3,2)\},
    \]
    where $\chi(\cdot)$ denotes the indicator function.
    In other words, in $G_i$, vertices $1$ and $2$ form a $2$-cycle, vertices $i+2$ down to $3$ a path directed at vertex~$1$, and vertices $n$ to $i+3$ a path directed at vertex~$2$. Let $C_n=(V,E(C_n)),C_{2,n}=(V,E(C_{2,n})) \in \calG_n$ with
    \begin{align*}
        E(C_n) & = \{(v+1,v): v\in \{1,\ldots,n-1\}\} \cup \{(1,n)\},\\
        E(C_{2,n}) & = \{(v+1,v): v\in \{3,\ldots,n-1\}\} \cup \{(1,2),(2,1),(3,n)\} \}.
    \end{align*}
    Graphs $G_0$, $G_1$, $G_2$, $C_n$, and $C_{2,n}$ for $n=7$ are shown in \autoref{fig:thm-ub-plurality}. For $v\in V$, let $p_v=f_v(G_0)$ be the probability with which~$f$ selects vertex $v$ in graph $G_0$.

    Then
    \begin{equation}  \label{eq:ub_p1}
        p_1 = f_1(C_n) = 1/n,
    \end{equation}
    where the first equality holds by impartiality, since $E_0 \setminus (\{1\}\times V) = E(C_n) \setminus (\{1\}\times V)$, 
    and the second equality by symmetry. Similarly,
    \begin{equation}  \label{eq:ub_p3}
        p_3 = f_3(C_{2,n}) \leq 1/(n-2),
    \end{equation}
    where the equality holds by impartiality, since $E_0 \setminus (\{3\}\times V) = E(C_{2,n}) \setminus (\{3\}\times V)$,
    and the inequality by symmetry. We further claim that for every $i\in\{0,\ldots n'-1\}$,
    \begin{equation}\label{impartiality-12}
        f_2(G_i) = f_1(G_{i+1}).
    \end{equation}
    To see this, observe that for $i\in\{0,\ldots n'-1\}$,
    \begin{align*}
        E_i \setminus (\{2\}\times V) = & \ \{(v+1,v): v \in \{1,\ldots,n-1\}\setminus \{1,2,i+2\} \}\\ & \ \cup \{(1,2), (3,1+\chi(i=0)), (i+3,2)\},\\
        E_{i+1} \setminus (\{1\}\times V) = & \ \{(v+1,v): v \in \{1,\ldots,n-1\}\setminus \{2,i+3\} \} \cup \{(3,1), (i+4,2)\}.
    \end{align*}
    Both $(V,E_i \setminus (\{2\}\times V))$ and $(V,E_{i+1} \setminus (\{1\}\times V))$ consist of two paths of lengths $i+1$ and $n-i-2$ directed at the same vertex, vertex $2$ in the former graph and vertex $1$ in the latter.
    Denoting by $\pi$ the permutation with
    \[
        \pi_v = \left\{ \begin{array}{ll}
            3 & \text{if $v=1$},\\
            1 & \text{if $v=2$}, \\
            v+1 & \text{if $v\in\{3,\ldots,i+2\}$},\\
            2 & \text{if $v=i+3$},\\
            v & \text{otherwise,}
             \end{array}\right.
    \]
    we have that $(E_i \setminus (\{2\}\times V)))_{\pi} = E_{i+1} \setminus (\{1\}\times V)$. 
    Since $\pi_{2}=1$, and by impartiality, $f_2(G_i)=f_1(G_{i+1})$.
    This implies in particular that $p_2=f_1(G_1)$.
    Denote $x_i=f_2(G_i)$ for $i\in\{1,\ldots,n'\}$, and observe that $x_i=f_1(G_{i+1})$ for $i\in \{1,\ldots,n'-1\}$.

    \begin{figure}[t]
    \centering
    \begin{tikzpicture}  
    \Vertex[x=2.5, y=3.0, Math, shape=circle, color=black, size=.05, label=p_1, fontscale=1, position=left, distance=-.08cm]{A}
    \Vertex[x=2.029, y=3.977, Math, shape=circle, color=black, size=.05, label=p_1, fontscale=1, position={below left}, distance=-.13cm]{B} 
    \Vertex[x=0.972, y=4.219, Math, shape=circle, color=black, size=.05, label=p_1, fontscale=1, position=below, distance=-.08cm]{C}
    \Vertex[x=0.124, y=3.542, Math, shape=circle, color=black, size=.05, label=p_1, fontscale=1, position=right, distance=-.08cm]{D}
    \Vertex[x=0.124, y=2.458, Math, shape=circle, color=black, size=.05, label=p_1, fontscale=1, position=right, distance=-.08cm]{E}
    \Vertex[x=0.972, y=1.781, Math, shape=circle, color=black, size=.05, label=p_1, fontscale=1, position=above, distance=-.08cm]{F}
    \Vertex[x=2.029, y=2.023, Math, shape=circle, color=black, size=.05, label=p_1, fontscale=1, position={above left}, distance=-.13cm]{G}
        
    \Edge[Direct, color=black, lw=1pt, bend=-15](A)(B)
    \Edge[Direct, color=black, lw=1pt, bend=-15](B)(C)
    \Edge[Direct, color=black, lw=1pt, bend=-15](C)(D)
    \Edge[Direct, color=black, lw=1pt, bend=-15](D)(E)
    \Edge[Direct, color=black, lw=1pt, bend=-15](E)(F)
    \Edge[Direct, color=black, lw=1pt, bend=-15](F)(G)
    \Edge[Direct, color=black, lw=1pt, bend=-15](G)(A)

    \Text[y=4.3]{$C_n$}
    \Text[x=1.25,y=1.2]{$p_1\leq \frac{1}{7}$}

    \Vertex[x=6, y=3.5, Math, shape=circle, color=black, size=.05]{A}
    \Vertex[x=6, y=2.5, Math, shape=circle, color=black, size=.05]{B} 
    \Vertex[x=5.5, y=3.0, Math, shape=circle, color=black, size=.05, label=p_3, fontscale=1, position=left, distance=-.08cm]{C}
    \Vertex[x=4.809, y=3.951, Math, shape=circle, color=black, size=.05, label=p_3, fontscale=1, position=below, distance=-.08cm]{D}
    \Vertex[x=3.691, y=3.588, Math, shape=circle, color=black, size=.05, label=p_3, fontscale=1, position=below right, distance=-.13cm]{E}
    \Vertex[x=3.691, y=2.412, Math, shape=circle, color=black, size=.05, label=p_3, fontscale=1, position=above right, distance=-.13cm]{F}
    \Vertex[x=4.809, y=2.049, Math, shape=circle, color=black, size=.05, label=p_3, fontscale=1, position=above, distance=-.08cm]{G}
        
    \Edge[Direct, color=black, lw=1pt, bend=-20](A)(B)
    \Edge[Direct, color=black, lw=1pt, bend=-20](B)(A)
    \Edge[Direct, color=black, lw=1pt, bend=-20](C)(D)
    \Edge[Direct, color=black, lw=1pt, bend=-20](D)(E)
    \Edge[Direct, color=black, lw=1pt, bend=-20](E)(F)
    \Edge[Direct, color=black, lw=1pt, bend=-20](F)(G)
    \Edge[Direct, color=black, lw=1pt, bend=-20](G)(C)

    \Text[x=3.5, y=4.3]{$C_{2,n}$}
    \Text[x=4.75,y=1.2]{$p_3\leq \frac{1}{5}$}
    
    \Vertex[x=9.5, y=3, Math, shape=circle, color=black, size=.05]{A}
    \Vertex[x=9.5, y=4, Math, shape=circle, color=black, size=.05]{B} 
    \Vertex[x=9.0, y=3.0, Math, shape=circle, color=black, size=.05, label=x_1, fontscale=1, position=left, distance=-.08cm]{C}
    \Vertex[x=8.309, y=3.951, Math, shape=circle, color=black, size=.05]{D}
    \Vertex[x=7.191, y=3.588, Math, shape=circle, color=black, size=.05]{E}
    \Vertex[x=7.191, y=2.412, Math, shape=circle, color=black, size=.05]{F}
    \Vertex[x=8.309, y=2.049, Math, shape=circle, color=black, size=.05]{G}
        
    \Edge[Direct, color=black, lw=1pt](A)(B)
    \Edge[Direct, color=black, lw=1pt](B)(C)
    \Edge[Direct, color=black, lw=1pt, bend=-20](C)(D)
    \Edge[Direct, color=black, lw=1pt, bend=-20](D)(E)
    \Edge[Direct, color=black, lw=1pt, bend=-20](E)(F)
    \Edge[Direct, color=black, lw=1pt, bend=-20](F)(G)
    \Edge[Direct, color=black, lw=1pt, bend=-20](G)(C)

    \Text[x=7, y=4.3]{$G'_1$}
    \Text[x=8.25,y=1.2]{$\alpha\leq \frac{1}{2}(x_1+1)$}

    \Vertex[x=13, y=2, Math, shape=circle, color=black, size=.05]{A}
    \Vertex[x=13, y=3, Math, shape=circle, color=black, size=.05]{B} 
    \Vertex[x=13, y=4, Math, shape=circle, color=black, size=.05]{C}
    \Vertex[x=12.5, y=3, Math, shape=circle, color=black, size=.05, label=x_2, fontscale=1, position=left, distance=-.08cm]{D}
    \Vertex[x=11.5, y=4, Math, shape=circle, color=black, size=.05]{E}
    \Vertex[x=10.5, y=3, Math, shape=circle, color=black, size=.05]{F}
    \Vertex[x=11.5, y=2, Math, shape=circle, color=black, size=.05]{G}
        
    \Edge[Direct, color=black, lw=1pt](A)(B)
    \Edge[Direct, color=black, lw=1pt](B)(C)
    \Edge[Direct, color=black, lw=1pt](C)(D)
    \Edge[Direct, color=black, lw=1pt, bend=-30](D)(E)
    \Edge[Direct, color=black, lw=1pt, bend=-30](E)(F)
    \Edge[Direct, color=black, lw=1pt, bend=-30](F)(G)
    \Edge[Direct, color=black, lw=1pt, bend=-30](G)(D)

    \Text[x=10.5, y=4.3]{$G'_2$}
    \Text[x=11.75,y=1.2]{$\alpha\leq \frac{1}{2}(x_2+1)$}

    \Vertex[y=8.5, Math, shape=circle, color=black, size=.05, label=p_7, fontscale=1, position=left, distance=-.08cm]{A}
    \Vertex[y=7.5, Math, shape=circle, color=black, size=.05, label=p_6, fontscale=1, position=left, distance=-.08cm]{B} 
    \Vertex[y=6.5, Math, shape=circle, color=black, size=.05, label=p_5, fontscale=1, position=below, distance=-.08cm]{C}
    \Vertex[x=1, y=6.5, Math, shape=circle, color=black, size=.05, label=p_4, fontscale=1, position=below, distance=-.08cm]{D}
    \Vertex[x=2, y=6.5, Math, shape=circle, color=black, size=.05, label=p_3, fontscale=1, position=below, distance=-.08cm]{E}
    \Vertex[x=3, y=6.5, Math, shape=circle, color=black, size=.05, label=p_2, fontscale=1, position=below, distance=-.08cm]{F}
    \Vertex[x=3, y=8.5, Math, shape=circle, color=black, size=.05, label=p_1, fontscale=1, position=above, distance=-.08cm]{G}
        
    \Edge[Direct, color=black, lw=1pt](A)(B)
    \Edge[Direct, color=black, lw=1pt](B)(C)
    \Edge[Direct, color=black, lw=1pt](C)(D)
    \Edge[Direct, color=black, lw=1pt](D)(E)
    \Edge[Direct, color=black, lw=1pt](E)(F)
    \Edge[Direct, color=black, lw=1pt, bend=-20](F)(G)
    \Edge[Direct, color=black, lw=1pt, bend=-20](G)(F)

    \Text[x=0.3, y=8.8]{$G_0$}
    \Text[x=1.5,y=5.7]{$\sum_{v=1}^{7} p_v = 1$}

    \Vertex[x=7, y=8.5, Math, shape=circle, color=black, size=.05, label=p_7, fontscale=1, position=above, distance=-.08cm]{A}
    \Vertex[x=5, y=7.5, Math, shape=circle, color=black, size=.05]{B} 
    \Vertex[x=5, y=6.5, Math, shape=circle, color=black, size=.05]{C}
    \Vertex[x=6, y=6.5, Math, shape=circle, color=black, size=.05]{D}
    \Vertex[x=7, y=6.5, Math, shape=circle, color=black, size=.05, label=p_4, fontscale=1, position=below, distance=-.08cm]{E}
    \Vertex[x=8, y=6.5, Math, shape=circle, color=black, size=.05, label=x_1, fontscale=1, position=below, distance=-.08cm]{F}
    \Vertex[x=8, y=8.5, Math, shape=circle, color=black, size=.05, label=p_2, fontscale=1, position=above, distance=-.08cm]{G}
        
    \Edge[Direct, color=black, lw=1pt](A)(G)
    \Edge[Direct, color=black, lw=1pt](B)(C)
    \Edge[Direct, color=black, lw=1pt](C)(D)
    \Edge[Direct, color=black, lw=1pt](D)(E)
    \Edge[Direct, color=black, lw=1pt](E)(F)
    \Edge[Direct, color=black, lw=1pt, bend=-20](F)(G)
    \Edge[Direct, color=black, lw=1pt, bend=-20](G)(F)

    \Text[x=5.3, y=8.8]{$G_1$}
    \Text[x=6.5,y=5.7]{$p_2+p_4+p_7+x_1 \leq 1$}

    \Vertex[x=11, y=8.5, Math, shape=circle, color=black, size=.05]{A}
    \Vertex[x=12, y=8.5, Math, shape=circle, color=black, size=.05, label=p_6, fontscale=1, position=above, distance=-.08cm]{B} 
    \Vertex[x=10, y=6.5, Math, shape=circle, color=black, size=.05]{C}
    \Vertex[x=11, y=6.5, Math, shape=circle, color=black, size=.05]{D}
    \Vertex[x=12, y=6.5, Math, shape=circle, color=black, size=.05, label=p_5, fontscale=1, position=below, distance=-.08cm]{E}
    \Vertex[x=13, y=6.5, Math, shape=circle, color=black, size=.05, label=x_2, fontscale=1, position=below, distance=-.08cm]{F}
    \Vertex[x=13, y=8.5, Math, shape=circle, color=black, size=.05, label=x_1, fontscale=1, position=above, distance=-.08cm]{G}
        
    \Edge[Direct, color=black, lw=1pt](A)(B)
    \Edge[Direct, color=black, lw=1pt](B)(G)
    \Edge[Direct, color=black, lw=1pt](C)(D)
    \Edge[Direct, color=black, lw=1pt](D)(E)
    \Edge[Direct, color=black, lw=1pt](E)(F)
    \Edge[Direct, color=black, lw=1pt, bend=-20](F)(G)
    \Edge[Direct, color=black, lw=1pt, bend=-20](G)(F)

    \Text[x=10.3, y=8.8]{$G_2$}
    \Text[x=11.5, y=5.7]{$p_5+p_6+x_1+x_2 \leq 1$}
    \end{tikzpicture}
    \caption{Set of graphs that yield $\alpha\leq 76/105$ for any impartial $\alpha$-optimal mechanism on $\calG_7$. Impartiality forces the probability assignment shown in the figure. The equality and inequalities come from the fact that probabilities sum up to 1 in each graph or from imposing $\alpha$-optimality. Combining all the expressions yields the bound stated in \autoref{thm:ub-plurality} for $n=7$.}    \label{fig:thm-ub-plurality}
\end{figure}
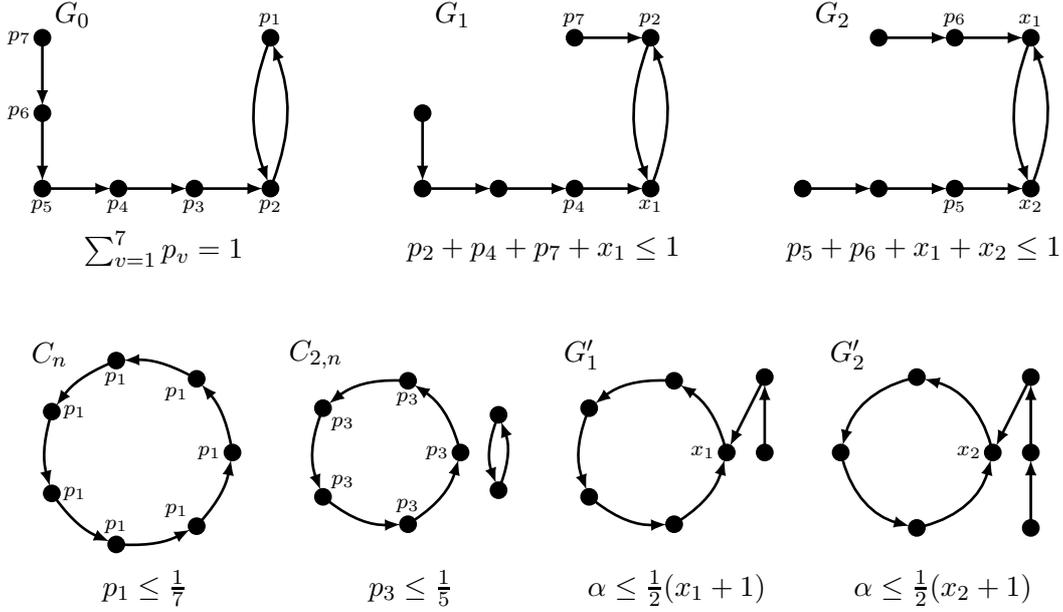

    We finally claim that for every $i\in \{1,\ldots,n'\}$,
    \begin{equation}\label{impartiality-p_i}
        f_3(G_i)=p_{n-i+1} \quad\text{and}\quad f_{i+3}(G_i)) = p_{i+3}.
    \end{equation}
    For the first equality,     
    observe that for $i\in\{1,\ldots,n'\}$,
    \begin{align*}
        E_i \setminus (\{3\} \times V)& =  \{(v+1,v): v \in \{1,\ldots,n-1\}\setminus \{2,i+2\} \} \cup \{(1,2),(i+3,2)\},\\
        E_0 \setminus (\{n-i+1\} \times V) &= \{(v+1,v): v \in \{1,\ldots,n-1\}\setminus \{n-i\} \} \cup \{(1,2)\}.
    \end{align*}
    Denoting by $\pi$ the permutation with
    \[
        \pi_v = \left\{ \begin{array}{ll}
            v & \text{if $v\in \{1,2\}$},\\
            v+n-i-2 & \text{if $v\in \{3,\ldots,i+2\}$}, \\
            v-i &  \text{otherwise,}
             \end{array}\right.
    \]
    we have that $(E_i \setminus (\{3\} \times V))_{\pi} = E_0 \setminus (\{n-i+1\} \times V)$. Since $\pi_3=n-i+1$, and by impartiality, $f_3(G_i)=p_{n-i+1}$.
    For the second equality %
    observe that for $i\in \{1,\ldots,n'\}$,
    \begin{align*}
        E_i \setminus (\{i+3\} \times V)& =  \{(v+1,v): v \in \{1,\ldots,n-1\}\setminus \{2,i+2\} \} \cup \{(1,2), (3,1)\},\\
        E_0 \setminus (\{i+3\} \times V) &= \{(v+1,v): v \in \{1,\ldots,n-1\}\setminus \{i+2\} \} \cup \{(1,2)\}.
    \end{align*}
    Denoting by $\pi$ the permutation with 
    \[
        \pi_v = \left\{ \begin{array}{ll}
            2 & \text{if $v=1$},\\
            1 &  \text{if $v=2$}, \\
            v &  \text{otherwise,}
             \end{array}\right.
    \]
    we have that $(E_i \setminus (\{i+3\} \times V))_{\pi} = E_0 \setminus (\{i+3\} \times V)$. 
    Since $\pi_{i+3}=\pi_{i+3}$, and by impartiality, $f_{i+3}(G_i)=p_{i+3}$.

    By combining what we have shown above with the fact that selection probabilities in each graph sum to~$1$, we obtain
    \begin{align}
        p_2+\sum_{v=4}^n p_v & \geq \frac{n^2-4n+2}{n(n-2)},\label{eq:G_0}\\
        p_2+p_4+p_n+x_1 & \leq 1,\label{eq:G_1}\\
        p_{n-i+1}+p_{i+3}+x_{i-1}+x_i & \leq 1\quad \text{ for all } i\in \left\{2,\ldots n' \right\}. \label{eq:G_i}
    \end{align}
    Indeed, \eqref{eq:G_0} holds because probabilities for $G_0$ sum to $1$, together with \eqref{eq:ub_p1} and \eqref{eq:ub_p3}; \eqref{eq:G_1} holds because probabilities for $G_1$ sum to at most $1$, together with \eqref{impartiality-12} and \eqref{impartiality-p_i}; \eqref{eq:G_i} holds because probabilities for $G_i$ with $i\in\{2,\ldots,n'\}$ sum to at most $1$, together with \eqref{impartiality-12} and \eqref{impartiality-p_i}.
    
    If $n$ is odd, then $n'=(n-3)/2$ and $n'+4=(n+5)/2=n-n'+1$.
    Therefore, $p_v$ for each $v\in \{2\} \cup \{4,\ldots,n\}$ appears exactly once in \eqref{eq:G_1} and \eqref{eq:G_i}. Adding up \eqref{eq:G_1} and \eqref{eq:G_i} for $i\in\{2,\ldots,n'\}$ yields
    \begin{equation*}
        p_2 + \sum_{v=4}^{n'}p_v + 2\sum_{i=1}^{n'-1}x_i + x_{n'} \leq n',
    \end{equation*}
    which together with \eqref{eq:G_0} implies
    \begin{equation}\label{eq:ub-x-n-odd}
        2\sum_{i=1}^{n'-1}x_i + x_{n'}  \leq \frac{n-3}{2} - \frac{n^2-4n+2}{n(n-2)}.
    \end{equation}
    
    If $n$ is even, then $n' = n/2-1$ and $n'+3 = n/2+2 = n-n'+1$, $p_{n/2+2}$ appears twice in \eqref{eq:G_i} for $i=n'$, and $p_v$ for each $v\in \{2\} \cup \{4,\ldots,n\}\setminus \{n/2+2\}$ appears exactly once in \eqref{eq:G_1} and \eqref{eq:G_i}.
    Adding \eqref{eq:G_1} multiplied by $2$, \eqref{eq:G_i} multiplied by $2$ for $i\in \{2,\ldots,n'-1\}$, and \eqref{eq:G_i} for $i=n'$ yields
    \begin{equation}
        2p_2 + 2\sum_{v=4}^{n'}p_v + 4\sum_{i=1}^{n'-2}x_i + 3x_{n'-1} + x_{n'} \leq 2n'-1,
    \end{equation}
    which together with \eqref{eq:G_0} implies
    \begin{equation}\label{eq:ub-x-n-even}
        4\sum_{i=1}^{n'-2}x_i + 3x_{n'-1} + x_{n'}  \leq n-3 - \frac{2(n^2-4n+2)}{n(n-2)}.
    \end{equation}
    We now claim that
    \begin{equation}
        \alpha \leq \frac{1}{2}\left(\min_{i\in \{1,\ldots,n'\}}x_i+1\right).\label{bound-alpha-x}
    \end{equation}
    To see this, let $i^*\in\arg\min_{\{1,\ldots,n'\}}x_i$ and consider the graph $G'_{i^*}=(V,E'_{i^*})\in \calG_n$ with $E'_{i^*}= (E_{i^*}\setminus \{(2,1)\}) \cup \{(2,n)\}$.
    By impartiality, $f_2(G'_{i^*})=f_2(G_{i^*})=x_{i^*}$.
    Moreover, $\Delta(G'_{i^*})=2$ and $\arg\max_{v\in V} \delta^-(v,G'_{i^*}) = \{2\}$.
    Thus
    \[
        \EE_{v\sim f(G'_{i^*})} [\delta^-(v, G'_{i^*})] \leq 2x_{i^*} + 1\cdot (1-x_{i^*}) = x_{i^*}+1,
    \]
    and $\alpha$-optimality of $f$ implies \eqref{bound-alpha-x}. Graphs $G'_1$ and $G'_2$ for $n=7$ are shown in \autoref{fig:thm-ub-plurality}.

    Define $\mathit{avg}_x:\RR^{n'}_{>0}\to\RR_{>0}$ such that $\mathit{avg}_x(w)$ is the weighted average of $x$ with weights $w$, \ie for all $x,w\in \RR^n_{>0}$,
    \[
        \mathit{avg}_x(w) = \frac{1}{\sum_{i=1}^{n'}w_i} \sum_{i=1}^{n'} w_ix_i.
    \]
    Then, for every $w\in\RR^n_{>0}$, $\min_{i\in \{1,\ldots,n'\}}x_i \leq \mathit{avg}_x(w)$.
    If $n$ is odd, then by \eqref{eq:ub-x-n-odd},
    \[
        \mathit{avg}_x(2,\ldots,2,1) \leq \frac{1}{2n'-1}\left[\frac{n-3}{2} - \frac{n^2-4n+2}{n(n-2)}\right] = \frac{1}{2(n-4)}\left[n-3 - \frac{2(n^2-4n+2)}{n(n-2)} \right].
    \]
    If $n$ is even, then by \eqref{eq:ub-x-n-even},
     \[
        \mathit{avg}_x(4,\ldots,4,3,1) \leq \frac{1}{4n'-4}\left[ n-3 - \frac{2(n^2-4n+2)}{n(n-2)} \right] = \frac{1}{2(n-4)}\left[ n-3 - \frac{2(n^2-4n+2)}{n(n-2)} \right].
    \]
    In both cases, we conclude from \eqref{bound-alpha-x} that
    \[
        \alpha \leq \frac{1}{2}\left(1+\frac{1}{2(n-4)}\left[ n-3 - \frac{2(n^2-4n+2)}{n(n-2)} \right]\right) = \frac{3n^3-19n^2+30n-4}{4n(n-2)(n-4)}.
        \tag*{\raisebox{-.5\baselineskip}{\qedhere}}
    \]
\end{proof}

\appendix

\newpage

\section{Proof of Theorem \ref{thm:ub-permutation}}
\label{app:thm-ub-permutation}

  Let $\Delta\in\NN$ and $\varepsilon>0$. The result holds trivially if $\Delta=1$, so we assume $\Delta\geq 2$ in the following. Our goal will be to construct a graph $G=(V,E)$ such that the permutation mechanism is \emph{not} $(\alpha(\Delta)+\varepsilon)$-optimal on $G$.
    
  Let $n'\in\NN$, to be fixed later,
    and $n=\Delta+1+n'(\lfloor \Delta/2\rfloor +1)$.
    Fix $V=\{1,\ldots,n\}$ and let $E$ be the set
    \begin{align*}
        & \{(v,2v+\Delta+n'-2): v\in \{1,\ldots,n'+1\}\} ~\cup~ \{(v,1): v\in \{n'+2,\ldots,n'+\Delta+1\}\} ~ \cup \\
        & \left\{(u,v)\colon v\in \{2,\ldots,n'+1\}, u\in \left\{n'+\Delta+2+(v-2)\left\lfloor \frac{\Delta}{2}\right\rfloor, \ldots, n'+\Delta+1+(v-1)\left\lfloor \frac{\Delta}{2}\right\rfloor\right\} \right\}.
    \end{align*}
    In words, the graph $G$ consists of $n'+1$ disjoint subgraphs; one subgraph contains vertex~$1$ as well as~$\Delta$ vertices with an outgoing edge directed at vertex~$1$; each of the other $n'$ subgraphs contains a vertex $i\in\{2,\dots,n'+1\}$ as well as $\lfloor\Delta/2\rfloor$ vertices with an outgoing edge directed at vertex~$i$. Vertices~$1$ to~$n'+1$ finally have an outgoing edge directed at the vertex with smallest index among their in-neighbors. An example of~$G$ is shown in \autoref{fig:thm-ub-permutation}.
    
    We note that $\delta^-(1)=\Delta,~\delta^-(v)=\lfloor \Delta/2\rfloor$ for every $v\in \{2,\ldots,n'+1\}$, and $\delta^-(v)=1$ for every $v\in \{n'+2,\ldots,2n'+2\}$. Moreover, $\PP\left[\delta^-_{\pi_{<1}}(1) > \lfloor \Delta/2 \rfloor \right] = (\Delta-\lfloor \Delta/2\rfloor)/(\Delta+1)$, $\PP\left[\delta^-_{\pi_{<1}}(1) \leq \lfloor \Delta/2 \rfloor \right] = (\lfloor \Delta/2\rfloor+1)/(\Delta+1)$, and, by \autoref{lem:max-indegree-left}, $\PP\left[ v^P=1 ~|~ \delta^-_{\pi_{<1}}(1) > \lfloor \Delta/2 \rfloor \right]=1$. The permutation mechanism thus chooses vertex $1$ with probability
    \[
        \PP\left[v^P=1\right] = \frac{\Delta-\lfloor \Delta/2\rfloor}{\Delta+1} + \frac{\lfloor \Delta/2\rfloor+1}{\Delta+1} \PP\left[v^P=1 ~\Bigg|~ \delta^-_{\pi_{<1}}(1) \leq \left\lfloor\frac{\Delta}{2}\right\rfloor \right].
    \]
    We claim that 
    \begin{equation}
        \PP\left[v^P=1 ~\Bigg|~ \delta^-_{\pi_{<1}}(1) \leq \left\lfloor\frac{\Delta}{2}\right\rfloor \right] \leq \left( \frac{2\lfloor \Delta/2 \rfloor +1}{2\lfloor \Delta/2 \rfloor +2}\right)^{n'}\hspace{-1.5ex}.\label{claim:prob-selecting-top}
    \end{equation}
    We will first show that this claim implies the theorem and then prove the claim.

    If~\eqref{claim:prob-selecting-top} holds, then
    \begin{align*}
        \EE[X] & \leq \Delta\PP\left[v^P=1\right] + \left\lfloor \frac{\Delta}{2} \right\rfloor (1-\PP\left[v^P=1\right])\\
        & \leq \left\lfloor \frac{\Delta}{2} \right\rfloor + \left(\Delta - \left\lfloor \frac{\Delta}{2} \right\rfloor \right) \frac{\Delta-\lfloor \Delta/2\rfloor}{\Delta+1} + \left(\Delta - \left\lfloor \frac{\Delta}{2} \right\rfloor \right)\frac{\lfloor \Delta/2\rfloor+1}{\Delta+1} \left( \frac{2\lfloor \Delta/2 \rfloor +1}{2\lfloor \Delta/2 \rfloor +2}\right)^{n'}\hspace{-1.5ex}.
    \end{align*}
    Denote the last term of the sum by $h(n',\Delta)$. 
    Then, if $\Delta$ is even,
    \[
        \frac{\EE[X]}{\Delta} \leq \frac{1}{2} + \frac{\Delta}{4(\Delta+1)} + \frac{1}{\Delta} h(n',\Delta) = \frac{3\Delta+2}{4\Delta+4} + \frac{1}{\Delta} h(n',\Delta) = \alpha(\Delta) + \frac{1}{\Delta} h(n',\Delta).
    \]
    If $\Delta$ is odd,
    \[
        \frac{\EE[X]}{\Delta} \leq \frac{\Delta-1}{2\Delta} + \frac{\Delta+1}{4\Delta} + \frac{1}{\Delta} h(n',\Delta)
        = \frac{3\Delta-1}{4\Delta} + \frac{1}{\Delta} h(n',\Delta) = \alpha(\Delta) + \frac{1}{\Delta} h(n',\Delta).
    \]
    In both cases, the permutation mechanism is \emph{not} $(\alpha(\Delta)+\varepsilon)$-optimal on $G$ if $1/\Delta g(n',\Delta) <\varepsilon$, which happens if and only if
    \[
      \left(\Delta - \left\lfloor \frac{\Delta}{2} \right\rfloor \right)\frac{\lfloor \Delta/2\rfloor+1}{\Delta+1} \left( \frac{2\lfloor \Delta/2 \rfloor +1}{2\lfloor \Delta/2 \rfloor +2}\right)^{n'} <\varepsilon,
    \]
    or, equivalently,
    \[
      n' > \frac{\log\bigl((\Delta-\lfloor \Delta/2 \rfloor)(\lfloor \Delta/2\rfloor +1)\bigr) - \log\bigl((\Delta+1)\varepsilon\bigr)}{\log(2\lfloor \Delta/2 \rfloor +2)-\log(2\lfloor \Delta/2 \rfloor +1)}.
    \]
    The theorem thus follows by choosing $n'$ large enough to satisfy this inequality.

    We now prove~\eqref{claim:prob-selecting-top}.
    Whenever $\delta^-_{\pi_{<1}}(1)\leq \lfloor \Delta/2 \rfloor$ and there exists a vertex $v$ with $1\in \pi_{<v}$ and $\delta^-_{\pi_{<v}}(v)= \lfloor \Delta/2 \rfloor$, the permutation mechanism does not select vertex~$1$. Thus
    \begin{align*}
        \PP\left[v^P=1 ~\Bigg|~ \delta^-_{\pi_{<1}}(1) \leq \left\lfloor\frac{\Delta}{2}\right\rfloor \right] & \leq \PP\left[ \bigcap_{v=2}^{n'+1} \left[ \delta^-_{\pi_{<v}}(v) < \left\lfloor \frac{\Delta}{2}\right\rfloor ~\vee~ v\in \pi_{<1} \right] ~\Bigg|~ \delta^-_{\pi_{<1}}(1) \leq \left\lfloor\frac{\Delta}{2}\right\rfloor \right]\\
        & = \left(1- \PP\left[ \delta^-_{\pi_{<2}}(2) = \left\lfloor \frac{\Delta}{2}\right\rfloor ~\wedge~ 1\in \pi_{<2} ~\Bigg|~ \delta^-_{\pi_{<1}}(1) \leq \left\lfloor\frac{\Delta}{2}\right\rfloor \right] \right)^{n'}\\
        & = \left(1- \frac{1}{\lfloor \Delta/2 \rfloor +1}\PP\left[ 1\in \pi_{<2} ~\Bigg|~ \delta^-_{\pi_{<1}}(1) \leq \left\lfloor\frac{\Delta}{2}\right\rfloor = \delta^-_{\pi_{<2}}(2) \right] \right)^{n'}\hspace{-1.5ex},
    \end{align*}
    where we have used the fact that the indegrees from the left of vertices $1,2,\ldots,n'+1$ are independent random variables, that the indegrees from the left of vertices $2,\ldots,n'+1$ are, in addition, identically distributed, and that $\PP\left[ \delta^-_{\pi_{<2}}(2) = \lfloor \Delta/2 \rfloor \right] = 1/ \left( \lfloor \Delta/2 \rfloor +1 \right)$.
    We now claim that
    \begin{equation}
        \PP\left[ \delta^-_{\pi_{<1}}(1) \leq \left\lfloor\frac{\Delta}{2}\right\rfloor = \delta^-_{\pi_{<2}}(2) ~\Bigg|~ 1\in \pi_{<2} \right] \geq \PP\left[ \delta^-_{\pi_{<1}}(1) \leq \left\lfloor\frac{\Delta}{2}\right\rfloor = \delta^-_{\pi_{<2}}(2) ~\Bigg|~ 2\in \pi_{<1} \right].\label{claim:cond-prob-first-vertex}
    \end{equation}
    This claim, together with Bayes' Theorem and the fact that $\PP\left[ 1\in \pi_{<2} \right] = \PP\left[ 2\in \pi_{<1} \right] = 1/2$, implies that $\PP\left[ 1\in \pi_{<2} ~|~ \delta^-_{\pi_{<1}}(1) \leq \lfloor\Delta/2\rfloor = \delta^-_{\pi_{<2}}(2) \right] \geq 1/2$,
    and substituting this term in the previous expression yields
    \[
        \PP\left[v^P=1 ~\Bigg|~ \delta^-_{\pi_{<1}}(1) \leq \left\lfloor\frac{\Delta}{2}\right\rfloor \right] \leq \left(1- \frac{1}{2(\lfloor \Delta/2 \rfloor +1)} \right)^{n'} = \left( \frac{2\lfloor \Delta/2 \rfloor +1}{2\lfloor \Delta/2 \rfloor +2}\right)^{n'}\hspace{-1.5ex},
    \]
    which is identical to~\eqref{claim:prob-selecting-top}.
    
    It remains to prove \eqref{claim:cond-prob-first-vertex}. For this, we use a similar approach as in the proof of \autoref{lem:correlation-neighbors}.
    We first note that
    \begin{align*}
        \PP\left[ \delta^-_{\pi_{<1}}(1) \leq \left\lfloor\frac{\Delta}{2}\right\rfloor = \delta^-_{\pi_{<2}}(2) ~\Bigg|~ 1\in \pi_{<2} \right] = \frac{\PP\left[ \delta^-_{\pi_{<1}}(1) \leq \lfloor\Delta/2 \rfloor = \delta^-_{\pi_{<2}}(2) ~\wedge~ 1\in \pi_{<2} \right]}{\PP\left[ 1\in \pi_{<2}\right]}
      \intertext{and}
        \PP\left[ \delta^-_{\pi_{<1}}(1) \leq \left\lfloor\frac{\Delta}{2}\right\rfloor = \delta^-_{\pi_{<2}}(2) ~\Bigg|~ 2\in \pi_{<1} \right] = \frac{\PP\left[ \delta^-_{\pi_{<1}}(1) \leq \lfloor\Delta/2 \rfloor = \delta^-_{\pi_{<2}}(2)  ~\wedge~ 2\in \pi_{<1} \right]}{\PP\left[ 2\in \pi_{<1}\right]}.
    \end{align*}
    Since $\PP\left[ 1\in \pi_{<2}\right] = \PP\left[ 2\in \pi_{<1}\right] = 1/2$, it suffices to show the inequality for the numerators.
    Let
    \begin{align*}
        \calP_{1<2}(n) & := \left\{ \pi\in \calP(n): \delta^-_{\pi_{<1}}(1) \leq \lfloor\Delta/2 \rfloor = \delta^-_{\pi_{<2}}(2) ~\wedge~ 1\in \pi_{<2} \right\} \text{ and} \\
        \calP_{2<1}(n) & := \left\{ \pi\in \calP(n): \delta^-_{\pi_{<1}}(1) \leq \lfloor\Delta/2 \rfloor = \delta^-_{\pi_{<2}}(2) ~\wedge~ 2\in \pi_{<1} \right\},
    \end{align*}
    and observe that since~$\pi$ is chosen uniformly at random it is enough to show that $|\calP_{1<2}(n)| \geq |\calP_{2<1}(n)|$. We do so by constructing an injective function $g:\calP_{2<1}(n) \to \calP_{1<2}(n)$.

    For each $\pi\in\calP(n)$, let $i(1),i(2)\in \{1,\ldots,n\}$ be such that $\pi_{i(1)} = 1$ and $\pi_{i(2)} = 2$, \ie $i(j)$ is the position of vertex $j$ in the permutation $\pi$, for $j\in \{1,2\}$. Now define $g$ such that for all $\pi \in \calP_{2<1}(n)$, $g(\pi) = \pi^{i(1),i(2)}$.
    This function is injective: by definition of $\pi^{i(1),i(2)}$, $g(\pi)=g(\pi')$ implies $\pi=\pi'$. To show that the codomain of $g$ is $\calP_{1<2}(n)$, consider $\pi\in\calP_{2<1}(n)$.
    Since $2\in\pi_{<1}$, it follows that $1\in(g(\pi))_{<2}$.
    Moreover $(g(\pi))_{<1} = \pi_{<1} \cap \pi_{<2} \subseteq \pi_{<1}$ and thus $\delta^-_{(g(\pi))_{<1}}(1) \leq \delta^-_{\pi_{<1}}(1) \leq \lfloor \Delta/2\rfloor$.
    Similarly, $(g(\pi))_{<2} = \pi_{<2} \cup \pi_{<1} \cup \{1\} \supset \pi_{<2}$ and thus $\delta^-_{(g(\pi))_{<2}}(2) \geq \delta^-_{\pi_{<2}}(2) = \lfloor \Delta/2\rfloor$. Since $\delta^-(2)=\lfloor\Delta/2\rfloor$ the last inequality must hold with equality, which implies that $g(\pi)\in \calP_{1<2}(n)$. This shows \eqref{claim:cond-prob-first-vertex} and completes the proof of the theorem.    

\section{Proof of Lemma \ref{lem:random-dict}}
\label{app:lem-random-dict}

The mechanism is trivially impartial on $\calG$, since for any graph $G=(V,E)\in \calG$, we have $\mathsf{RD}_v(G) = \delta^-(v, G)/n$ for every $v\in V$, and $\delta^-(v, G) = \delta^-(v, G')$ for every $v\in V$ and $G'$ with $G_{-v} = G'_{-v}$.
    To see that the mechanism is $1/2+1/n$-optimal on $G_n$ for $n\in \{2,\ldots,5\}$, let $n\in \NN,~G\in \calG_n$, and observe that 
    \[
        \sum_{v\in V} \delta^-(v) \mathsf{RD}_v(G) = \frac{1}{n} \sum_{v\in V} (\delta^-(v))^2 = \frac{1}{n} \left( \Delta^2 + \sum_{v\in V\setminus v^*}(\delta^-(v))^2 \right).
    \]
    Therefore, the mechanism is $\alpha_n$-optimal on $\calG_n$ for 
    \[
        \alpha_n = \min_{\substack{G\in \calG_n:\\ \Delta(G)>0}} \frac{1}{n}\left( \Delta(G) + \frac{1}{\Delta(G)}\sum_{v\in V\setminus v^*(G)}(\delta^-(v,G))^2 \right) = \min_{\Delta\in \{1,\ldots,n-1\}} \frac{1}{n}\left( \Delta + \frac{n-\Delta}{\Delta} \right),
    \]
    where we observed that the minimum is reached on graphs where all but one vertex have indegree $0$ or $1$.
    For constant $n$, the expression on the right is convex in $\Delta$ and reaches its minimum for $\Delta=\sqrt{n}$. 
    For $n\in \{2,3,4,5\}$, it is easy to see that the minimum taken over natural values of $\Delta$ is reached for $\Delta=2$, which implies $\alpha_n = 1/2+1/n$ and concludes the proof of the lemma.

\section{Proof of Lemma \ref{lem:tight-instances-permutation}}
\label{app:lem-tight-instances-permutation}

Let $G=(V,E)\in G$ be an arbitrary graph.
    If $\Delta\not \in \{2,3\}$, we know from \autoref{thm:lb-permutation} that the permutation mechanism is $\alpha$-optimal on $G$ with $\alpha\geq 7/10 > 31/45$, so we assume $\Delta\in \{2,3\}$ in what follows.
    For $v\in T(G)$, we let $\smash{b_v=\delta^-_{T(G)}(v)}$ denote the indegree of each vertex in $T(G)$ from other maximum-indegree vertices and $v^R$ the right-most vertex among this subset in the permutation $\pi$ taken uniformly at random, \ie $v^R\in T(G)$ such that $v\in \pi_{<v^R}$ for every $v\in T(G)\setminus\{v^R\}$.
    Then,
    \begin{align*}
        \EE\left[\delta^-_{\pi_{<v^R}}(v^R)\right] & = \frac{1}{|T(G)|}\sum_{v\in T(G)} \left(b_v + \sum_{u\in N^-(v)\setminus T(G)} \PP[u\in \pi_{<v}] \right)\\
        & = \frac{1}{|T(G)|}\sum_{v\in T(G)} \left(b_v + \frac{|T(G)|}{|T(G)|+1}\left(\Delta-b_v\right) \right)\\
        & = \frac{|T(G)|}{|T(G)|+1}\Delta + \frac{1}{|T(G)|(|T(G)|-1)} \sum_{v\in T(G)}b_v,
    \end{align*}
    where we used that the probability of any vertex not in $T(G)$ being before the last vertex of $T(G)$ in the permutation is $|T(G)|/(|T(G)|+1)$.
    If $|T(G)|\geq 3$, this yields $\EE\left[\delta^-_{\pi_{<v^R}}(v^R)\right] / \Delta \geq 3/4$ and thus $3/4$-optimality of the permutation mechanism due to \autoref{lem:max-indegree-left}.
    If  $|T(G)|=2$ and $b_v=1$ for some $v\in T(G)$, we get  $\EE\left[\delta^-_{\pi_{<v^R}}(v^R)\right] / \Delta \geq 3/4>31/45$ if $\Delta=2$, and $\EE\left[\delta^-_{\pi_{<v^R}}(v^R)\right] / \Delta \geq 13/18>31/45$ if $\Delta=3$, implying these values as approximation guarantees for the permutation mechanism.
    If $\Delta=2,~ |T(G)|=2$, and $b_v=0$ for both vertices $v\in T(G)$, the indegrees of these vertices from the left are independent random variables.
    Denoting $T(G)=\{v^*_1,v^*_2\}$ and borrowing some notation from \autoref{sec:lower-bound} to denote as $X$ the random variable corresponding to the indegree of the vertex selected by the permutation mechanism, this implies
    \begin{align*}
        \frac{1}{2}\EE[X] & \geq \PP\left[ \max_{v\in T(G)}\delta^-_{\pi_{<v}}(v)=2\right] + \frac{1}{2}\PP\left[ \max_{v\in T(G)}\delta^-_{\pi_{<v}}(v)\leq 1\right]\\
        &  =1 - \frac{1}{2}\PP\left[ \delta^-_{\pi_{<v^*_1}}(v^*_1) \leq 1 \wedge \delta^-_{\pi_{<v^*_2}}(v^*_2) \leq 1  \right] = 1- \frac{1}{2}\left(\frac{2}{3}\right)^2 = \frac{7}{9} > \frac{31}{45},
    \end{align*}
    where the first equality uses \autoref{lem:max-indegree-left} to ensure that a vertex with strictly positive indegree is always selected.
    Similarly, if $\Delta=3,~ T(G)=\{v^*_1,v^*_2\}$, $b_v=0$ for both vertices $v\in T(G)$ and $\delta^-(v)\leq 1$ for every other vertex $v\not\in T(G)$, we obtain
    \begin{align*}
        \frac{1}{3}\EE[X] & \geq \PP\left[ \max_{v\in T(G)}\delta^-_{\pi_{<v}} \geq 2\right] +  \frac{1}{3} \PP\left[ \max_{v\in T(G)}\delta^-_{\pi_{<v}}\leq 1\right]\\
        &  \geq 1 - \frac{2}{3}\PP\left[ \delta^-_{\pi_{<v^*_1}}(v^*_1) \leq 1 \wedge \delta^-_{\pi_{<v^*_2}}(v^*_2) \leq 1  \right] = 1-\frac{2}{3} \left(\frac{1}{2}\right)^2 = \frac{5}{6} > \frac{31}{45}.
    \end{align*}

    We finally analyze the case when $\Delta=3$ and there exists $\bar{v}\in V$ with $\delta^-(\bar{v})=2$.
    We know from the proof of \autoref{thm:lb-permutation} that
    \begin{equation}\label{eq:bound-EE-X-Bi-Ai}
        \EE[X] \geq \frac{1}{\Delta+1}\left( \Delta^2 - \sum_{i=1}^{\Delta}(\Delta-2i)\PP\left[ B_i ~|~ A_i \right] \right) = \frac{1}{4}\left( 9- \sum_{i=1}^{3}(3-2i)\PP\left[ B_i ~|~ A_i \right] \right),
    \end{equation}
    where, for each $i\in \{0,\ldots,3\}$, $A_i = \left[\delta^-_{\pi<{v^*}}(v^*) = i\right]$ and $B_i = \bigcup_{v\in V\setminus \{v^*\}}\left[\delta^-_{\pi_{<v}}(v)\geq i \right]$.
    We claim that 
    \[
        \PP\left[\delta^-_{\bar{v}}(\bar{v}) = 2 ~|~ A_2\right] = \left\{ \begin{array}{ll}
             1/3 &  \text{ if } E \cap \{(\bar{v},v^*),(v^*,\bar{v})\} = \emptyset, \\
             3/10 &  \text{ if } E \cap \{(\bar{v},v^*),(v^*,\bar{v})\} = \{(v^*,\bar{v})\}, \\
             14/45 &  \text{ if } E \cap \{(\bar{v},v^*),(v^*,\bar{v})\} = \{(\bar{v},v^*)\}, \\
             4/15 &  \text{ otherwise.}
             \end{array}
   \right.
    \]
    The case $E \cap \{(\bar{v},v^*),(v^*,\bar{v})\} = \emptyset$ is straighforward, since in such situation the indegrees from the left of $v^*$ and $\bar{v}$ are independent. 
    For the other cases, we first fix the permutation $\pi(\{v^*\} \cup N^-(v^*))$ since conditioning on $B_2$ is equivalent to conditioning on $v^*$ being on the third position of this permutation.
    If $E \cap \{(\bar{v},v^*),(v^*,\bar{v})\} = \{(v^*,\bar{v})\}$, in order to have $\delta^-_{\bar{v}}(\bar{v}) = 2$ we first need that $\bar{v}$ is assigned to a position in the permutation either between $v^*$ and $\pi_4(\{v^*\} \cup N^-(v^*))$ or after $\pi_4(\{v^*\} \cup N^-(v^*))$, both events having probability $1/5$.
    We then need that the other inneighbor of $\bar{v}$, which we denote as $u$, is assigned before $\bar{v}$, which happens with probability $4/6$ if $\bar{v}$ was assigned between $v^*$ and $\pi_4(\{v^*\} \cup N^-(v^*))$ and with probability $5/6$ if it was assigned after $\pi_4(\{v^*\} \cup N^-(v^*))$.
    Putting all together, we obtain
    \begin{align*}
        \PP\left[\delta^-_{\bar{v}}(\bar{v}) = 2 ~|~ A_2\right] & = \PP\left[ u\in \pi_{<\bar{v}} \wedge \pi_3(\{v^*\} \cup N^-(v^*)) \in \pi_{<\bar{v}} \right]\\
        & = \PP\left[\pi_3(\{v^*\} \cup N^-(v^*)) \in \pi_{<\bar{v}}\right] \PP\left[ u\in \pi_{<\bar{v}} ~|~ \pi_3(\{v^*\} \cup N^-(v^*)) \in \pi_{<\bar{v}} \right]\\
        & = \frac{1}{5}\cdot \frac{4}{6} + \frac{1}{5}\cdot \frac{5}{6} = \frac{3}{10},
    \end{align*}
    where for the last equality we split the event $[\pi_3(\{v^*\} \cup N^-(v^*)) \in \pi_{<\bar{v}}]$ into the disjoint events
    \[
        \left[\pi_i(\{v^*\} \cup N^-(v^*)) \in \pi_{<\bar{v}} \text{ and } \bar{v} \in \pi_{<\pi_{i+1}(\{v^*\} \cup N^-(v^*))}\right]
    \]
    for $i\in \{3,4\}$, and computed the probability $\PP\left[ u\in \pi_{<\bar{v}}\right]$ conditioning on this events.
    If $E \cap \{(\bar{v},v^*),(v^*,\bar{v})\} = \{(\bar{v},v^*)\}$, in order to have $\delta^-_{\bar{v}}(\bar{v}) = 2$ we can have either $\bar{v}=\pi_1(\{v^*\} \cup N^-(v^*)),~ \bar{v}=\pi_2(\{v^*\} \cup N^-(v^*))$, or $\bar{v}=\pi_4(\{v^*\} \cup N^-(v^*))$, all these events with a probability of $1/3$.
    It then holds $\delta^-_{\bar{v}}(\bar{v}) = 2$ if both in-neighbors of $\bar{v}$, which we denote $u_1$ and $u_2$, are assigned before $\bar{v}$ in the permutation, which occurs with probability $1/5 \cdot 2/6$ in the first case, $2/5 \cdot 3/6$ in the second case, and $4/5 \cdot 5/6$ in the third case.
    We obtain
    \begin{align*}
        \PP\left[\delta^-_{\bar{v}}(\bar{v}) = 2 ~|~ A_2\right] & = \frac{1}{3} \sum_{i\in \{1,2,4\}} \PP\left[ u_1\in \pi_{<\pi_i(\{v^*\} \cup N^-(v^*))} \wedge u_2\in \pi_{<\pi_i(\{v^*\} \cup N^-(v^*))} \right]\\
        & = \frac{1}{3} \sum_{i\in \{1,2,4\}} \frac{i}{5} \cdot \frac{i+1}{6} = \frac{1}{3} \left(\frac{1}{5} \cdot \frac{2}{6} + \frac{2}{5} \cdot \frac{3}{6} + \frac{4}{5} \cdot \frac{5}{6}\right) = \frac{1}{3}\cdot \frac{14}{15} = \frac{14}{45}.
    \end{align*}
    Finally, if $\{(\bar{v},v^*),(v^*,\bar{v})\} \subseteq E$ then $\delta^-_{\bar{v}}(\bar{v}) = 2$ if and only if $\bar{v}=\pi_4(\{v^*\} \cup N^-(v^*))$, which occurs with probability $1/3$ given $v^* = \pi_3(\{v^*\} \cup N^-(v^*))$, and $\bar{v}$'s other in-neighbor, which we denote again as $u$, is assigned before $\bar{v}$ in the permutation, which occurs with probability $4/5$ conditioned on the former event.
    This yields
    \begin{align*}
       \PP\left[\delta^-_{\bar{v}}(\bar{v}) = 2 ~|~ A_2\right] & = \PP\left[ \bar{v}=\pi_4(\{v^*\} \cup N^-(v^*)) \wedge u\in \pi_{<\bar{v}} ~|~ A_2 \right] \\
       & = \PP\left[ \bar{v}=\pi_4(\{v^*\} \cup N^-(v^*)) ~|~ A_2 \right] \PP\left[ u\in \pi_{<\bar{v}} ~|~ \bar{v}=\pi_4(\{v^*\} \cup N^-(v^*)) \right]\\
       & = \frac{1}{3} \cdot \frac{4}{5} = \frac{4}{15}.
    \end{align*}
    This concludes the proof of the claim, which overall implies $\PP[B_2 ~|~ A_2] \geq \PP\left[\delta^-_{\bar{v}}(\bar{v}) = 2 ~|~ A_2\right] \geq 4/15$.
    Replacing this in \eqref{eq:bound-EE-X-Bi-Ai} yields
    \[
        \EE[X] \geq \frac{1}{4} \left( 9- \PP[B_1 ~|~ A_1] + \PP[B_2 ~|~ A_2] \right) \geq \frac{1}{4} \left( 9- 1 + \frac{4}{15} \right) = \frac{31}{15},
    \]
    so the permutation mechanism is $31/45$-optimal on $G$.
    This concludes the proof of the lemma.
    
\section{Proof of Lemma \ref{lem:prugd}}
\label{app:lem-prugd}

For a graph $G\in \calG_n$ and a permutation $\pi\in \calP(n)$,
we let $p(\pi, G)$ denote the vector $p(\pi)$ as defined in \autoref{alg:pwru-gap} with $G$ as input graph, and we further denote as $v^F(\pi,G)$ and $v^S(\pi,G)$ the vertices $v^F(\pi)$ and $v^S(\pi)$ defined in \autoref{alg:pwru-gap} with $G$ as input graph.
As usual, we omit the dependence on $G$ when the graph is clear from the context.
Let $C(\pi,G)$ denote the event $$[\delta^-(v^F(\pi),G) \geq \delta^-(v,G_{-v^F(\pi)})+2 \text{ for every } v\in V\setminus \{v^F(\pi)\}],$$ and $C(G)=[C(\pi,G) \text{ holds for every } \pi\in \calP(n)]$.
The following lemma states the natural property that $C(\pi,G)$ is independent of $\pi$ and characterizes the probabilities assigned by the mechanism with input $G$ when the event holds for this graph.

\begin{lemma}
\label{lem:prug-gap}
    Let $n\in \NN,~ G\in \calG_n$ be arbitrary.
    Then, $C(\pi,G)$ holds for some $\pi\in \calP(n)$ if and only if $C(G)$ holds.
    Moreover, if $C(G)$ holds, then $T(G)=\{v^*\},~ |\{v\in V:\delta^-(v)=\Delta-1\}|\leq 1$, and for $\pi\in \calP(n)$, it holds $p_{v^*}(\pi,G)=3/4$ and
    \[
        p_{v^S(\pi,G)}(\pi,G) = \left\{ \begin{array}{ll} 
        1/2 & \text{if } \delta^-\left(v^S(\pi,G)\right) =\Delta-1,~ \left(v^S(\pi,G),v^*\right)\in E \text{ and } \pi_{v^S(\pi,G)} > \pi_{v^*}, \\[1ex]
        0 & \text{otherwise.}
        \end{array} \right.
    \]
\end{lemma}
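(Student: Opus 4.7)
The plan is to exploit the simple observation that removing a single edge can decrease any vertex's indegree by at most one. This will immediately force a lot of structure on $G$ whenever $C(\pi,G)$ holds, and from that structure both the iff-statement and the probability formulas fall out directly by reading off the algorithm.

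First I would show that if $C(\pi,G)$ holds for some $\pi\in\calP(n)$, then $T(G)=\{v^F(\pi)\}$. Since $v^F(\pi)$ has exactly one outgoing edge, for every $v\in V$ we have $\delta^-(v,G_{-v^F(\pi)}) \geq \delta^-(v)-1$; combining this with the defining inequality of $C(\pi,G)$ yields $\delta^-(v^F(\pi)) \geq \delta^-(v)+1$ for every $v\neq v^F(\pi)$, so $v^F(\pi)$ is the unique maximum-indegree vertex and thus $v^F(\pi)=v^*$.

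Next I would bound the number of near-maximum-indegree vertices. If $v\neq v^*$ satisfies $\delta^-(v)=\Delta-1$, then $C(\pi,G)$ gives $\delta^-(v,G_{-v^*}) \leq \Delta-2 = \delta^-(v)-1$, which forces $(v^*,v)\in E$. Since $v^*$ has only one outgoing edge, at most one such $v$ can exist. With $T(G)=\{v^*\}$, the lexicographic argmax $v^F(\pi')$ equals $v^*$ for every permutation $\pi'$, so the inequality defining $C(\pi',G)$ is precisely the one verified for $\pi$ and therefore holds for all $\pi'$; hence $C(G)$ holds. The converse direction of the iff is immediate from the definition of $C(G)$.

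Finally, when $C(G)$ holds, the algorithm enters its first if-branch and sets $p_{v^F(\pi)}(\pi)=3/4$, giving $p_{v^*}(\pi,G)=3/4$. For the runner-up, note that $v^S(\pi,G)\neq v^*$ cannot have $\delta^-(v^S(\pi,G))=\Delta$ because $T(G)=\{v^*\}$, so the algorithm's condition for assigning probability $1/2$ to $v^S(\pi,G)$ collapses to $(v^S(\pi,G),v^*)\in E$ together with $\delta^-(v^S(\pi,G))=\Delta-1$ and $\pi_{v^S(\pi,G)}>\pi_{v^*}$, exactly as stated; otherwise the assignment remains at its initialization value of $0$. The main obstacle, if there is one, is conceptual rather than technical --- recognizing that the one-edge-removal inequality forces uniqueness of the maximum-indegree vertex --- after which everything is bookkeeping against the pseudocode of \autoref{alg:pwru-gap}.
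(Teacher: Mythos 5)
Your proposal is correct and follows essentially the same route as the paper's proof: the key step in both is that removing the single outgoing edge of $v^F(\pi)$ lowers any indegree by at most one, so $C(\pi,G)$ forces $\delta^-(v^F(\pi))\geq\delta^-(v)+1$ for all other $v$, whence $T(G)=\{v^*\}$, $v^F(\pi')=v^*$ for every $\pi'$, at most one vertex of indegree $\Delta-1$ (necessarily the out-neighbor of $v^*$), and the probability formulas then read off the pseudocode. Your writeup even states the out-neighbor condition $(v^*,v)\in E$ more carefully than the paper's own phrasing.
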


\begin{proof}
    Let $n,~ G=(V,E)$ be as in the statement of the lemma, and let $\bar{\pi}\in \calP(n)$ be such that $\delta^-(v^F(\bar{\pi}),G) \geq \delta^-(v,G_{-v^F(\bar{\pi})})+2$ for every $v\in V\setminus \{v^F(\bar{\pi})\}$.
    This implies $\delta^-(v^F(\bar{\pi}),G)\geq \delta^-(v,G)+1$ for every $v\in V\setminus \{v^F(\bar{\pi})\}$, which yields $T(G)=\{v^*\}$ and $v^F(\pi)=v^*$ for every $\pi\in \calP(n)$.
    Furthermore, since $\delta^-(v^*,G) \geq \delta^-(v,G_{-v^F(\bar{\pi})})+2$ for every $v\in V\setminus \{v^F(\bar{\pi})\}$, if there is $v\in V$ with $\delta^-(v,G)=\Delta(G)-1$ it necessarily holds $v\in N^-(v^*,G)$, which implies both $|\{v\in V:\delta^-(v,G)=\Delta(G)-1\}|\leq 1$ and $\delta^-(v^F(\pi),G) \geq \delta^-(v,G_{-v^F(\pi)})+2$ for every $\pi\in \calP(n),~ v\in V\setminus \{v^F(\pi)\}$.
    Therefore, $C(\pi,G)$ holds for every $\pi\in \calP(n)$, \ie $C(G)$ holds.
    This directly implies $p_{v^*}(\pi,G)=3/4$ for every $\pi\in \calP(n)$.
    Since every vertex other than $v^*$ has indegree at most $\Delta(G)-1$, in order to have $p_{v^S(\pi,G)}(\pi,G)=1/2$, from \autoref{alg:pwru-gap} we know that we need to have $\delta^-(v^S(\pi,G),G)=\Delta(G)-1,~ (v^S(\pi,G),v^*)\in E$, and $\pi_{v^S(\pi,G)} > \pi_{v^*}$.
    Otherwise, $p_{v^S(\pi,G)}(\pi,G)=0$.
    This concludes the proof of the lemma.
\end{proof}

\autoref{lem:prug-gap} suggests a simple alternative way of computing the probabilities assigned to each vertex when running plurality with runner-up and gap, by replacing probabilities $p_v(\pi,G)=1/2,~ p_v(\pi^R,G)=0$ by $p'_v(\pi,G)=p'_v(\pi^R,G)=1/4$.
This will later simplify the proof for the performance guaranteed by the mechanism.
For $G\in \calG_n,~ \pi\in \calP(n)$, consider $p'(\pi,G)$ defined for every $v\in V$ as
\[
    p'_v(\pi,G) = \left\{ \begin{array}{ll} 
        1/4 & \text{if } C(G) \text{ holds }, \delta^-(v) =\Delta-1, \text{ and } (v,v^*)\in E, \\[1ex]
        p(\pi,G) & \text{otherwise.}
        \end{array} \right.
\]
The following lemma states both that the probability assigned by the mechanism with input graph $G$ to any vertex $v$ is the average of $p_v(\pi,G)$ over $\pi$ and that this same average can be computed using $p'_v(\pi,G)$ as well.

\begin{lemma}
\label{lem:equivalent-probs-prug}
    For every $n\in \NN,~ G\in \calG_n$, it holds
    \[
        \mathsf{PRUG}(G) = \frac{1}{n!} \sum_{\pi\in \calP(n)} p(\pi,G) = \frac{1}{n!} \sum_{\pi\in \calP(n)} p'(\pi,G).
    \]
\end{lemma}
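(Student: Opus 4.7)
The plan is to establish the two equalities separately. The first follows directly from the definition of the mechanism, while the second reduces, via \autoref{lem:prug-gap}, to a single counting observation about the fraction of permutations that place one distinguished vertex to the right of another.

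For the first equality, I would unfold the definition of $\mathsf{PRUG}$ from \autoref{alg:pwru-gap}: a permutation $\bar{\pi}\in\calP(n)$ is drawn uniformly at random, the vector $q=\tfrac{1}{2}\bigl(p(\bar{\pi})+p(\bar{\pi}^R)\bigr)$ is formed, and vertex $v$ is returned with probability $q_v$. Thus, for every $v\in V$,
\[
    \mathsf{PRUG}_v(G) \;=\; \EE_{\bar{\pi}}[q_v] \;=\; \tfrac{1}{2}\,\EE_{\bar{\pi}}[p_v(\bar{\pi},G)] + \tfrac{1}{2}\,\EE_{\bar{\pi}}[p_v(\bar{\pi}^R,G)].
\]
Since $\pi\mapsto\pi^R$ is a bijection on $\calP(n)$, both expectations equal $(n!)^{-1}\sum_{\pi\in\calP(n)} p_v(\pi,G)$, which yields the first equality.

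For the second equality I would argue vertex by vertex that $\sum_{\pi}p_v(\pi,G)=\sum_{\pi}p'_v(\pi,G)$. By the very definition of $p'$, the two values agree whenever $C(G)$ fails or whenever $v$ does not satisfy both $\delta^-(v)=\Delta-1$ and $(v,v^*)\in E$, so I only need to treat a graph $G$ where $C(G)$ holds together with any vertex $\tilde{v}$ fulfilling these two conditions. From \autoref{lem:prug-gap}, such a $\tilde v$ is unique when it exists: $C(G)$ forces $|\{u:\delta^-(u)=\Delta-1\}|\leq 1$ and $T(G)=\{v^*\}$, so $v^F(\pi,G)=v^*$ and $v^S(\pi,G)=\tilde v$ for every~$\pi$. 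The lemma furthermore gives $p_{\tilde v}(\pi,G)=1/2$ precisely when $\pi_{\tilde v}>\pi_{v^*}$, and $p_{\tilde v}(\pi,G)=0$ otherwise.

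The remaining step is a routine counting observation: exactly $n!/2$ permutations $\pi\in\calP(n)$ satisfy $\pi_{\tilde v}>\pi_{v^*}$, whence $\sum_{\pi}p_{\tilde v}(\pi,G)=\tfrac{1}{2}\cdot\tfrac{n!}{2}=\tfrac{n!}{4}$, which matches $\sum_{\pi}p'_{\tilde v}(\pi,G)=n!\cdot\tfrac{1}{4}=\tfrac{n!}{4}$. I do not expect any real obstacle in this proof; the only mildly subtle point is recognizing that \autoref{lem:prug-gap} pins $v^S(\pi,G)$ down to the same vertex $\tilde v$ for every $\pi$, so that the counting can be applied on a single ``interesting'' vertex rather than having to track different second-place vertices across different~$\pi$.
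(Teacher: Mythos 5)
Your proposal is correct and follows essentially the same route as the paper: the first equality via the bijection $\pi\mapsto\pi^R$, and the second by using \autoref{lem:prug-gap} to reduce to the unique vertex $\tilde v$ with $\delta^-(\tilde v)=\Delta-1$ and $(\tilde v,v^*)\in E$, whose total probability mass over all permutations is $n!/4$ under both $p$ and $p'$. The only cosmetic difference is that the paper obtains the value $1/4$ by averaging $p_{\tilde v}(\pi,G)$ with $p_{\tilde v}(\pi^R,G)$ rather than by directly counting the $n!/2$ permutations with $\pi_{\tilde v}>\pi_{v^*}$.
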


\begin{proof}
    For a graph $G\in \calG_n$ and a permutation $\pi\in \calP(n)$, let $q(\pi, G)$ denote the vector $q(\pi)$ as defined in \autoref{alg:pwru-gap} with $G$ as input graph when the permutation sampled in the first step is $\pi$, \ie $q=(p(\pi,G)+p(\pi^R,G))/2$.
    Let $n\in \NN$ and $G=(V,E) \in \calG_n$ be arbitrary.
    To see the first equality, note that
    \[
        \mathsf{PRUG}(G) = \frac{1}{n!} \sum_{\pi\in \calP(n)} q(\pi,G) = \frac{1}{n!} \sum_{\pi\in \calP(n)} \frac{1}{2}\left(p(\pi,G)+p\left(\pi^R,G\right)\right) = \frac{1}{n!} \sum_{\pi\in \calP(n)} p(\pi,G).
    \]
    The second equality follows immediately if $C(G)$ does not hold, or if there is no vertex $v\in V$ with $\delta^-(v)=\Delta-1$, or if $(v,v^*)\not\in E$ for every $v\in V$ with $\delta^-(v)=\Delta-1$.
    We assume, therefore, that $C(G)$ does hold, and that there is a vertex $v\in V$ with $\delta^-(v)=\Delta-1$ and $(v,v^*)\in E$.
    From \autoref{lem:prug-gap}, we know that this is in fact the unique vertex with indegree $\Delta-1$ and thus $v^S(\pi,G)=v$ for every $\pi\in \calP(n)$.
    Moreover, we know that $p_v(\pi,G)=1/2$ if $\pi_v > \pi_{v^*}$ and $p_v(\pi,G)=0$ otherwise, thus $(p(\pi,G)+p(\pi^R,G))/2=1/4= p'_v(\pi,G)$ for every $\pi\in \calP(n)$.
    We obtain
    \[
        \mathsf{PRUG}_v(G) = \frac{1}{n!} \sum_{\pi\in \calP(n)} q(\pi,G) = \frac{1}{n!} \sum_{\pi\in \calP(n)} \frac{1}{2}\left(p(\pi,G)+p\left(\pi^R,G\right)\right) = \frac{1}{n!} \sum_{\pi\in \calP(n)} p'_v(\pi,G),
    \]
    which yields the equality for vertex $v$.
    For every $u\in V\setminus \{v\}$, we know from the definition of $p'(\pi,G)$ that $p'_u(\pi,G)=p_u(\pi,G)$ for every $\pi\in \calP(n)$ and thus the equality holds directly.
    This concludes the proof of the lemma.  
\end{proof}

This last lemma directly implies that plurality with runner-up and gap is indeed an inexact mechanism.
We now state this property together with impartiality.
\begin{lemma}
\label{lem:prug}
    Plurality with runner-up and gap is an impartial inexact mechanism on $\calG$.
\end{lemma}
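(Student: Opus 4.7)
My plan is to verify the two assertions of the lemma in sequence, using \autoref{lem:prug-gap} as the main tool throughout. Recall from that lemma that $C(\pi,G)$ holds for some $\pi$ if and only if $C(G)$ holds (independently of $\pi$), that under $C(G)$ the set of maximum-indegree vertices equals the singleton $\{v^*\}$, and that at most one further vertex has indegree $\Delta-1$.

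To establish that $\mathsf{PRUG}$ is an inexact mechanism I will show $\sum_{v\in V}p_v(\pi,G)+\sum_{v\in V}p_v(\pi^R,G)\le 2$ for every graph $G$ and every $\pi\in\calP(n)$, which yields $\sum_{v}q_v\le 1$ directly. Since only $v^F(\pi)$ and $v^S(\pi)$ receive positive mass in $p(\pi,G)$, with values in $\{1/2,3/4\}$ and $\{0,1/2\}$ respectively, the sum for a single permutation is at most $5/4$. The delicate case is when $\sum_v p_v(\pi,G)=5/4$, which forces both $C(\pi,G)$ and the runner-up condition. By \autoref{lem:prug-gap}, $C(G)$ holds, hence $v^F(\pi^R)=v^*$ and $p_{v^F(\pi^R)}(\pi^R)=3/4$; the same lemma reduces the runner-up condition for $\pi$ to $\delta^-(v^S(\pi))=\Delta-1$, $(v^S(\pi),v^*)\in E$, and $\pi_{v^S(\pi)}>\pi_{v^*}$. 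Uniqueness of the indegree-$(\Delta-1)$ vertex forces $v^S(\pi^R)=v^S(\pi)$, and reversing $\pi$ inverts positions so that $\pi^R_{v^S(\pi^R)}<\pi^R_{v^*}$; the runner-up condition therefore fails for $\pi^R$, giving $\sum_v p_v(\pi^R,G)=3/4$ as required.

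For impartiality I will fix $v\in V$ and graphs $G,G'\in\calG_n$ with $G_{-v}=G'_{-v}$, and show $p_v(\pi,G)=p_v(\pi,G')$ for every $\pi$; averaging over $\pi$ then yields $\mathsf{PRUG}_v(G)=\mathsf{PRUG}_v(G')$. The strategy is to rewrite $p_v(\pi,G)$ as a sum of two $G_{-v}$-measurable terms. First, unpacking the runner-up clause case by case, the event that $v$ receives positive mass through the standard plurality-with-runner-up part of the algorithm is equivalent to ``$v$ is the lexicographic maximum-indegree vertex in $G_{-v}$''; this contributes a term $\tfrac12\,\mathbf{1}[v\text{ lex-max in }G_{-v}]$ that depends only on $G_{-v}$ and $\pi$. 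Second, the extra $1/4$ from the gap clause is added exactly when $v=v^F(\pi,G)$ and $C(\pi,G)$, which by \autoref{lem:prug-gap} simplifies to $v=v^*(G)$ and $C(G)$. Both conditions are insensitive to $v$'s outgoing edge: $C(G)$ reads $\delta^-(v^*,G)\ge\delta^-(u,G_{-v^*})+2$ for all $u\ne v^*$, which only uses $G_{-v^*}$; and the strict gap of $2$ makes the identity of $v^*$ itself robust to modifying $v^*$'s outgoing edge, since no other vertex's indegree can shift by more than one.

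The main obstacle I anticipate is the inexact-mechanism step, specifically the verification that the runner-up condition truly flips under $\pi\mapsto\pi^R$: this relies on the uniqueness assertions in \autoref{lem:prug-gap} both for identifying $v^S(\pi^R)$ with $v^S(\pi)$ and for ensuring that the tiebreak inequality genuinely reverses. Once this is in place, the impartiality argument reduces to bookkeeping enabled by the gap-of-two in $C(G)$.
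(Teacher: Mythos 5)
Your proof is correct, and it leans on \autoref{lem:prug-gap} in the same way the paper does, but the detailed route differs from the paper's appendix proof in both halves. For inexactness, the appendix distinguishes whether $C(G)$ holds and, when it does, bounds the \emph{average} of $\sum_{v}p_v(\pi,G)$ over all $n!$ permutations using the fact that $\pi_{v^S}>\pi_{v^*}$ for exactly half of them; you instead prove the sharper per-realization statement that $\sum_{v}p_v(\pi,G)=5/4$ forces $\sum_{v}p_v(\pi^R,G)=3/4$, which is what the main text announces and is arguably what one really needs to certify that the sampling step of \autoref{alg:pwru-gap} (which requires $\sum_{v}q_v\leq 1$ for \emph{every} realization of $\bar{\pi}$, not just on average) is well defined. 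Your identification of $v^S(\pi^R)$ with $v^S(\pi)$ via the uniqueness of the indegree-$(\Delta-1)$ vertex, and the reversal of the tiebreak under $\pi\mapsto\pi^R$, are both sound. For impartiality, the appendix argues by direct case analysis on the value $p_v(\pi,G)\in\{3/4,1/2\}$, re-deriving in each case that $p_v(\pi,G')$ takes the same value; you instead decompose $p_v(\pi,G)$ into a plurality-with-runner-up term $\tfrac12\,\chi(v\text{ is lex-max in }G_{-v})$ plus a gap bonus $\tfrac14\,\chi(v=v^*(G)\wedge C(G))$ and check that each summand is determined by $G_{-v}$ and $\pi$ alone---precisely the alternative the paper mentions in prose but does not carry out, and your observation that the gap of two makes both the identity of $v^*$ and the condition $C(G)$ insensitive to $v^*$'s outgoing edge is exactly what makes the bonus term $G_{-v}$-measurable. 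The one place where you compress real work is the asserted equivalence between ``$v$ receives mass $1/2$ from the standard clause'' and ``$v$ is the lexicographic maximum in $G_{-v}$'': the nontrivial direction is that a vertex that is lex-max in $G_{-v}$ but not in $G$ must be $v^S(\pi,G)$ with its out-edge pointing at $v^F(\pi,G)$ and must satisfy the degree-and-tiebreak clause, which costs roughly the same casework the paper spends, just packaged differently. That equivalence does hold, so the argument stands.
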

\begin{proof}
    Let $q(\pi,G)$ denote the vector $q$ defined in \autoref{alg:pwru-gap} when the input graph is $G$ and the sampled permutation is $\pi$, \ie $q(\pi,G)=(p(\pi,G)+p(\pi^R,G))/2$.
    We first show that plurality with runner-up and gap is an inexact mechanism on $\calG$, \ie that for every $G\in \calG$ it holds $\sum_{v\in V} \mathsf{PRUG}_v(G) \leq 1$.
    Let $n\in \NN$ and $G=(V,E)\in \calG_n$ be arbitrary.
    From \autoref{lem:prug-gap}, if $C(G)$ does not hold, then for every $\pi\in \calP(n)$ we have $p_{v^F(\pi)}(\pi)=1/2$ and thus $p_{v^F(\pi)}(\pi) + p_{v^S(\pi)}(\pi) \in \{1/2, 1\}, ~p_{v}(\pi)=0$ for every $v\not\in \{v^F(\pi), v^S(\pi)\}$. This directly implies $\sum_{v\in V}p_v(\pi)\leq 1$ for every $\pi\in \calP(n)$ and the result follows directly from \autoref{lem:equivalent-probs-prug}
    If $C(G)$ does hold, then \autoref{lem:equivalent-probs-prug} implies
    \begin{align*}
         \sum_{v\in V} \mathsf{PRUG}^D_v(G) & = \frac{1}{n!} \sum_{\pi\in \calP(n)} \sum_{v\in V} p_v(\pi,G)\\
         & = \frac{1}{n!} \sum_{\pi\in \calP(n)} [p_{v^F(\pi,G)}(\pi,G) + p_{v^S(\pi,G)}(\pi,G) ] \\
         & = \frac{1}{n!} \sum_{\pi\in \calP(n)} \left[\frac{3}{4} + \frac{1}{2}\chi( \delta^-(v^S(\pi,G)) =\Delta-1,~ (v^S,v^*)\in E,~ \pi_{v^S(\pi,G)} > \pi_{v^*} \right]\\
         & \leq \frac{1}{n!} \left( \frac{3}{4}n! + \frac{1}{4}n!\right) = 1,
    \end{align*}
    where the third equality follows from \autoref{lem:prug-gap} and the inequality follows since two vertices are in a certain order for half of the permutations.
    This concludes the proof of $\mathsf{PRUG}$ being an inexact mechanism.

    We now show impartiality.
    Let $G=(V,E),~ G'=(V,E')\in \calG$ and $v\in V$ such that $G_{-v} = G'_{-v}$.
    We show in the following that for every $\pi \in \calP(n)$ it holds $p_v(\pi,G) = p_v(\pi,G')$.
    Since 
    \[
        \mathsf{PRUG}(G) = \frac{1}{n!} \sum_{\pi\in \calP(n)} p(\pi,G),
    \]
    this implies $\mathsf{PRUG}_v(G)=\mathsf{PRUG}_v(G')$ and thus impartiality of the mechanism.
    Let $\pi\in \calP(n)$ and suppose first $p_v(\pi,G)=3/4$.
    From \autoref{lem:prug-gap}, we know that $v=v^*$, and $\delta^-(v^*,G) \geq \delta^-(v,G_{-v^*})+2$ for every $v\in V\setminus \{v^*\}$.
    Since $G_{-v^*} = G'_{-v^*}$, this implies $\delta^-(v^*,G') \geq \delta^-(v,G'_{-v^*})+2$ for every $v\in V\setminus \{v^*\}$, and thus $p_v(\pi,G')=3/4$.
    Suppose now $p_v(\pi,G)=1/2$, so either (i) $v=v^F(\pi,G)$ or (ii) $v=v^S(\pi,G),~ (v^S(\pi,G),v^F(\pi,G))\in E$ and either $\delta^-(v^S(\pi,G)) = \Delta(G)$ or both $(\delta^-(v^S(\pi,G)) = \Delta(G)-1$ and $\pi_{v^S(\pi,G)} > \pi_{v^F(\pi,G)}$.
    If $v=v^F(\pi,G)$, we have that
    \begin{equation}
        v=\arg \max_{u\in V} (\delta^-(u,G), \pi_u).\label{eq:top-voted-prug}
    \end{equation}
    If $v=\arg \max_{u\in V} (\delta^-(u,G'), \pi_u)$ as well, then clearly $p_v(\pi,G')=1/2$ and we conclude.
    Otherwise, there is $v'\in V\setminus \{v\}$ such that $v'=\arg \max_{u\in V} (\delta^-(u,G'), \pi_u)$.
    But since $G_{-v}=G'_{-v}$, this yields $(v,v')\in E$ and either (a) $\delta^-(v',G')=\Delta(G)+1$ and $\pi_{v}>\pi_{v'}$, or (b) $\delta^-(v',G')=\Delta(G)$ and $\pi_{v'}>\pi_{v}$.
    From \eqref{eq:top-voted-prug} we have $\pi_v>\pi_u$ for every $u\in V\setminus \{v,v'\}$ with $\delta^-(u,G')\geq \Delta(G)$, so we conclude $v=v^S(\pi,G')$ and thus $p_v(\pi,G')=1/2$.
    Consider now the case $v=v^S(\pi,G),~ (v^S(\pi,G),v^F(\pi,G))\in E$ and either $\delta^-(v^S(\pi,G)) = \Delta(G)$ or both $(\delta^-(v^S(\pi,G)) = \Delta(G)-1$ and $\pi_{v^S(\pi,G)} > \pi_{v^F(\pi,G)}$.
    If $v=v^S(\pi,G')$, then necessarily $(v,v^F(\pi,G'))\in E$: otherwise we would have $(\delta^-(v,G'),\pi_v) > (\delta^-(v^F(\pi,G'),G'),\pi_{v^F(\pi,G')})$, a contradiction.
    Since the other conditions do not change, we obtain $p_v(\pi,G')=1/2$.
    Otherwise, we necessarily have $v=v^F(\pi,G')$: if there were $u_1,u_2\in V$ with $(\delta^-(u_i,G'),\pi_{u_i})>(\delta^-(v,G'),\pi_v)$ for $i\in \{1,2\}$, we would have $(v^*,u_i)$ for $i\in \{1,2\}$, a contradiction.
    We conclude $p_v(\pi,G')=1/2$ as well, and thus $p_v(\pi,G) = p_v(\pi,G')$: the mechanism is impartial.
\end{proof}

Recall that, given an inexact mechanism $f$ and a graph $G$, we define the addition of a default vertex as the procedure that selects a vertex~$\bar{v}$ uniformly at random, removes its outgoing edge, and runs~$f$ on the resulting graph; if~$f$ ends up not selecting a vertex, the new mechanism selects vertex~$\bar{v}$. This is described in \autoref{alg:default-vertex} and its output is denoted as $\mathsf{DV}(f,G)$ for an inexact mechanism~$f$ and a graph~$G$.
The following lemma states that the addition of a default vertex turns any inexact mechanism into a mechanism while preserving impartiality. 
\begin{lemma}
\label{lem:default}
    Let $f$ be an inexact mechanism that is impartial on $\calG_n$. Then $f^D:\calG_n\to [0,1]^n$ given by $f^D(G)=\mathsf{DV}(f,G)$ is an impartial mechanism on $\calG_n$.
\end{lemma}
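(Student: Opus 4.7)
My plan is to prove the two claims of the lemma separately: first that $f^D$ is a (proper) selection mechanism on $\calG_n$, and then that it inherits impartiality from $f$. For the first claim, I would fix an arbitrary default vertex $\bar{v}\in V$ and observe that the vector $p$ constructed inside \autoref{alg:default-vertex} satisfies $\sum_{v\in V} p_v = 1$ by construction (the mass assigned to $\bar{v}$ is exactly the deficit of $f(G_{-\bar{v}})$). Averaging uniformly over $\bar{v}$ yields a probability distribution, so $f^D$ is a mechanism.

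For impartiality, I would fix $v\in V$ and graphs $G,G'\in\calG_n$ with $G_{-v}=G'_{-v}$, and write
\[
    f^D_v(G) \;=\; \frac{1}{n}\sum_{\bar{v}\in V} q_v^{\bar{v}}(G),
\]
where $q_v^{\bar{v}}(G)$ denotes the probability that $v$ is returned by the procedure conditional on the default vertex being $\bar{v}$. The proof then reduces to showing $q_v^{\bar{v}}(G)=q_v^{\bar{v}}(G')$ for every $\bar{v}\in V$, which I would do by splitting into two cases. If $\bar{v}=v$, then $q_v^{v}(G) = 1-\sum_{u\in V} f_u(G_{-v})$, which depends on $G$ only through $G_{-v}$ and hence equals $q_v^{v}(G')$. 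If $\bar{v}\neq v$, then $q_v^{\bar{v}}(G)=f_v(G_{-\bar{v}})$, and the crucial observation is the commutativity $(G_{-\bar{v}})_{-v}=(G_{-v})_{-\bar{v}}=(G'_{-v})_{-\bar{v}}=(G'_{-\bar{v}})_{-v}$, so that $G_{-\bar{v}}$ and $G'_{-\bar{v}}$ agree outside of $v$'s outgoing edge; impartiality of $f$ then yields $f_v(G_{-\bar{v}})=f_v(G'_{-\bar{v}})$.

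The only mild subtlety — and the closest thing to an obstacle — is a domain issue: the input to $f$ in the algorithm is $G_{-\bar{v}}$, which is not itself in $\calG_n$ because $\bar{v}$ has no outgoing edge. One has to agree on a convention, namely to interpret $f$ on such graphs as $f$ evaluated on any completion in $\calG_n$, well-defined by impartiality of $f$ on the outgoing edge of $\bar{v}$. Once this convention is fixed, both the mass-conservation argument in the first part and the reduction $(G_{-\bar{v}})_{-v}=(G_{-v})_{-\bar{v}}$ in the second part are entirely formal, and the lemma follows.
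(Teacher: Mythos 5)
Your proposal is correct and follows essentially the same route as the paper's proof: conditioning on the default vertex, noting that the deficit term depends on $G$ only through $G_{-v}$, and using the commutativity of edge removal together with impartiality of $f$ to handle $\bar{v}\neq v$. Your remark about the domain of $f$ on graphs $G_{-\bar{v}}\notin\calG_n$ is a genuine subtlety that the paper leaves implicit, and your proposed convention is the right way to resolve it.
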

\begin{proof}
    Exactness is straightforward from the definition of probabilities in \autoref{alg:default-vertex}.
    To check impartiality, let $n\in \NN,~ G=(V,E),\ G'=(V,E')\in \calG_n$, and $v\in V$ such that $G_{-v} = G'_{-v}$. 
    From the definition of addition of a default vertex with input mechanism $f$, we have that 
    \begin{align*}
        f^D_v(G) = \frac{1}{n} \sum_{\bar{v}\in V} f_v((V,E\setminus (\{\bar{v}\} \times V))) + \frac{1}{n}\left[ 1 - \sum_{u\in V} f_u((V,E\setminus (\{v\}\times V))) \right],\\
        f^D_v(G') = \frac{1}{n} \sum_{\bar{v}\in V} f_v((V,E'\setminus (\{\bar{v}\} \times V))) + \frac{1}{n}\left[ 1 - \sum_{u\in V} f_u((V,E'\setminus (\{v\}\times V))) \right],
    \end{align*}
    where each term of the first summation on the right-hand side of both expressions corresponds to the probability assigned to vertex $v$ by mechanism $f$ when vertex $\bar{v}$ is the default vertex and thus its outgoing edges are omitted, and the second term is the additional probability assigned to vertex $v$ when it is the default vertex.
    Since $(V,E\setminus (\{v\}\times V))$ and $(V,E'\setminus (\{v\}\times V))$ are the same graph, the second terms on each right-hand side are trivially equal.
    Moreover, $G_{-v} = G'_{-v}$ implies that for every $\bar{v}\in V$ it holds
    \[
        (E\setminus (\{\bar{v}\} \times V))\setminus (\{v\}\times V) = (E'\setminus (\{\bar{v}\} \times V))\setminus (\{v\}\times V),
    \]
    thus from impartiality of $f$ we obtain
    \[
        f_v((V,E\setminus (\{\bar{v}\} \times V))) = f_v((V,E'\setminus (\{\bar{v}\} \times V))).
    \]
    We conclude that $f^D_v(G) = f^D_v(G')$, so the mechanism is impartial.
\end{proof}

We have now all the necessary ingredients to prove \autoref{lem:prugd}, regarding the mechanism $\mathsf{PRUG}^D$ obtained from adding a default vertex to plurality with runner-up and gap.
Formally, for every $G\in\calG$ we define $\mathsf{PRUG}^D(G)=\mathsf{DV}(\mathsf{PRUG},G)$. 

\begin{proof}[Proof of \autoref{lem:prugd}]
    That $\mathsf{PRUG}^D$ is an impartial mechanism follows directly from \autoref{lem:prug} and \autoref{lem:default}.
    
    In order to study its approximation ratio, let $n\in \NN$ be any natural value with $n\geq 6$ and $G=(V,E)\in \calG_n$ an arbitrary graph with $n$ vertices and $\Delta\geq 2$ (otherwise the result is trivial). 
    From \autoref{lem:prug-gap}, we know that $C(G_{-\bar{v}})$ is equivalent to the existence of $\pi\in \calP(n)$ such that
    \[
        \delta^-\left(v^F(\pi,G_{-\bar{v}}),G_{-\bar{v}}\right) \geq \delta^-(v,(G_{-\bar{v}})_{-v^F(\pi,G_{-\bar{v}})})+2 \text{ for every } v\in V\setminus \left\{v^F(\pi,G_{-\bar{v}})\right\},
    \]
    and to the fact that this holds for every $\pi\in \calP(n)$.
    For each default vertex $\bar{v}\in V$, we define the event $D(\pi,\bar{v}) = \left[p'_{v^S(\pi,G_{-\bar{v}})}(\pi, G_{-\bar{v}}) > 0 \right]$ and define $v_2(\pi,\bar{v})$ as
    \[
        v_2(\pi,\bar{v}) = \left\{ \begin{array}{ll} 
        v^S(\pi,G_{-\bar{v}}) & \text{if } D(\pi,\bar{v}) \text{ holds,} \\[1ex]
        \bar{v} & \text{otherwise,}
        \end{array} \right.
    \]
    \ie the vertex that is assigned strictly positive probability according to $p'(\pi,G_{-\bar{v}})$ in addition to $v^F(\pi,G_{-\bar{v}})$ for the permutation $\pi$ when $\bar{v}$ is the default vertex.
    For $\bar{v}\in V$ and $\pi\in \calP(n)$, $v_2(\pi,\bar{v})$ is assigned probability $1/4$ if $C(G_{-\bar{v}})$ holds and $1/2$ otherwise, due to  the definitions of $p(\pi,G)$ and $p'(\pi,G)$.
    Moreover, 
    we can now compute the expected indegree of the vertex selected by the mechanism, $\sum_{v\in V} \delta^-(v)\mathsf{PRUG}^D_v(G)$, as
    \begin{align}
        \frac{1}{n} \sum_{\bar{v}\in V} &\left[ \sum_{v\in V}\delta^-(v) \mathsf{PRUG}_v(G_{-\bar{v}}) + \delta^-(\bar{v}) \left( 1-\sum_{v\in V} \mathsf{PRUG}_v(G_{-\bar{v}}) \right) \right] \nonumber \\
        & = \frac{1}{n\cdot n!} \sum_{\bar{v}\in V} \left[ \sum_{\pi\in \calP(n)} \sum_{v\in V}\delta^-(v) p'_v(\pi, G_{-\bar{v}}) + \delta^-(\bar{v}) \left( 1- \sum_{\pi\in \calP(n)} \sum_{v\in V} p'_v(\pi, G_{-\bar{v}}) \right) \right]\nonumber \\
        & = \frac{1}{n\cdot n!} \sum_{\bar{v}\in V:~ C(G_{-\bar{v}})} \sum_{\pi\in \calP(n)} \biggl[ \frac{3}{4}\delta^-\left(v^F(\pi,G_{-\bar{v}})\right) + \frac{1}{4}\delta^-\left(v^S(\pi,G_{-\bar{v}})\right) \chi(D(\pi,\bar{v})) + {} \nonumber \\[-1.5ex]
        & \hspace*{8cm} \frac{1}{4}\delta^-(\bar{v}) \chi( \neg D(\pi,\bar{v})) \biggr] + {} \nonumber \\
        & \phantom{{}={}} \frac{1}{n\cdot n!} \sum_{\substack{\bar{v}\in V:\\ \neg C(G_{-\bar{v}})}} \sum_{\pi\in \calP(n)} \biggl[ \frac{1}{2}\delta^-\left(v^F(\pi,G_{-\bar{v}})\right) + \frac{1}{2}\delta^-\left(v^S(\pi,G_{-\bar{v}})\right) \chi(D(\pi,\bar{v})) + {} \nonumber \\[-3ex]
        &\hspace*{8cm} \frac{1}{2}\delta^-(\bar{v}) \chi( \neg D(\pi,\bar{v})) \biggr] \nonumber \\
        & = \frac{1}{2n\cdot n!} \sum_{\bar{v}\in V} \sum_{\pi\in \calP(n)} \left( \delta^-\left(v^F(\pi,G_{-\bar{v}})\right) + \delta^-(v_2(\pi,\bar{v})) \right) + {} \nonumber \\
        & \phantom{{}={}} \frac{1}{4n\cdot n!} \sum_{\bar{v}\in V:~ C(G_{-\bar{v}})}\sum_{\pi\in \calP(n)} \left( \delta^-\left(v^F(\pi,G_{-\bar{v}})\right) - \delta^-(v_2(\pi,\bar{v})) \right). \label{eq:expanded-expectation}
    \end{align}
    The first expression comes from the definition of the addition of a default vertex; the second one, from \autoref{lem:equivalent-probs-prug}; the third one, from \autoref{lem:prug-gap} and the definition of $p'(\pi,G)$; and the last one, from the definition of $v_2(\pi,\bar{v})$. 

   Let $k=|\{v\in V: \delta^-(v)=\Delta-1\}|$ denote the number of vertices with indegree $\Delta-1$. We claim the following:
    \begin{align}
        \frac{1}{n\cdot n!}\sum_{\bar{v}\in V} \sum_{\pi\in \calP(n)} \delta^-(v^F(\pi,G_{-\bar{v}})) & \geq \Delta - \frac{k}{k+1}\frac{\Delta}{n} \quad \text{if } |T(G)|=1,\label{eq:indegree-v1}\\
        \frac{1}{n\cdot n!}\sum_{\bar{v}\in V} \sum_{\pi\in \calP(n)} \delta^-(v_2(\pi, \bar{v})) & \geq \frac{n-1+\chi(C(G))}{n}. \label{eq:indegree-v2}
    \end{align}
    We first show \eqref{eq:indegree-v1}, so we assume $T(G)=\{v^*\}$.
    We observe that, if $v^F(\pi,G_{-\bar{v}})\not= v^*$ for some default vertex $\bar{v}\in V$ and some permutation $\pi\in \calP(n)$, then $\bar{v}\in N^-(v^*)$ and, moreover, $v^*\not \in \arg\max_{ \{v^*\} \cup \{v\in V: \delta^-(v)=\Delta-1\}} \pi_v$.
    These two events are independent; the former occurs for $\Delta$ out of $n$ possible realizations of the default vertex and the latter for $kn!/(k+1)$ out of $n!$ permutations. 
    We obtain
    \[
        \frac{1}{n\cdot n!}\sum_{\bar{v}\in V} \sum_{\pi\in \calP(n)} \delta^-(v^F(\pi,G_{-\bar{v}})) = \frac{1}{n\cdot n!} \sum_{\bar{v}\in V} \sum_{\pi\in \calP(n)} \left(\Delta -  \chi(v^F(\pi,G_{-\bar{v}}) \not= v^*) \right) \geq \Delta - \frac{k}{k+1}\cdot \frac{\Delta}{n}.
    \]

    To see \eqref{eq:indegree-v2}, we distinguish three cases.
    If $|T(G)|\geq 3$ and $v_2(\pi,\bar{v})\not=\bar{v}$ for some $\bar{v}\in V,~ \pi\in \calP(n)$, then this vertex is $v^S(\pi,G_{-\bar{v}})$ and since $\Delta(G_{-\bar{v}})=\Delta$ with $|T(G_{-\bar{v}})|\geq 2$ we obtain $\delta^-(v_2(\pi,\bar{v}))=\Delta\geq \delta^-(\bar{v})$.
    This yields
    \[
        \frac{1}{n\cdot n!}\sum_{\bar{v}\in V} \sum_{\pi\in \calP(n)} \delta^-(v_2(\pi, \bar{v})) \geq \frac{1}{n\cdot n!}\sum_{\bar{v}\in V} \sum_{\pi\in \calP(n)} \delta^-(\bar{v}) = 1.
    \]
    If $|T(G)|=2$ and $\delta^-(v_2(\pi,\bar{v})) < \delta^-(\bar{v})$ for some $\bar{v}\in V,~ \pi\in \calP(n)$, this vertex is $v^S(\pi,G_{-\bar{v}})$ as well. 
    Since $\Delta(G_{-\bar{v}})=\Delta$ and $|T(G_{-\bar{v}})|\geq 1$, we must have $\delta^-(v_2(\pi,\bar{v}))= \Delta-1,~ \delta^-(\bar{v})=\Delta$, and $\pi_{v_2(\pi,\bar{v})} > \pi_{\bar{v}}$.
    This yields
    \begin{align*}
        \frac{1}{n\cdot n!}\sum_{\bar{v}\in V} \sum_{\pi\in \calP(n)} \delta^-(v_2(\pi, \bar{v})) & \geq \frac{1}{n\cdot n!} \left[ \sum_{\bar{v}\in V} \sum_{\pi\in \calP(n)}  \delta^-(\bar{v}) - \sum_{\bar{v}\in T(G)} \sum_{\substack{\pi\in \calP(n):\\ \pi_{v_2(\pi,\bar{v})} > \pi_{\bar{v}}}} (\delta^-(v_2(\pi,\bar{v})) - \delta^-(\bar{v})) \right]\\
        & = 1 - \frac{1}{n\cdot n!} \cdot 2 \cdot \frac{n!}{2} = \frac{n-1}{n}.
    \end{align*}
    Finally, if $T(G)=\{v^*\}$ and $\delta^-(v_2(\pi,\bar{v})) < \delta^-(\bar{v})$ for some $\bar{v}\in V,~ \pi\in \calP(n)$, then $\bar{v}=v^*$.
    If this was not the case and $\delta^-(\bar{v})=\Delta-1$, then $|\{v\in V: \delta^-(v,G_{-\bar{v}}) \geq \Delta-1\}| \geq 2$ and thus $\delta^-(v_2(\pi,\bar{v}))\geq \Delta-1 = \delta^-(\bar{v})$ for any permutation $\pi$, a contradiction.
    If $\bar{v}\not=v^*$ and $\delta^-(\bar{v})< \Delta-1$, then $v_2(\pi,\bar{v}) \not = \bar{v}$ for some $\pi\in \calP(n)$ implies $\delta^-(v_2(\bar{v},\pi))\geq \Delta-2 \geq \delta^-(\bar{v})$, again a contradiction.
    Moreover, we have $\delta^-(v_2(\pi,v^*)) \geq \Delta-1$ for every $\pi\in \calP(n)$, since $\Delta(G_{-v^*})=\Delta$.
    We obtain
    \begin{align*}
        \frac{1}{n\cdot n!}\sum_{\bar{v}\in V} \sum_{\pi\in \calP(n)} \delta^-(v_2(\pi, \bar{v})) & = \frac{1}{n}\left(\frac{1}{n!} \sum_{\pi\in \calP(n)} \delta^-(v_2(\pi,v^*)) - \delta^-(v^*) \right) + \frac{1}{n\cdot n!}\sum_{\bar{v}\in V} \delta^-(\bar{v})\\
        & \geq \frac{1}{n}(\Delta-1 - \Delta) + 1 = \frac{n-1}{n}.
    \end{align*}
    Since $\delta^-(v_2(\pi,v^*)) \geq \Delta-1$ implies $v_2(\pi,v^*)=v^*$ when $\delta^-(v^*)-\delta^-(v)\geq 2$ for every $v\in V\setminus \{v^*\}$, it is clear for such case that $\delta^-(v_2(\pi,\bar{v})) < \delta^-(\bar{v})$ cannot hold for any $\bar{v}\in V,~ \pi\in \calP(n)$ and thus 
    \[
        \frac{1}{n\cdot n!}\sum_{\bar{v}\in V} \sum_{\pi\in \calP(n)} \delta^-(v_2(\pi, \bar{v})) \geq \frac{1}{n\cdot n!}\sum_{\bar{v}\in V} \sum_{\pi\in \calP(n)} \delta^-(\bar{v}) = 1.
    \]
    This concludes the proof of \eqref{eq:indegree-v2}.

    We now observe that, if $|T(G)|\geq 2$, for every $\bar{v}\in V$ we have $\neg C(G_{-\bar{v}})$ and, moreover, for every $\pi\in \calP(n)$ it holds $\max_{v\in V}\delta^-(v,G_{-\bar{v}}) = \Delta$.
    This implies
    \begin{align}
        \frac{1}{\Delta}\sum_{v\in V} \delta^-(v)\mathsf{PRUG}^D_v(G) & \geq \frac{1}{2n\cdot n!\cdot \Delta} \sum_{\bar{v}\in V} \sum_{\pi\in \calP(n)} \left( \delta^-\left(v^F(\pi,G_{-\bar{v}})\right) + \delta^-(v_2(\pi,\bar{v})) \right) \nonumber \\
        & = \frac{1}{2\Delta} \left(\Delta + \frac{n-1}{n}\right) \geq \frac{1}{2}+\frac{5}{12\Delta}, \label{eq:lb-two-top-voted}
    \end{align}
    where the equality follows from \eqref{eq:indegree-v2} and the last inequality from the fact $n\geq 6$.
    We make use of this fact to prove all three items of the lemma.

    We first prove \autoref{lem:prugd-i}.
    If $|T(G)|\geq 2$, from \eqref{eq:lb-two-top-voted} we have 
    \[
        \frac{1}{\Delta}\sum_{v\in V} \delta^-(v)\mathsf{PRUG}^D_v(G) \geq \frac{1}{2}+\frac{5}{12\Delta} > \frac{1}{2} + \frac{7\Delta - 9}{6\Delta(3\Delta-2)},
    \]
    where the last inequality holds since
    \[
        \frac{5}{12\Delta} > \frac{7\Delta - 9}{6\Delta(3\Delta-2)} \Longleftrightarrow 15\Delta - 10 > 14\Delta-18 \Longleftrightarrow \Delta > -8,
    \]
    so we reduce to the case $|T(G)|=1$ in what follows. 
    Replacing the bounds given by \eqref{eq:indegree-v1} and \eqref{eq:indegree-v2} in the expression \eqref{eq:expanded-expectation}, we obtain that $\sum_{v\in V} \delta^-(v)\mathsf{PRUG}^D_v(G)$ is bounded from below by
    \begin{equation}
        \frac{1}{2} \left( \Delta- \frac{k}{k+1}\cdot \frac{\Delta}{n} +\frac{n-1}{n} \right) + \frac{1}{4n\cdot n!} \sum_{\bar{v}\in V: ~C_(G_{-\bar{v}})} \sum_{\pi\in \calP(n)} \left( \delta^-(v^F(\pi,G_{-\bar{v}})) - \delta^-(v_2(\pi,\bar{v})) \right).\label{expanded-expectation-one-top-voted}
    \end{equation}
    If $k=1$, let $v'$ denote the unique vertex with $\delta^-(v')=\Delta-1$.
    For every $\bar{v}\in N^-(v')\setminus \{v^*\}$ (a set of size at least $\Delta-2$), it holds $C(G_{-\bar{v}})$ and, for every $\pi\in \calP(n),~ \delta^-(v^F(\pi,G_{-\bar{v}}))=\Delta$ and $\delta^-(v_2(\pi,\bar{v}))\leq \Delta-2$.
    This yields
    \[
        \frac{1}{\Delta}\sum_{v\in V} \delta^-(v)\mathsf{PRUG}^D_v(G) \geq \frac{1}{2\Delta} \left( \Delta- \frac{\Delta}{2n} +\frac{n-1}{n} \right) + \frac{1}{4n\Delta} \cdot 2(\Delta-2) = \frac{1}{2\Delta}\left(\Delta+1 +\frac{\Delta-6}{2n}\right).
    \]
    If $\Delta\leq 6$, this expression is non-decreasing in $n$ and it is minimized for $n=6$, where
    \[
        \frac{1}{\Delta}\sum_{v\in V} \delta^-(v)\mathsf{PRUG}^D_v(G) \geq \frac{13}{24} + \frac{1}{4\Delta} > \frac{1}{2} + \frac{7\Delta - 9}{6\Delta(3\Delta-2)},
    \]
    where the last inequality follows since
    \[
        \frac{1}{24} + \frac{1}{4\Delta} > \frac{7\Delta - 9}{6\Delta(3\Delta-2)} \Longleftrightarrow 3\Delta^2 + 16\Delta-12 > 28\Delta-36 \Longleftrightarrow \Delta^2-4\Delta+8 > 0,
    \]
    which clearly holds since the expression on the left is a convex quadratic function that is minimized for $\Delta=2$, where its value is $4$.
    If $\Delta\geq 6$, on the other hand, the expression is decreasing in $n$ and it is minimized for $n\to \infty$, where
    \[
        \frac{1}{\Delta}\sum_{v\in V} \delta^-(v)\mathsf{PRUG}^D_v(G) \geq \frac{1}{2} + \frac{1}{2\Delta} > \frac{1}{2} + \frac{7\Delta - 9}{6\Delta(3\Delta-2)},
    \]
    where the last inequality follows since
    \[
        \frac{1}{2\Delta} > \frac{7\Delta - 9}{6\Delta(3\Delta-2)} \Longleftrightarrow 9\Delta-6 > 7\Delta-9 \Longleftrightarrow \Delta> -\frac{3}{2}.
    \]
    To address the case $k\geq 2$ we first note that we can omit the second term in \eqref{expanded-expectation-one-top-voted} since whenever $C(G_{-\bar{v}})$ holds for some $\bar{v}\in V$ we have $\delta^-(v^F(\pi,G_{-\bar{v}}))=\Delta$ and thus the terms of the sum are non-negative.
    This yields
    \[
        \frac{1}{\Delta}\sum_{v\in V} \delta^-(v)\mathsf{PRUG}^D_v(G) \geq \frac{1}{2\Delta} \left( \Delta- \frac{k}{k+1}\cdot \frac{\Delta}{n} +\frac{n-1}{n} \right) = \frac{\Delta+1}{2\Delta} - \frac{k(\Delta+1)+1}{2n\Delta(k+1)}. 
    \]
    Since this expression is increasing in $n$ and we can bound $n$ from below with the sum of the known indegrees, $n\geq \Delta+k(\Delta-1)$, we obtain
    \begin{equation}
        \frac{1}{\Delta}\sum_{v\in V} \delta^-(v)\mathsf{PRUG}^D_v(G) \geq \frac{\Delta+1}{2\Delta} - \frac{k(\Delta+1)+1}{2\Delta(k+1)(\Delta+k(\Delta-1))}. \label{lb-k-Delta}
    \end{equation}
    Denote the substracted term as $g(k,\Delta)$. We note that it is decreasing in $k\geq 2$, since
    \[
        \frac{\partial g(k,\Delta)}{\partial k} = -\frac{(\Delta^2-1)k^2 + 2(\Delta-1)k - (\Delta^2-\Delta+1)}{2\Delta(k+1)^2(\Delta+k(\Delta-1))^2},
    \]
    where the numerator is a concave function with its greatest root in
    \[
        k^* = \frac{\sqrt{\Delta(\Delta-1)(\Delta^2+1)}}{(\Delta-1)(\Delta+1)} + \frac{1}{\Delta+1} \leq 2,
    \]
    where the last inequality holds since
    \begin{align*}
        \frac{\sqrt{\Delta(\Delta-1)(\Delta^2+1)}}{(\Delta-1)(\Delta+1)} \leq \frac{2\Delta+1}{\Delta+1} & \Longleftrightarrow \Delta(\Delta-1)(\Delta^2+1) \leq (2\Delta+1)^2(\Delta-1)^2 \\
        & \Longleftrightarrow 3\Delta^4-3\Delta^3-4\Delta^2+3\Delta+1 \geq 0,
    \end{align*}
    which holds for every $\Delta\geq 2$ since $-4\Delta^2\geq -2\Delta^3$ and $-5\Delta^3\geq -5/2\Delta^4$, thus the expression on the left is greater or equal to $1/2\Delta^4+3\Delta+1$, a value that is clearly non-negative.
    Therefore, we can bound the approximation ratio of the mechanism by replacing $k=2$ in \eqref{lb-k-Delta}.
    This yields
    \[
        \frac{1}{\Delta}\sum_{v\in V} \delta^-(v)\mathsf{PRUG}^D_v(G) \geq \frac{\Delta+1}{2\Delta} - \frac{2\Delta+3}{6\Delta(3\Delta-2)} = \frac{1}{2} + \frac{7\Delta - 9}{6\Delta(3\Delta-2)},
    \]
    which concludes the proof of \autoref{lem:prugd-i}.

    We now prove \autoref{lem:prugd-ii}.
    If $\Delta=2$ and $|T(G)|\geq 2$, \eqref{eq:lb-two-top-voted} yields
    \[
        \frac{1}{2}\sum_{v\in V} \delta^-(v)\mathsf{PRUG}^D_v(G) \geq \frac{17}{24} > \frac{65}{96},
    \]
    so once again we can reduce to the case $T(G)=\{v^*\}$.
    We first improve both bounds \eqref{eq:indegree-v1} and \eqref{eq:indegree-v2} for this case.
    As an improvement of the former, we claim that
    \begin{equation}
         \frac{1}{n\cdot n!}\sum_{\bar{v}\in V} \sum_{\pi\in \calP(n)} \delta^-(v^F(\pi,G_{-\bar{v}})) \geq 2 - \frac{n-3}{n-2}\cdot \frac{2}{n}.\label{eq:indegree-v1-Delta2}
    \end{equation}
    To see this, we note that if $v^F(\pi,G_{-\bar{v}})\not= v^*$ for some default vertex $\bar{v}\in V$ and some permutation $\pi\in \calP(n)$, then (a) $\bar{v}\in N^-(v^*)$; (b) $v^*\not \in \arg\max_{ v\in \{v^*\} \cup \{v\in V: \delta^-(v)=1\}} \pi_v$; and, denoting $N^+(v^*)=\{w\}$, we have (c)
    \[
        \neg \left( w = \arg\max_{ v\in \{v^*\} \cup \{v\in V: \delta^-(v)=1\} } \pi_v \text{ and } v^* = \arg\max_{ v\in \{v^*\} \cup \{v\in V: \delta^-(v)=1\}\setminus \{w\}} \pi_v \right).
    \]
    (a) occurs for $2$ out of $n$ realizations of the default vertex; (b) and (c) hold for
    \[
        1-\frac{1}{n-1}-\frac{1}{(n-1)(n-2)} = \frac{n-3}{n-2}
    \]
    out of $n!$ permutations, those with $v^*$ not having the highest index among the vertices in $\{v^*\} \cup \{v\in V: \delta^-(v)=\Delta-1\}$ nor the vertex in $N^+(v)$ having the highest index in this set and $v^*$ the second-to-highest. 
    Since the default vertex and the permutation are sampled independently, we obtain
    \[
        \frac{1}{n\cdot n!}\sum_{\bar{v}\in V} \sum_{\pi\in \calP(n)} \delta^-(v^F(\pi,G_{-\bar{v}})) = 2 - \frac{1}{n\cdot n!} \sum_{\bar{v}\in V} \sum_{\pi\in \calP(n)} \chi(v^F(\pi,G_{-\bar{v}}) \not= v^*) \geq 2 - \frac{n-3}{n-2}\cdot \frac{2}{n}.
    \]
    As an improvement of \eqref{eq:indegree-v2}, we claim that 
    \begin{equation}
        \frac{1}{n\cdot n!}\sum_{\bar{v}\in V} \sum_{\pi\in \calP(n)} \delta^-(v_2(\pi, \bar{v})) \geq 1-\frac{1}{n(n-2)}. \label{eq:indegree-v2-Delta2}
    \end{equation}
    It is clear, from the more general case analyzed before, that if $\delta^-(v_2(\pi, \bar{v})) < \delta^-(\bar{v})$ for some $\pi\in \calP(n)$ then $\bar{v}=v^*$.
    But the inequality $\delta^-(v_2(\pi,v^*)) < 2$ only holds if there exists $u\in V$ with $(u,v^*)\in E$ and $\pi_u>\pi_v$ for every $v\in V$ with $\delta^-(v,G_{-v^*})\geq 1$.
    This holds for at most $2n!/(n-2)$ permutations, those with a vertex in $N^-(v^*)$ (a set of size 2) in the first position among the vertices in $\{v^*\}\cup \{v\in V: \delta^-(v, G_{-v^*})\geq 1\}$ (a set of size at least $n-2$).
    We obtain that
    \[
        \delta^-(v_2(\pi,v^*))\geq 1 \text{ for every } \pi\in \calP(n),\quad |\{\pi\in \calP(n): \delta^-(v_2(\pi,v^*))\geq 2\}| \geq \frac{n-4}{n-2}n!.
    \]
    On the other hand, denoting as $v'\in V$ the vertex with $\delta^-(v',G)=0$, we have $\delta^-(v_2(\pi,v'))\geq 1$ for every $\pi\in \calP(n)$ such that 
    \[
        \arg\max_{v\in V\setminus v^T} (\delta^-(v), \pi_v) \in N^-(v^T),
    \]
    where $v^T = \arg\max_{v\in V} (\delta^-(v), \pi_v)$.
    If $N^+(v')=\{v^*\}$ this happens for at most $n!/(n-2)$ permutations: those where the second vertex with the highest index among $\{v\in V: \delta^-(v,G_{-v'})=1\}$ (a set of size $n-1$) is an in-neighbor of the first one.
    If $N^+(v')\not=\{v^*\}$, on the other hand, this happens for at most $n!/(n-3)$ permutations, those where the first vertex with the highest index among $\{v\in V: \delta^-(v,G_{-v'})=1\}$ (a set of size $n-3$) is an in-neighbor of $v^*$.
    We obtain that 
    \[
        |\{\pi\in \calP(n): \delta^-(v_2(\pi,v'))\geq 1\}| \geq \frac{1}{n-2}n!.
    \]
    Putting everything together, we obtain
    \begin{align*}
        \frac{1}{n\cdot n!}\sum_{\bar{v}\in V} \sum_{\pi\in \calP(n)} \delta^-(v_2(\pi, \bar{v})) & \geq  \frac{1}{n}\left(\frac{1}{n!} \sum_{\pi\in \calP(n)} (\delta^-(v_2(\pi,v^*)) + \delta^-(v_2(\pi,v'))) - \delta^-(v^*) - \delta^-(v') \right)\\
        &\quad + \frac{1}{n}\sum_{\bar{v}\in V} \delta^-(\bar{v})\\
        & \geq \frac{1}{n}\left(-\frac{2}{n-2} + \frac{1}{n-2} + n \right)= 1 - \frac{1}{n(n-2)},
    \end{align*}
    as claimed.
    Replacing the bounds obtained in \eqref{eq:indegree-v1-Delta2} and \eqref{eq:indegree-v2-Delta2} with $\Delta=2$ and $k=n-2$ in \eqref{eq:expanded-expectation}, we obtain
    \[
        \frac{1}{2}\sum_{v\in V} \delta^-(v)\mathsf{PRUG}^D_v(G) \geq \frac{1}{4}\left(2-\frac{n-3}{n-2}\cdot \frac{2}{n} \right) + \frac{1}{4n} \left( n-\frac{1}{n-2}\right) = \frac{3}{4}- \frac{2n-5}{4n(n-2)} \geq \frac{65}{96},
    \]
    where we used that the last term is decreasing in $n\geq 6$ since its derivative is
    \[
        \frac{n(n-2)-(2n-5)(n-1)}{2n(n-2)} = - \frac{n^2-5n+5}{2n(n-2)} < 0,
    \]
    and evaluated it in $n=6$.
    This concludes the proof for $\Delta=2$.
    
    We finally show \autoref{lem:prugd-iii}. If $\Delta=3$, and $|\{v\in V: \delta^-(v)\geq 2\}|=1$, then whenever $\bar{v}\not\in N^-(v^*)\cup \{v^*\}$ we have that $C(G_{-\bar{v}})$ holds, and for every $\pi\in \calP(n),~ \delta^-(v^F(\pi,G_{-\bar{v}}))=\Delta$, and $\delta^-(v_2(\pi,\bar{v}))\leq 1$. Therefore, from \eqref{eq:expanded-expectation} and \eqref{eq:indegree-v2},
    \[
       \frac{1}{3}\sum_{v\in V} \delta^-(v)\mathsf{PRUG}^D_v(G) \geq \frac{1}{6}( 3+ 1 ) + \frac{1}{12n}\cdot  2 (n-4) = \frac{5}{6}-\frac{2}{3n} \geq \frac{13}{18},
    \]
    where we used that the expression after the equality is increasing in $n$ and equals $13/18$ for $n=6$.
    This concludes the proof of the lemma.
\end{proof}

\bibliographystyle{abbrvnat}
\bibliography{CFK2023}

\end{document}